\documentclass[lettersize,journal]{IEEEtran}
\usepackage{amsmath,amsfonts,amssymb}
\usepackage{times}   
\usepackage{booktabs}
\usepackage{algorithm}
\usepackage{algpseudocode}
\usepackage{array}
\usepackage[caption=false,font=normalsize,labelfont=sf,textfont=sf]{subfig}
\usepackage{textcomp}
\usepackage{stfloats}
\usepackage{url}
\usepackage{verbatim}
\usepackage{graphicx}
\usepackage{geometry} 
\usepackage{subfig} 
\usepackage{cite}
\usepackage{enumitem}
\usepackage{amsthm}
\usepackage{mathrsfs}
\usepackage{xcolor}
\usepackage[colorlinks,linkcolor=blue,anchorcolor=blue,citecolor=blue]{hyperref}
\theoremstyle{plain}
\newtheorem{theorem}{Theorem} 
\newtheorem{lemma}{Lemma}
\newtheorem{corollary}{Corollary}

\theoremstyle{definition}
\newtheorem{definition}{Definition}
\newtheorem{assumption}{Assumption}

\theoremstyle{remark}
\newtheorem{remark}{Remark}

\begin{document}

\title{Optimally Bridging Semantics and Data: Generative Semantic Communication via Schrödinger Bridge}

\author{Dahua Gao, \IEEEmembership{Member, IEEE}, Ruichao Liu, \IEEEmembership{Graduate Student Member, IEEE}, Minxi Yang, \IEEEmembership{Member, IEEE},  Shuai Ma, \IEEEmembership{Member, IEEE}, Youlong Wu, \IEEEmembership{Member, IEEE}, and Guangming Shi, \IEEEmembership{Fellow, IEEE}.
\thanks{Corresponding author: Minxi Yang.}
\thanks{Dahua Gao and Minxi Yang are with the School of Artificial Intelligence, Xidian University, Xi'an, Shaanxi 710071 China and are also with Pazhou Lab, Huangpu, Guangdong 510555 China. (e-mail:  dhgao@xidian.edu.cn; yangminxi@xidian.edu.cn).}
\thanks{Ruichao Liu and Guangming Shi are with the School of Artificial Intelligence, Xidian University, Xi'an, Shaanxi 710071 China and are also with Peng Cheng Laboratory, Shenzhen, Guangdong 518055 China and also with the School of Artificial Intelligence, Xidian University, Xi'an, Shaanxi 710071 China (e-mail: liurch@stu.xidian.edu.cn; gmshi@xidian.edu.cn).}
\thanks{Shuai Ma is with Peng Cheng Laboratory, Shenzhen, Guangdong 518055 China (e-mail: mash01@pcl.ac.cn).}
\thanks{Youlong Wu is with the School of Information Science and Technology, ShanghaiTech University, Shanghai 201210, China (e-mail: wuyl1@shanghaitech.edu.cn)}
}

\markboth{Journal of \LaTeX\ Class Files,~Vol.~14, No.~8, August~2021}%
{Shell \MakeLowercase{\textit{et al.}}: A Sample Article Using IEEEtran.cls for IEEE Journals}

\maketitle

\begin{abstract}
Generative Semantic Communication (GSC) is a promising solution for image transmission over narrow-band and high-noise channels. However, existing GSC methods rely on long, indirect transport trajectories from a Gaussian to an image distribution guided by semantics, causing severe hallucination and high computational cost. To address this, we propose a general framework named Schrödinger Bridge-based GSC (SBGSC). By leveraging the Schrödinger Bridge (SB) to construct optimal transport trajectories between arbitrary distributions, SBGSC breaks Gaussian limitations and enables direct generative decoding from semantics to images. Within this framework, we design Diffusion SB-based GSC (DSBGSC). DSBGSC reconstructs the nonlinear drift term of diffusion models using Schrödinger potentials, achieving direct optimal distribution transport to reduce hallucinations and computational overhead. To further accelerate generation, we propose a self-consistency-based objective guiding the model to learn a nonlinear velocity field pointing directly toward the image, bypassing Markovian noise prediction to significantly reduce sampling steps. Simulation results demonstrate that DSBGSC outperforms state-of-the-art GSC methods, improving FID by at least 38\% and SSIM by 49.3\%, while accelerating inference speed by over 8 times.

\end{abstract}

\begin{IEEEkeywords}
Generative semantic communication, image transmission, Schrödinger bridge, optimal transport, diffusion models.
\end{IEEEkeywords}

\section{Introduction}
\label{sec:introduction}

\begin{figure}[!t]
\centering
\includegraphics[width=\linewidth]{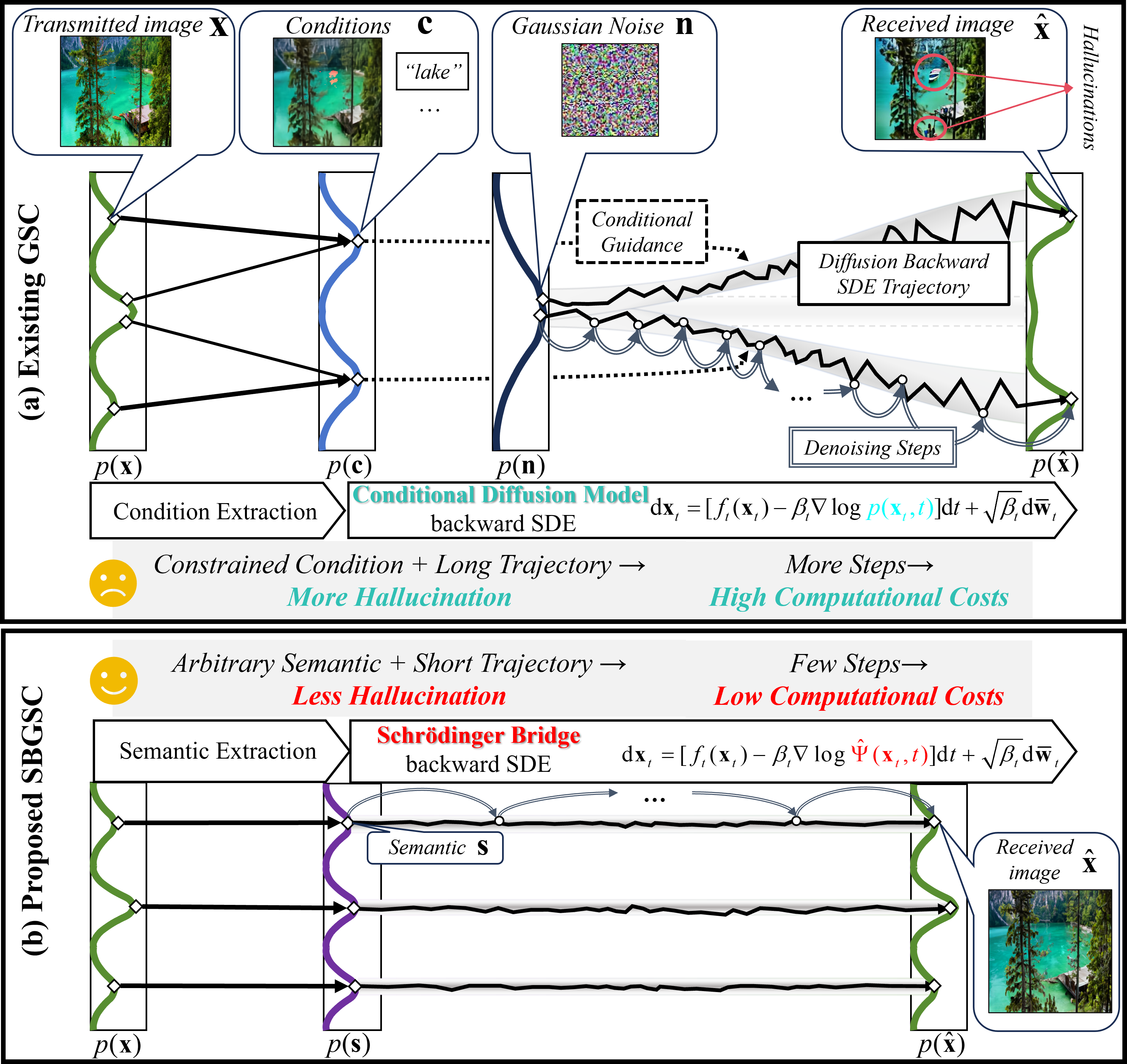}
\caption{Comparison between existing conditional diffusion model-based generative semantic communication and the proposed Schrödinger Bridge-based generative semantic communication (SBGSC). (a) Existing methods follow a \textit{conditional generation paradigm}, initiating from independent Gaussian noise and indirectly recovering data via condition guidance. This approach suffers from semantic condition mismatch, high computational costs, and hallucinations. (b) The proposed SBGSC adopts a \textit{direct transport paradigm}, establishing an optimal transport trajectory directly from the received noisy semantics to the target data distribution via Schrödinger Bridge, thereby achieving efficient and high-fidelity semantic perceptual quality.}
\label{fig:optimal transport vs indirect transport}
\end{figure}

\IEEEPARstart{S}{emantic} communication (SC) enables robust image transmission in challenging environments, such as deep-sea exploration and disaster recovery, where traditional separation-based systems often fail due to narrow-band and high-noise constraints \cite{bourtsoulatze_deep_2019}. Unlike conventional paradigms that pursue bit-level precision, SC prioritizes semantic fidelity by extracting and transmitting task-relevant features \cite{weaver_recent_1953, shi_semantic_2021}. While current Deep Joint Source-Channel Coding (DeepJSCC) schemes \cite{choi_neural_2019, yang_swinjscc_2025} outperform traditional methods at low signal-to-noise ratios (SNR), they typically rely on point-to-point losses like Mean Squared Error, which lead to blurry reconstructions and poor perceptual quality at extremely low bitrates \cite{blau_perception-distortion_2017}. Recent theoretical advancements suggest that semantic distortion metrics based on distribution distances are more effective than signal-based metrics for preserving image integrity in such resource-constrained scenarios \cite{chai_rate-distortion_2025}.

To enhance semantic perceptual quality, the second category introduces Generative Artificial Intelligence (GAI), termed Generative Semantic Communication (GSC) \cite{grassucci_enhancing_2024,ren_generative_2025}. GSC utilizes a pre-trained generative model at the receiver as a prior, achieving generative decoding via conditional distribution sampling to realize distribution reconstruction, rather than point reconstruction. The mainstream GSC approaches are based on Conditional Diffusion Models (CDM), which can generate results with realistic textures and adherence to natural image distributions using minimal semantic information \cite{SGM,zhan_conditional_2025}. Nevertheless, the bottleneck of existing GSC methods lies in their generation process, which is inherently an \textit{indirect transport}. As illustrated in Fig. \ref{fig:optimal transport vs indirect transport}(a), existing methods typically first extract semantics from the transmitted image $\mathbf{x}$ into a specific conditional modality $\mathbf{c}$ (such as text \cite{nam_sequential_2023}, edge maps \cite{zhang_semantics-guided_2025}, or coarse images \cite{gao_bridging_2025}). This condition $\mathbf{c}$ is then utilized to provide conditional guidance for the diffusion backward Stochastic Differential Equation (SDE) trajectory, driving the sampling process from an independent Gaussian noise distribution $p(\mathbf{n})$ to the received data distribution $p(\hat{\mathbf{x}})$. As depicted by the long, jagged trajectory in Fig. \ref{fig:optimal transport vs indirect transport}(a), this indirect paradigm poses three major challenges for GSC schemes in practical deployment:
\begin{itemize}
  \item The specific modalities utilized as conditions $\mathbf{c}$ are limited in type and incomplete in information \cite{zhan_conditional_2025}. They fail to adapt precisely to diverse downstream tasks, lacking universality and flexibility.
  \item Due to the long generation trajectory originating from a Gaussian prior, existing models often hallucinate non-existent or erroneous details (marked in red circles in Fig. \ref{fig:optimal transport vs indirect transport}(a)) when relying on noise-corrupted semantic conditions \cite{aithal_understanding_2024}.
  \item The indirect generation mechanism dictates a lengthy backward SDE trajectory. This requires a vast number of iterative denoising steps (illustrated by the multiple backward step arrows), leading to severe computational overhead \cite{DDPM}.
\end{itemize}

To fundamentally address the inefficiencies of the indirect Gaussian-based paradigm, this paper proposes to bypass the intermediate Gaussian prior and instead solve for the optimal transport trajectory directly connecting the semantic distribution to the data distribution. Mathematically, the problem of finding the optimal path between two arbitrary probability distributions corresponds to the Schrödinger Bridge (SB) problem \cite{leonard_survey_2014}. Based on this theory, we establish a general framework named SB-based Generative Semantic Communication (SBGSC). As illustrated in Fig. \ref{fig:optimal transport vs indirect transport}(b), unlike existing methods that start from noise, SBGSC leverages the SB to construct a direct backward SDE trajectory from the extracted semantic distribution $p(\mathbf{s})$ to the received image distribution $p(\hat{\mathbf{x}})$. Guided by the gradient of the Schrödinger potential $\hat{\Psi}$ rather than simple score matching, SBGSC offers distinct advantages tailored for narrow-band and high-noise channels:
\begin{itemize}
\item SB enables a mapping directly between arbitrary learned semantic distributions and data distributions. This allows the system to utilize abstract semantic features $\mathbf{S}$ directly, without forcibly introducing rigid intermediate modalities like text or segmentation maps, thereby eliminating hallucination risks caused by modality bias.
\item As depicted by the \textit{Short Trajectory} in Fig. \ref{fig:optimal transport vs indirect transport}(b), SB uses semantic features as the direct starting state for generation. This optimal transport path inherently constrains the generation trajectory to closely follow the real data manifold and maintain structural consistency, effectively preventing the hallucinations caused by excessive randomization.
\item By constructing a direct transport path, SBGSC avoids the tortuous process of “encoding to condition, then sampling from Gaussian.” Consequently, the generation requires significantly \textit{Few Steps} (represented by the reduced number of transport arrows in Fig. \ref{fig:optimal transport vs indirect transport}(b)), which significantly lowers computational costs and inference latency.
\end{itemize}

The main contributions of this paper are summarized as follows:
\begin{itemize}
  \item We propose SBGSC, a general generative semantic communication framework grounded in SB theory. Unlike existing GSC paradigms that rely on an indirect Gaussian prior, SBGSC directly bridges the arbitrary semantic distribution and the target data distribution. This framework jointly optimizes semantic representation extraction under harsh channel conditions and realizes direct, optimal generative decoding.
  \item We provide a rigorous theoretical analysis from an information-theoretic perspective. By evaluating mutual information bounds, we mathematically demonstrate that directly modeling the optimal transport between semantics and data minimizes information loss and structural mismatch, proving the theoretical superiority of the SB-based scheme over traditional conditional diffusion models.
  \item We design a specific implementation named Diffusion SB-based Generative Semantic Communication (DSBGSC) for image transmission under severe narrow-band and high-noise scenarios. By setting the received semantic features as the direct starting state, DSBGSC reconstructs the nonlinear drift term of traditional diffusion models using Schrödinger potentials. This achieves a direct and optimal transport to the data distribution, effectively mitigating hallucinations. To further accelerate the decoding process, we introduce a self-consistency-based optimization objective that guides the model to learn a nonlinear velocity field pointing directly toward the target image. By bypassing the traditional stepwise Markovian noise prediction, this mechanism significantly reduces the required sampling steps and computational costs.
  Simulation results demonstrate that DSBGSC significantly outperforms state-of-the-art DeepJSCC and conditional GSC baselines. Under highly constrained channel conditions, our method achieves superior semantic perceptual quality, effectively suppresses generative hallucinations, and significantly reduces the number of sampling steps, thereby lowering computational costs and inference latency.
\end{itemize}

The remainder of this paper is organized as follows: Section \ref{sec:related_works} reviews related work. Section \ref{sec:SBGSC_framework} proposes the SBGSC framework and theoretical analysis. Section \ref{sec:DSBGSC_method} introduces the specific DSBGSC algorithm implementation. Section \ref{sec:simulation} presents the simulation results. Finally, Section \ref{sec:conclusion} concludes the paper.

\section{Related Works}
\label{sec:related_works}
This section first briefly summarizes existing generative semantic communication methods for image. Subsequently, it introduces the core mathematical tool utilized in this paper — Schrödinger Bridge. Finally, it reviews current research on SB-based generative models in the field of AI, laying the foundation for the proposed SB-based GSC framework and the DSBGSC method.

\subsection{Generative Semantic Communication for Image}
In recent years, extensive work has incorporated deep generative models into semantic communication systems, establishing the field of GSC \cite{ren_generative_2025}. The fundamental premise is that the transmitter only needs to transmit compact semantic representations or conditional information, while the receiver leverages a powerful generative prior to sample and reconstruct images from noise or simple priors \cite{zhang_semantic_2025}. This approach aims to significantly reduce the required bit rate while maintaining semantic consistency. Consequently, GSC is particularly suitable for extreme scenarios characterized by narrow bandwidth and high noise, where image recovery relies on strong priors and minimal transmitted bits or low-dimensional semantic features.

Early works predominantly employed Conditional GANs, utilizing low-dimensional latent variables or task labels from the encoder as conditions to generate natural images at the receiver \cite{lokumarambage_wireless_2023}. While these methods alleviated the blurriness issue associated with DeepJSCC to some extent and produced visually more natural images, they generally suffered from training instability, mode collapse, and limited semantic controllability \cite{ding_take_2022}. Following the success of diffusion models in high-fidelity image generation \cite{DDPM}, CDMs have been introduced into GSC systems \cite{zhang_semantic_2025}. In typical implementations, the transmitter compresses an image into one of various modalities—such as text descriptions \cite{nam_sequential_2023}, bounding boxes \cite{yang_agent-driven_2025}, scene graphs \cite{yang_sg2sc_2024}, semantic segmentation maps \cite{grassucci_lightweight_2025}, edge maps \cite{zhang_semantics-guided_2025}, or blurred images \cite{gao_bridging_2025}, or a combination thereof \cite{park_transmit_2025}. The receiver then employs a conditional diffusion model to iteratively sample and reconstruct the image from a Gaussian prior. These methods have demonstrated superior performance over DeepJSCC in specific application scenarios such as scene understanding \cite{yang_sg2sc_2024} and vehicle surveillance \cite{yang_agent-driven_2025}.

Despite the advantages of conditional generative model-based GSC in terms of perceptual quality, these methods universally follow an indirect transport trajectory: “Conditional Distribution $\to$ Gaussian Prior Distribution $\to$ Data Distribution.” On one hand, the conditional modalities are limited in variety and cannot fully characterize semantics. On the other hand, the introduction of an extra prior distribution fails to directly address the design of the “Semantic Distribution $\to$ Data Distribution” mapping. This results in significant hallucinations and high computational overhead. To fundamentally improve the performance and usability of GSC, a theoretical framework and tool capable of directly characterizing and solving for the “optimal transport from semantic distribution to data distribution” is required.

\subsection{Schrödinger Bridge}
Mathematically, the problem of finding the optimal transport trajectory between two arbitrary probability distributions under given constraints is known as the SB problem \cite{leonard_survey_2014}. Originally derived from statistical physics and stochastic process theory, the classical formulation of the SB problem is as follows: given two marginal distributions and a reference stochastic process (such as Brownian motion), find a stochastic process that is closest to the reference process in terms of relative entropy while satisfying the specified marginal constraints at the start and target times.

Given two boundary distributions $p_{\text{data}}$ and $p_{\text{prior}}$ (which can correspond to the noisy semantic distribution and the data distribution in semantic communication, respectively), the SB problem seeks a stochastic process $P$ defined on $t \in [0, 1]$ such that its marginals satisfy $P_{t=0} = p_{\text{data}}$ and $P_{t=1} = p_{\text{prior}}$, while minimizing the Kullback–Leibler (KL) divergence with respect to a reference process $W$ (typically chosen as Brownian motion):
\begin{equation}
\min_{P} \text{KL}(P || W), \quad \text{s.t.} \ P_{t=0} = p_{\text{data}}, P_{t=1} = p_{\text{prior}}.
\end{equation}

This problem can be equivalently viewed as a dynamic version of Entropy-Regularized Optimal Transport \cite{leonard_survey_2014}. To facilitate the solution of forward and backward SDEs, two coupled Schrödinger potentials can be used from a dynamic perspective to describe them, which satisfy the coupled partial differential equations under the Hopf-Cole transformation \cite{SBPotentials1,SBPotentials2}:
\begin{equation}
\begin{gathered}
  \left\{ \begin{array}{l}
   \displaystyle \frac{\partial \Psi (\mathbf{x},t)}{\partial t}
   = -\nabla_{\mathbf{x}} \Psi^\top \mathbf{f}_t
   - \frac{1}{2}\operatorname{Tr}\bigl( \beta_t \nabla_{\mathbf{x}}^{2} \Psi \bigr), \\
   \displaystyle \frac{\partial \hat{\Psi}(\mathbf{x},t)}{\partial t}
   = -\nabla_{\mathbf{x}} \cdot \bigl( \hat{\Psi} \mathbf{f}_t \bigr)
   + \frac{1}{2}\operatorname{Tr}\bigl( \beta_t \nabla_{\mathbf{x}}^{2} \hat{\Psi} \bigr),
   \end{array} \right. \\
   \text{s.t.}\quad
   \Psi (\mathbf{x},0)\hat{\Psi}(\mathbf{x},0) = p_{\text{prior}}(\mathbf{x}),
   \Psi (\mathbf{x},1)\hat{\Psi}(\mathbf{x},1) = p_{\text{data}}(\mathbf{x}),
\end{gathered}
\end{equation}
where $\Psi$ and $\hat{\Psi}$ are the coupled backward and forward Schrödinger potentials, and $\mathbf{f}_t$ and $\beta_t$ are the drift term and diffusion coefficient of the SDE.
The corresponding forward and backward SDEs are:
\begin{subequations}
\begin{equation}
\text{d}{{\mathbf{x}}_{t}}=[{{\mathbf{f}}_{t}}+{{\beta }_{t}}{{\nabla }_{\mathbf{x}}}\log \Psi ({{\mathbf{x}}_{t}},t)]\text{d}t+\sqrt{{{\beta }_{t}}}\text{d}{{\mathbf{w}}_{t}},
\label{eq:sb_sde_forward}
\end{equation}
\begin{equation}
\text{d}{{\mathbf{x}}_{t}}=[{{\mathbf{f}}_{t}}-{{\beta }_{t}}{{\nabla }_{\mathbf{x}}}\log \hat{\Psi }({{\mathbf{x}}_{t}},t)]\text{d}t+\sqrt{{{\beta }_{t}}}\text{d}{{\mathbf{\bar{w}}}_{t}},
\label{eq:sb_sde_backward}
\end{equation}
\end{subequations}
where $\mathbf{x}_{t=0}$ is a sample from the prior distribution $p_\text{prior}(\mathbf{x})$, and $\mathbf{x}_{t=1}$ is a sample from the data distribution $p_\text{data}(\mathbf{x})$, $\nabla \log \Psi(\mathbf{x}_t, t)$ and $\nabla \log \widehat{\Psi}(\mathbf{x}_t, t)$ are the forward and backward optimal drift terms provided by the SB theory. Unlike diffusion models which force the initial distribution to be a standard Gaussian, SB allows the initial distribution to be arbitrary, which theoretically supports us to directly set the initial distribution as a noisy semantic distribution, thereby greatly shortening the generation trajectory.

Since obtaining analytical SB solutions for arbitrary distributions is intractable, numerical methods have been widely explored. Traditional approaches, such as those based on Iterative Proportional Fitting \cite{leonard_survey_2014} or stochastic control formulations solving Fokker-Planck equations \cite{chen_relation_2016}, provide theoretical insights but often suffer from prohibitive computational costs or the difficulty of solving partial differential equations in high-dimensional spaces. In contrast, recent research has successfully combined SB with the score-based diffusion framework to propose the Diffusion Schrödinger Bridge (DSB) \cite{de_bortoli_diffusion_2021}. By learning score functions to approximate the bridge process within a unified SDE framework, DSB effectively addresses the scalability issues of traditional methods, enabling efficient SB solutions for high-dimensional complex data.

\begin{figure*}[!t]
\centering
\includegraphics[width=0.9\textwidth]{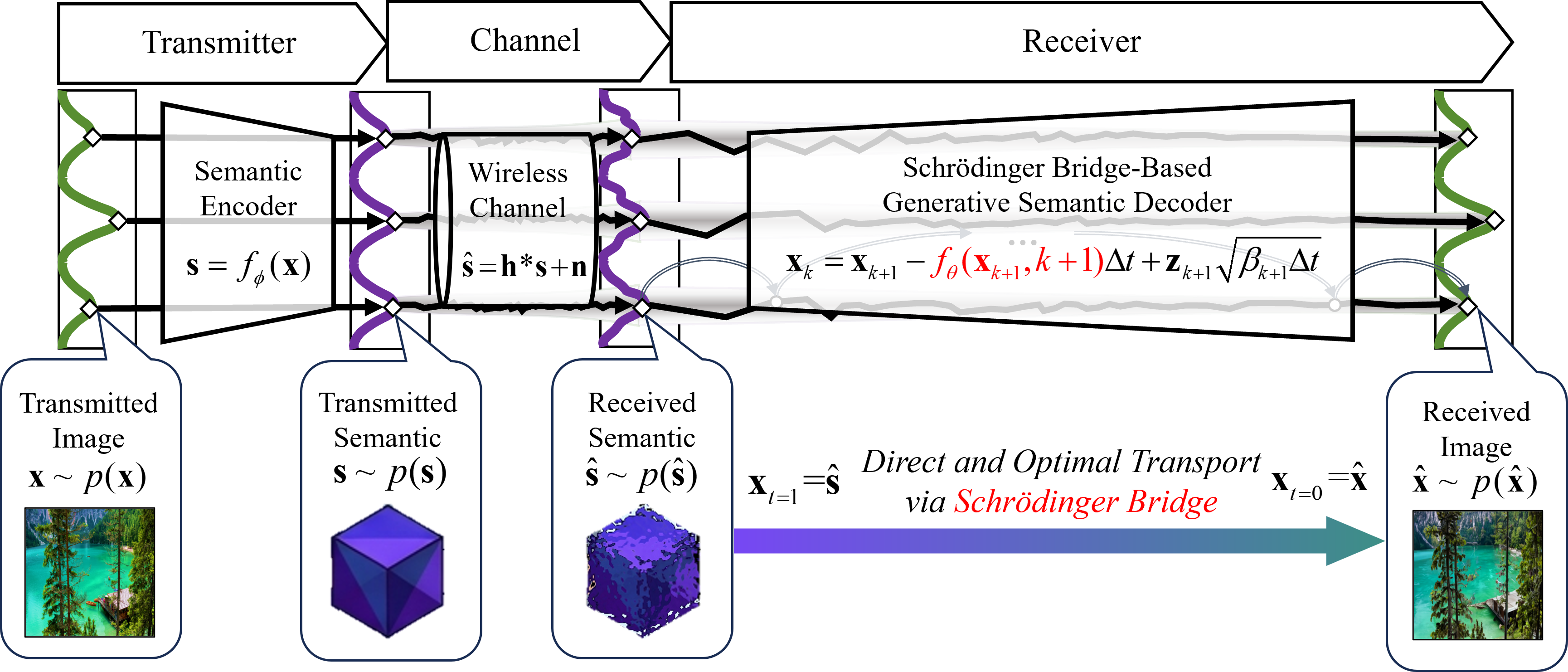}
\caption{Schematic diagram of the proposed SBGSC framework. The framework mainly consists of a joint source-channel semantic encoder, a wireless physical channel, and a core SB-based generative decoder. The transmitter maps the input data $\mathbf{x}$ to semantic symbols $\mathbf{s}$ via the encoder $f_\phi$; after passing through a fading and noisy channel, the receiver obtains the corrupted semantics $\hat{\mathbf{s}}$. The decoder in this framework utilizes SB theory to construct a backward SDE, establishing a direct and optimal transport trajectory from the received semantics $\mathbf{x}_{t=0}=\hat{\mathbf{s}}$ (as the initial state) to the target data $\mathbf{x}_{t=1}=\hat{\mathbf{x}}$ (as the target state).}
\label{fig:SBGSC framework}
\end{figure*}

\subsection{Schrödinger Bridge-based Generative Models}
With the maturation of numerical solutions for SB, a series of works applying SB to high-dimensional generative modeling have emerged in the AI field \cite{de_bortoli_diffusion_2021}. The general idea is to treat the generation process as constructing an SB between a given source distribution $p_{0}$ and a target distribution $p_{1}$. By learning appropriate control strategies or score functions, the evolution of the bridge process is simulated in both forward and backward time directions to achieve high-quality sampling from $p_{0}$ to $p_{1}$. Compared to traditional “Noise $\to$ Data” unidirectional generators, this “Distribution $\to$ Distribution” bridging perspective is more general.

Representative methods include the DSB model \cite{de_bortoli_diffusion_2021,chen_likelihood_2022}, which introduces SB constraints on top of the forward and backward SDE frameworks of diffusion models to match both start and target marginal distributions, enabling distribution-to-distribution generation on complex data such as images and videos. Subsequent works like GOUB \cite{yue_image_2024} and DDBM \cite{zhou_denoising_2024} have proposed using Doob's h-transform to construct diffusion bridge models, alleviating the KL divergence convergence difficulties found in classical DSB models. More recently, works such as UniDB \cite{zhu_unidb_2025} have attempted to construct a unified diffusion bridge framework capable of characterizing and approximating various bridge processes within a single parameterized model, enabling unified modeling from arbitrary prior distributions to arbitrary target distributions.

Since these models can choose any source distribution closer to the task as the generation starting point, rather than being limited to Gaussian noise, SB-based generative models offer distinct advantages: the generation path is determined by the optimality principle of SB, potentially reducing invalid detour sampling steps and lowering computational costs. In the setting where the source is the semantic distribution and the target is the data distribution, this approach naturally aligns with the direct optimal transport of Semantics to Data, facilitating semantic consistency and mitigating hallucinations.

However, existing SB-based generative works mostly focus on general generation, style transfer, or domain adaptation \cite{zhu_unidb_2025}, without systematic design and analysis for semantic communication scenarios. There is a lack of systematic research on how to integrate SB into communication system frameworks to establish a unified distribution-to-distribution perspective across semantic coding, channel transmission, and generative decoding. To this end, this paper formalizes the generative semantic communication problem as constructing an optimal transport between semantic and data distributions, introduces the SB to build a novel framework named SBGSC, and subsequently proposes the specific DSBGSC implementation for narrow-band, high-noise image transmission scenarios.

\section{Proposed SBGSC Framework}
\label{sec:SBGSC_framework}
This section first introduces the overall structure of the SBGSC framework. Subsequently, it details the core component of the framework—the SB-based generative decoder. Finally, it provides an in-depth theoretical analysis of the superiority of SBGSC over mainstream CDM-based GSC methods in terms of generation quality and computational efficiency.

\subsection{SBGSC Framework Overview}
As shown in Fig. \ref{fig:SBGSC framework}, the proposed SBGSC system is an end-to-end deep learning communication framework designed to achieve high-fidelity semantic transmission and perceptual quality of high-dimensional source data in adverse channel environments. The entire communication system can be modeled as a Markov chain process: $\mathbf{x} \rightarrow \mathbf{s} \rightarrow \hat{\mathbf{s}} \rightarrow \hat{\mathbf{x}}$, consisting primarily of three parts: a joint source-channel semantic encoder, a wireless physical channel, and a SB-based generative decoder.

At the transmitter, the system deploys a joint source-channel semantic encoder parameterized by $\phi$. Faced with high-dimensional input source data $\mathbf{x}$ (e.g., images), the encoder does not perform the traditional separation of source coding and channel coding. Instead, it directly extracts highly robust semantic features via a deep neural network and maps them into a complex symbol sequence $\mathbf{s}$ suitable for transmission over wireless channels. The encoding process is denoted as 
\begin{equation}
    \mathbf{s}=f_\phi(\mathbf{x}).
    \label{eq:sbgsc_encoder}
\end{equation}
To meet communication bandwidth constraints, the channel bandwidth ratio (CBR) \cite{yang_swinjscc_2025}, defined as $k/n$ (where $k$ is the dimension of $\mathbf{s}$ and $n$ is the dimension of $\mathbf{x}$), should not exceed a preset threshold. This process, based on joint source-channel coding (JSCC), not only accomplishes data dimensionality reduction and compression but also implicitly incorporates error correction coding mechanisms against channel noise, realizing a compact and noise-resilient semantic representation.

Subsequently, the encoded semantics $\mathbf{s}$ are transmitted through a wireless channel. To simulate a wireless communication environment, this paper considers Additive White Gaussian Noise (AWGN) and fading effects. The received signal $\hat{\mathbf{s}}$ at the receiver can be expressed as 
\begin{equation}
    \hat{\mathbf{s}} = h*\mathbf{s} + n,
    \label{sbgsc_channel}
\end{equation}
where $h$ represents the channel fading coefficient and $n$ denotes additive noise following a complex Gaussian distribution. During this process, due to the deep compression necessitated by bandwidth limits and the inevitable interference from channel noise, the received semantic distribution $p(\hat{\mathbf{s}})$ undergoes harmful distortion compared to the transmitted semantic distribution $p(\mathbf{s})$ (illustrated by the distorted distribution curve in Fig. \ref{fig:SBGSC framework}).

At the receiver, SBGSC employs a SB-based generative decoder to reconstruct the original data $\hat{\mathbf{x}}$ from the corrupted received semantics $\hat{\mathbf{s}}$. The decoder is parameterized by $\theta$ and denoted as 
\begin{equation}
    \hat{\mathbf{x}}=f_\theta(\hat{\mathbf{s}}).
    \label{eq:sbgsc_decoder}
\end{equation}
This decoding mechanism combines semantic decompression and noise restoration functionalities and is the core component enabling high-performance semantic communication in the SBGSC framework, which will be elaborated in the next subsection.

Guided by semantic rate-distortion theory, the end-to-end training objective function of SBGSC includes a distribution loss and a semantic loss:
\begin{equation}
\mathcal{L} = \mathcal{L}_{\text{dist}}(p(\mathbf{x}), p(\hat{\mathbf{x}})) + \lambda \mathcal{L}_{\text{sem}}(\mathbf{x}, \hat{\mathbf{x}}),
\end{equation}
where $\mathcal{L}_{\text{dist}}$ denotes the distribution loss term, such as KL divergence, aiming to match the generated data distribution with the real data distribution. $\mathcal{L}_{\text{sem}}$ denotes the semantic loss, which can take different forms or combinations depending on the downstream application (e.g., Mean Squared Error (MSE) for pixel-level fidelity, or Cross-Entropy for classification accuracy). $\lambda$ is a balancing weight between the loss terms.

\subsection{Schrödinger Bridge-Based Generative Semantic Decoder}
SBGSC introduces a SB from the received semantics $\hat{\mathbf{s}}$ to the output image $\hat{\mathbf{x}}$. Substituting the prior distribution $p_{\text{prior}}(\mathbf{x}_1)=p(\hat{\mathbf{s}})$ and the data distribution $p_{\text{data}}(\mathbf{x}_0)=p(\mathbf{x})$ into the \eqref{eq:sb_sde_forward} and \eqref{eq:sb_sde_backward}, we obtain a system of partial differential equations.
Next, integrating the backward SDE over the time interval $t \in [1, 0]$ yields the expression for the decoder. However, in practical implementations of semantic communication, the analytical integral cannot be directly computed. Instead, the Euler-Maruyama method is used for discretization, resulting in a Langevin dynamics sampling process. Let the time interval $[0,1]$ be divided into $N$ steps with step size $\Delta t = 1/N$. For $k=N-1, \dots, 0$, the functional relationship between the received data $\hat{\mathbf{x}}$ and the received semantics $\hat{\mathbf{s}}$ can be expressed as a recursive function:
\begin{equation}
\mathbf{x}_{k} = \mathbf{x}_{k+1} - f_\theta(\mathbf{x}_{k+1}, k+1)\Delta t + \mathbf{z}_{k+1}\sqrt{\beta_{k+1}\Delta t},
\end{equation}
where $\mathbf{z}_{k+1} \sim \mathcal{N}(\mathbf{0}, \mathbf{I})$, and the model parameterized by $\theta$ is used to approximate the intractable drift term. By accumulating the above equation from $k=N$ (where $\mathbf{x}_N = \hat{\mathbf{s}}$) to $k=0$ (where $\mathbf{x}_0 = \hat{\mathbf{x}}$), we obtain:
\begin{equation}
\hat{\mathbf{x}} = \underbrace{\hat{\mathbf{s}}}_{\text{Received Semantics}} - \underbrace{\sum_{k=1}^{N} f_\theta(\mathbf{x}_k, k)\Delta t}_{\text{Optimal Drift}} + \underbrace{\sum_{k=1}^{N}\mathbf{z}_k\sqrt{\beta_k\Delta t}}_{\text{Random Diffusion}}.
\end{equation}
Here, the first term is the received semantics, representing the starting point of the generation trajectory. The second term is the accumulation of the drift, representing the optimal generation trajectory indicated by the SB. The third term is the accumulation of the diffusion, representing the random perturbations within the generation trajectory.

\subsection{Theoretical Superiority of SBGSC vs. CDM-based GSC}
In this section, we theoretically characterize the advantages of SBGSC over CDM-based GSC \footnote{The complete proofs and discussions are provided in Appendix \ref{appendix:lemma1} to \ref{appendix:corollary2}.}. Lemma \ref{lem:semantic_capacity} first establishes an information capacity inequality showing that the unconstrained JSCC semantic representations provably retain no less mutual information about the source than modality-constrained condition signals, formalizing the information bottleneck imposed by predefined modality priors. We then introduce a distributional distance assumption (Assumption \ref{ass:distribution_distance}) and define Path Kinetic Energy (PKE) (Definition \ref{def:path_kinetic_energy}), upon which Theorem \ref{theo:pke_advantage} proves that the Schrödinger bridge transport in SBGSC achieves strictly lower kinetic energy than the CDM backward diffusion, owing to both a source-informed initialization and a variationally optimal transport path. Two corollaries translate this result into communication performance guarantees: Corollary \ref{cor:hallucination_suppression} shows that SBGSC attains a strictly lower semantic hallucination rate with higher end-to-end mutual information, and Corollary \ref{cor:computational_efficiency} proves that SBGSC requires strictly fewer neural function evaluations for any prescribed generation accuracy tolerance.

\begin{lemma}{\textbf{(Information Capacity Inequality of Semantic Representations).}}
    \label{lem:semantic_capacity}
    Let the source image $\mathbf{x} \in \mathcal{X}$ be drawn from a distribution $p(\mathbf{x})$, and let $d$ denote the fixed feature dimensionality determined by the CBR. Define the unconstrained semantic space $\mathcal{S} = \mathbb{R}^d$ and the family of all Borel measurable encoders thereon as $\mathcal{F}_{\mathcal{S}} = \{ f: \mathcal{X} \to \mathcal{S} \}$. Define the constrained modality-specific condition space $\mathcal{C} = \bigcup_{m=1}^{M} \mathcal{C}_m \subsetneq \mathcal{S}$ as the union of the ranges of $M$ predefined modality-specific feature extractors, and the corresponding encoder family as $\mathcal{F}_{\mathcal{C}} = \{ g: \mathcal{X} \to \mathcal{C} \}$. Then,
    \begin{equation}
        \sup_{f \in \mathcal{F}_{\mathcal{S}}} I(\mathbf{x};\, f(\mathbf{x})) \;\geq\; \sup_{g \in \mathcal{F}_{\mathcal{C}}} I(\mathbf{x};\, g(\mathbf{x})).
        \label{eq:lemma1}
    \end{equation}
\end{lemma}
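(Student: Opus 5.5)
The argument rests on an inclusion of encoder families. Because $\mathcal{C}\subsetneq\mathcal{S}$, every $g\in\mathcal{F}_{\mathcal{C}}$ can be composed with the inclusion map $\iota:\mathcal{C}\hookrightarrow\mathcal{S}$, giving $\iota\circ g\in\mathcal{F}_{\mathcal{S}}$. The first task is measurability. I would equip $\mathcal{C}$ with the trace (subspace) Borel $\sigma$-algebra inherited from $\mathbb{R}^d$. Then $\iota$ is measurable, so $\iota\circ g$ is Borel measurable whenever $g$ is. If the paper intends each $\mathcal{C}_m$ to be the range of a modality-specific extractor and that range is not itself Borel, the trace $\sigma$-algebra still works. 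This is the one point where some care is needed, and I would state it explicitly as the convention under which $\mathcal{F}_{\mathcal{C}}$ is defined.

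Next I would show that the mutual information is unchanged by the inclusion:
\begin{equation*}
I(\mathbf{x};\iota(g(\mathbf{x}))) = I(\mathbf{x};g(\mathbf{x})).
\end{equation*}
The cleanest route is the general (Dobrushin/Pinsker) definition of mutual information as the KL divergence between the joint law and the product of the marginals. The map $\mathrm{id}\times\iota$ is injective and bimeasurable onto its image: $\iota$ is an isomorphism of measurable spaces between $\mathcal{C}$ with the trace $\sigma$-algebra and $\iota(\mathcal{C})$ with the trace $\sigma$-algebra. Pushing both the joint law and the product law forward by $\mathrm{id}\times\iota$ therefore leaves the KL divergence unchanged. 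An alternative is the data processing inequality applied in both directions. One direction uses $\iota$ itself. The other uses a measurable left inverse, for example the identity on $\iota(\mathcal{C})$, extended by a constant outside.

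The conclusion then follows from monotonicity of the supremum over nested sets:
\begin{equation*}
\sup_{g\in\mathcal{F}_{\mathcal{C}}} I(\mathbf{x};g(\mathbf{x})) = \sup_{g\in\mathcal{F}_{\mathcal{C}}} I(\mathbf{x};\iota\circ g(\mathbf{x})) \le \sup_{f\in\mathcal{F}_{\mathcal{S}}} I(\mathbf{x};f(\mathbf{x})),
\end{equation*}
since $\{\iota\circ g : g\in\mathcal{F}_{\mathcal{C}}\}\subseteq\mathcal{F}_{\mathcal{S}}$. This holds in the extended reals, so no finiteness assumption is needed. Indeed, for continuous $\mathbf{x}$ both suprema may be $+\infty$, because an injective measurable map from $\mathcal{X}$ into $\mathbb{R}^d$ exists. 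That is also why the statement claims only the non-strict inequality $\geq$.

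The main obstacle is not the inequality itself, which is immediate once the inclusion is set up. It is making the measure-theoretic framing rigorous: fixing the $\sigma$-algebra on $\mathcal{C}$ and using a definition of mutual information that covers mixed or continuous variables. I would handle this by working throughout with the KL-based definition of mutual information. I would also add a remark that the lemma gives no strict inequality in general.
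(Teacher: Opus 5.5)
Your proposal is correct and follows the paper's proof essentially step for step: the inclusion map with the trace $\sigma$-algebra, invariance of mutual information under $\iota$, then monotonicity of the supremum over nested sets. The paper proves the invariance by your alternative route, the data processing inequality in both directions, but takes the left inverse only on $\iota(\mathcal{C})$; this sidesteps the fact that your constant extension is Borel measurable only when $\mathcal{C}$ is Borel. Working in the extended reals is more careful than the paper, which asserts the supremum is finite via $I(\mathbf{x};f(\mathbf{x}))\le H(\mathbf{x})$, false for continuous sources with differential entropy. Your observation that both suprema can equal $+\infty$ also correctly shows that strict inequality cannot be claimed in general over all Borel encoders, contrary to the paper's closing remark.
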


\textit{Proof Sketch:} 
The key observation is that optimizing over a larger feasible set cannot yield a smaller optimum. Since $\mathcal{C} \subseteq \mathcal{S}$, every encoder $g \in \mathcal{F}_{\mathcal{C}}$ is also a valid encoder in $\mathcal{F}_{\mathcal{S}}$, and the mutual information value it achieves remains unchanged under this identification—this follows from the data processing inequality \cite{beaudry2012intuitive} applied to the identity inclusion, which is trivially invertible on $\mathcal{C}$. Therefore, the set of all mutual information values achievable by $\mathcal{F}_{\mathcal{C}}$ is contained in that achievable by $\mathcal{F}_{\mathcal{S}}$, and the stated inequality follows directly. \textit{The complete proof is provided in Appendix \ref{appendix:lemma1}.} $\hfill\square$

Lemma \ref{lem:semantic_capacity} reveals that modality constraints essentially impose an additional information bottleneck on the encoder design. Specifically, a restricted modality prior inevitably discards semantic components that fall outside its representational capacity—for instance, textual descriptions cannot precisely encode pixel-level textures, while edge maps entirely lose color or material information. In contrast, the JSCC encoder in SBGSC adaptively preserves those semantic dimensions that contribute most to end-to-end reconstruction, free from any modality-specific prior.

\begin{assumption}{\textbf{(Distribution Distance Assumption).}}
    \label{ass:distribution_distance}
    Let $\mu_{\hat{s}}$ denote the induced distribution of the received semantic features in the image space, $p_{\text{data}}$ the distribution of the source images, and $\pi = \mathcal{N}(\mathbf{0}, \mathbf{I}_D)$ the Gaussian prior adopted by the CDM-based scheme. Then,
    \begin{equation}
        \mathcal{W}_2^2\!\left(\mu_{\hat{s}},\; p_{\text{data}}\right) \;<\; \mathcal{W}_2^2\!\left(\pi,\; p_{\text{data}}\right),
        \label{eq:assumption1}
    \end{equation}
    where $\mathcal{W}_2$ is the second-order Wasserstein distance \cite{givens1984class}.
\end{assumption}

The intuition behind this assumption is that the semantic features produced by an end-to-end optimized joint source-channel encoder, even when corrupted by channel noise, retain the principal informative structure of the source data when projected back to the image space, whereas a significantly distributional gap exists between uninformative Gaussian noise and highly structured natural images. \textit{The rigorous definition and justification of this assumption are provided in Appendix \ref{appendix:assumption1}.}

To quantify the efficiency of generative paths, we introduce the following definition.

\begin{definition}{\textbf{(Path Kinetic Energy).}}
    \label{def:path_kinetic_energy}
    For a diffusion process driven by a drift field $\mathbf{u}_t$, i.e., $\mathrm{d}\mathbf{x}_t = \mathbf{u}_t(\mathbf{x}_t)\,\mathrm{d}t + \sigma\,\mathrm{d}\mathbf{W}_t$, its PKE is defined as
    \begin{equation}
        \mathcal{E}(\mathbb{Q}) = \mathbb{E}_{\mathbb{Q}}\!\left[\int_0^1 \left\|\mathbf{u}_t(\mathbf{x}_t)\right\|_2^2 \,\mathrm{d}t\right],
        \label{eq:definition_kinetic_energy}
    \end{equation}
where $\mathbb{Q}$ denotes the path measure induced by the process. $\mathcal{E}(\mathbb{Q})$ measures the cumulative magnitude of the drift field along the entire generative trajectory; a smaller value indicates a shorter, smoother trajectory that requires fewer sampling iterations. In the subsequently proposed method DSBGSC, based on the principle of minimizing PKE, we obtain martingales as distribution transition paths by reconstructing the drift term of the diffusion model (see Section \ref{sec:DSBGSC_method}).
\end{definition}

\begin{theorem}{\textbf{(Path Kinetic Energy Advantage of SBGSC).}}
    \label{theo:pke_advantage}
    Let $\mathbb{Q}^{\mathrm{SB}}(\mu_{\hat{s}}, p_{\text{data}})$ denote the Schrödinger bridge generative path measure of SBGSC, which starts from the distribution of the received semantics $\mu_{\hat{s}}$ and targets the data distribution $p_{\text{data}}$. Let $\mathbb{Q}^{\mathrm{CDM}}$ denote the backward diffusion path measure of CDM-based GSC, which starts from the standard Gaussian prior $\pi$ and targets the conditional data distribution $p_{\text{data}}(\cdot|\mathbf{c})$. Under Assumption \ref{ass:distribution_distance},
    \begin{equation}
        \mathcal{E}\!\left(\mathbb{Q}^{\mathrm{SB}}(\mu_{\hat{s}},\, p_{\text{data}})\right) \;<\; \mathcal{E}\!\left(\mathbb{Q}^{\mathrm{CDM}}\right).
        \label{eq:theorem1}
    \end{equation}
\end{theorem}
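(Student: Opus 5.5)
The plan is to sandwich the two kinetic energies through the Wasserstein distance, using the Benamou–Brenier dynamic formulation of optimal transport together with its entropic (Schrödinger) relaxation.

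\textbf{Step 1: upper bound for the Schrödinger bridge.} I will use the variational characterization of $\mathbb{Q}^{\mathrm{SB}}(\mu_{\hat{s}},p_{\text{data}})$. Among all path measures with marginals $\mu_{\hat{s}}$ at $t=0$ and $p_{\text{data}}$ at $t=1$, driven by $\mathrm{d}\mathbf{x}_t=\mathbf{u}_t\,\mathrm{d}t+\sigma\,\mathrm{d}\mathbf{W}_t$, it minimizes $\mathrm{KL}(\mathbb{Q}\,\|\,\mathbb{W}^\sigma)$. By Girsanov, this KL equals $\frac{1}{2\sigma^2}\mathcal{E}(\mathbb{Q})+\mathrm{KL}(\mathbb{Q}_0\|\mathbb{W}^\sigma_0)$. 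The second term is fixed by the marginal constraint, so $\mathbb{Q}^{\mathrm{SB}}$ minimizes PKE over this class.

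Next I compare $\mathbb{Q}^{\mathrm{SB}}$ with an explicit competitor. The competitor is the Brownian-bridge mixture built on the $\mathcal{W}_2$-optimal coupling $\gamma^\star$ between $\mu_{\hat{s}}$ and $p_{\text{data}}$ (i.e., the interpolant $\mathbf{x}_t=(1-t)\mathbf{x}_0+t\mathbf{x}_1+\sigma\,\mathrm{bridge}_t$). After Markovian projection, its drift energy is at most the energy of the conditional bridge drifts. Those drifts contribute $\mathcal{W}_2^2(\mu_{\hat{s}},p_{\text{data}})$ plus a bridge-variance term $R(\sigma)$ that depends only on $\sigma$ and the dimension. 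This gives
\begin{equation*}
\mathcal{E}(\mathbb{Q}^{\mathrm{SB}})\le \mathcal{W}_2^2(\mu_{\hat{s}},p_{\text{data}})+R(\sigma).
\end{equation*}
Making $R(\sigma)$ rigorous is delicate, because the conditional bridge drift blows up near $t=1$. I would handle this by using the small-noise ($\sigma\to0$) regime in which the SB converges to the Benamou–Brenier geodesic, or by restricting PKE to $[0,1-\epsilon]$.

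\textbf{Step 2: lower bound for CDM.} The backward CDM process transports $\pi$ to $p_{\text{data}}(\cdot|\mathbf{c})$, and averaging over $\mathbf{c}$ it transports $\pi$ to $p_{\text{data}}$. By Benamou–Brenier (continuity equation and Jensen), any process with those endpoint marginals has drift energy at least $\mathcal{W}_2^2(\pi,p_{\text{data}})$, since the velocity field of the marginal flow has energy no larger than the drift energy. So
\begin{equation*}
\mathcal{E}(\mathbb{Q}^{\mathrm{CDM}})\ge\mathcal{W}_2^2(\pi,p_{\text{data}}).
\end{equation*}
Moreover, the CDM drift contains the score term $\beta_t\nabla\log p_t$, which is not the optimal displacement velocity. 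I would add the non-negative excess $\mathbb{E}\int\|\mathbf{u}_t-\mathbf{v}_t^\star\|^2\mathrm{d}t$ to absorb the slack $R(\sigma)$.

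\textbf{Step 3: combine.} Chaining Step 1 and Step 2 with Assumption \ref{ass:distribution_distance}, strict inequality follows provided
\begin{equation*}
R(\sigma)\le \mathcal{W}_2^2(\pi,p_{\text{data}})-\mathcal{W}_2^2(\mu_{\hat{s}},p_{\text{data}})+\text{(CDM excess)}.
\end{equation*}
This reconciliation is the main obstacle: the two processes carry noise-induced energy that the pure $\mathcal{W}_2$ terms do not see. My first attempt is to use matched diffusion coefficients and the same Gaussian reference $\mathbb{W}^\sigma$ for both schemes, so that the identical Itô corrections cancel. I would also lower-bound the CDM energy by the SB energy between $\pi$ and $p_{\text{data}}$, namely $\mathcal{E}(\mathbb{Q}^{\mathrm{CDM}})\ge\mathcal{E}(\mathbb{Q}^{\mathrm{SB}}(\pi,p_{\text{data}}))$ by optimality. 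It then remains to show that $\mathcal{E}(\mathbb{Q}^{\mathrm{SB}}(\cdot,p_{\text{data}}))$ is monotone in the $\mathcal{W}_2$ distance of the source. That monotonicity would follow in the small-$\sigma$ limit from $\Gamma$-convergence of entropic OT to $\mathcal{W}_2^2$. In that limit, Assumption \ref{ass:distribution_distance} gives the strict inequality directly, and continuity in $\sigma$ extends it to small positive $\sigma$.
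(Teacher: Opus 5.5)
Your last paragraph is essentially the paper's proof. The paper also lower-bounds $\mathcal{E}(\mathbb{Q}^{\mathrm{CDM}})$ by an SB energy from $\pi$ via SB optimality (Steps~1--2 of Appendix~\ref{appendix:theorem1}). It then uses strict monotonicity of SB energy in $\mathcal{W}_2^2$ (Lemma~\ref{lemma:C.2}) together with Assumption~\ref{ass:distribution_distance}. The one structural difference is the conditioning: the paper passes from $p_{\text{data}}(\cdot|\mathbf{c})$ to $p_{\text{data}}$ by convexity of $\mathcal{W}_2^2$ on a positive-measure set $\mathcal{C}_+$ (Lemma~\ref{lemma:C.3}), whereas you average over $\mathbf{c}$ (a Jensen/Markovian-projection step, giving a statement in expectation over $\mathbf{c}$).

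The gap is that the two halves of this chain live at incompatible noise levels. The optimality comparison forces the SB to use the CDM's own diffusion coefficient $\sqrt{\beta_t}$, because path measures with different diffusion coefficients are mutually singular and share no feasible set. For time-varying $\beta_t$, Girsanov then shows this SB minimizes $\mathbb{E}\int_0^1\|\mathbf{u}_t\|_2^2/\beta_t\,\mathrm{d}t$, not $\mathcal{E}$. Moreover, this noise level is fixed and not small, since the forward CDM SDE must carry $p_{\text{data}}$ to $\pi$. The $\mathcal{W}_2$ comparison, by contrast, is only available as the noise vanishes. At a fixed noise level the SB energy is not even a function of $\mathcal{W}_2^2$: from $\mathrm{KL}(\mathbb{Q}^{\mathrm{SB}}\|\mathbb{P}^\sigma)=\min_\gamma\mathrm{KL}(\gamma\|\mu_0\otimes k_\sigma)$,
\[
\mathcal{E}\bigl(\mathbb{Q}^{\mathrm{SB}}(\mu_0,\mu_1)\bigr)=\min_{\gamma\in\Gamma(\mu_0,\mu_1)}\Bigl\{\int\|\mathbf{x}-\mathbf{y}\|_2^2\,\mathrm{d}\gamma+2\sigma^2 I_\gamma(\mathbf{x}_0;\mathbf{x}_1)\Bigr\}+\kappa(\mu_1,\sigma),
\]
where $\kappa$ does not depend on the source. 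Take one dimension with $\mu_1=\mathcal{N}(0,1)$ and $2\sigma^2=100$. The source $\mathcal{N}(0,10^2)$ has $\mathcal{W}_2^2=81$ but minimal bracket $\approx 99$, while $\delta_{\sqrt{90}}$ has $\mathcal{W}_2^2=91$ and bracket $91$, so the SB energies are ordered opposite to $\mathcal{W}_2^2$. Hence ``continuity in $\sigma$'' gives your comparison only for a small-noise SB, which is not the SB you compared with $\mathbb{Q}^{\mathrm{CDM}}$.

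The paper does not close this gap either. Lemma~\ref{lemma:C.2} simply asserts $\Phi_\sigma(w)=w+2\sigma^2R_\sigma(w)$ with $R_\sigma\ge 0$ non-decreasing. The example above refutes the monotonicity. Driftless Brownian motion carrying $\delta_{\mathbf{0}}$ to $\mathcal{N}(\mathbf{0},\sigma^2\mathbf{I})$, with $\mathcal{E}=0<D\sigma^2=\mathcal{W}_2^2$, refutes $R_\sigma\ge0$. Its Step~2 also compares an element of $\mathcal{Q}_\sigma$ with $\mathbb{Q}^{\mathrm{CDM}}\in\mathcal{Q}_{\sqrt{\beta}}$.

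Your Steps~1--2, which keep the two noise levels separate, are the route that can be repaired, but both are wrong as written.

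\textbf{Step 1.} The slack is not a finite, delicate $R(\sigma)$. The conditional bridge drift satisfies $\mathbb{E}\|(\mathbf{x}_1-\mathbf{x}_t)/(1-t)\|_2^2=\|\mathbf{x}_1-\mathbf{x}_0\|_2^2+D\sigma^2 t/(1-t)$, which is not integrable on $[0,1]$. So the Jensen bound after Markovian projection is vacuous for every coupling. The usable bound is $\mathcal{E}\le 2\sigma^2\mathrm{KL}(\gamma\|\mu_{\hat{s}}\otimes k_\sigma)$, and this is $+\infty$ for a deterministic $\mathcal{W}_2$-optimal $\gamma^\star$. What you actually need is the small-noise limit $\mathcal{E}(\mathbb{Q}^{\mathrm{SB}}_\sigma(\mu_{\hat{s}},p_{\text{data}}))\to\mathcal{W}_2^2(\mu_{\hat{s}},p_{\text{data}})$. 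That requires finite-entropy marginals; if $p_{\text{data}}$ lived on a low-dimensional manifold, both sides of the theorem would be $+\infty$.

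\textbf{Step 2.} Benamou--Brenier is applied to the wrong field. For $\mathrm{d}\mathbf{x}_t=\mathbf{u}_t\,\mathrm{d}t+\sqrt{\beta_t}\,\mathrm{d}\mathbf{W}_t$, the continuity-equation velocity is $\mathbf{v}_t=\mathbf{u}_t-\frac{\beta_t}{2}\nabla\log p_t$, and the Brownian example shows $\mathcal{E}\ge\mathcal{W}_2^2$ fails in general. Expanding $\|\mathbf{u}_t\|_2^2$ gives
\[
\mathcal{E}=\int_0^1\mathbb{E}\|\mathbf{v}_t\|_2^2\,\mathrm{d}t+\int_0^1\frac{\beta_t^2}{4}\,\mathbb{E}\|\nabla\log p_t\|_2^2\,\mathrm{d}t+\int_0^1\beta_t\,\frac{\mathrm{d}}{\mathrm{d}t}\Bigl(\int p_t\log p_t\Bigr)\mathrm{d}t .
\]
So $\mathcal{E}(\mathbb{Q}^{\mathrm{CDM}})\ge\mathcal{W}_2^2(\pi,p_{\text{data}}(\cdot|\mathbf{c}))$ holds only if the last term is nonnegative, e.g.\ entropy decreasing along the generative path. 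That is an extra hypothesis.

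With both repairs you would prove a weaker but correct statement: strict inequality for all sufficiently small SB noise, in expectation over $\mathbf{c}$, under finite-entropy and regularity assumptions. Assumption~\ref{ass:distribution_distance} alone does not give the theorem at a fixed noise level.
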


\textit{Proof Sketch:} The inequality follows from two independent advantages of (i) starting point and (ii) path that compound together. (i) The minimum kinetic energy of the SB is strictly monotonically increasing with respect to the $\mathcal{W}_2^2$ distance between endpoint distributions, and Assumption \ref{ass:distribution_distance} guarantees $\mathcal{W}_2^2(\mu_{\hat{s}}, p_{\text{data}}) < \mathcal{W}_2^2(\pi, p_{\text{data}})$, so the SB originating from $\mu_{\hat{s}}$ has strictly lower kinetic energy than the one originating from $\pi$. (ii) The SB achieves the global minimum of PKE among all diffusion processes sharing the same endpoint constraints, whereas the backward process of CDM is determined by a fixed, data-agnostic noise schedule (\cite{DDPM}) and is generally not PKE-optimal. Combining these two advantages yields the stated result. \textit{The complete proof is provided in Appendix \ref{appendix:theorem1}.} $\hfill\square$

Theorem \ref{theo:pke_advantage} reveals that the generative process of SBGSC benefits from both a “shorter path” (semantic encoding source information into the initial distribution rather than random noise) and a “smoother path” (optimal transport via the SB). This directly explains the empirical observation that SBGSC requires significantly fewer diffusion steps to achieve the same decoding quality.

This theoretical result leads directly to two key corollaries regarding communication performance: one on hallucination suppression and the other on computational efficiency.

\begin{definition}{\textbf{(Semantic Hallucination Rate).}}
    \label{def:semantic_hallucination_rate}
    We define the semantic hallucination rate $\mathcal{H}$ as the residual uncertainty in the generated $\hat{\mathbf{x}}$ that cannot be explained by the source $\mathbf{x}$:
    \begin{equation}
        \mathcal{H} \triangleq h(\hat{\mathbf{x}} \mid \mathbf{x}),
        \label{eq:hallucination_rate}
    \end{equation}
    where $h(\cdot|\cdot)$ denotes the conditional differential entropy.
\end{definition}

\begin{corollary}{\textbf{(Mutual Information Maximization and Hallucination Suppression).}}
    \label{cor:hallucination_suppression}
    Let $\hat{\mathbf{x}}^{\mathrm{SB}}$ and $\hat{\mathbf{x}}^{\mathrm{CDM}}$ denote the reconstructions produced by SBGSC and CDM-based GSC, respectively. Under the conditions of Theorem \ref{theo:pke_advantage}, the semantic hallucination rates satisfy
\begin{equation}
    \mathcal{H}^{\mathrm{SB}} < \mathcal{H}^{\mathrm{CDM}}.
    \label{eq:corollary1}
\end{equation}
Equivalently, SBGSC achieves a strictly higher end-to-end mutual information, i.e., $I(\mathbf{x};\hat{\mathbf{x}}^{\mathrm{SB}}) > I(\mathbf{x};\hat{\mathbf{x}}^{\mathrm{CDM}})$.
\end{corollary}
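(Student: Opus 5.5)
The plan is to pass from path kinetic energy to conditional entropy through the Gaussian structure of the diffusion term, and then to mutual information through the identity $I(\mathbf{x};\hat{\mathbf{x}}) = h(\hat{\mathbf{x}}) - h(\hat{\mathbf{x}}\mid\mathbf{x})$.

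\textbf{Step 1: fix common output marginals.} Both schemes are assumed to terminate at (approximately) the data distribution, so $p(\hat{\mathbf{x}}^{\mathrm{SB}})$ and $p(\hat{\mathbf{x}}^{\mathrm{CDM}})$ both match $p_{\text{data}}$. Hence $h(\hat{\mathbf{x}}^{\mathrm{SB}}) = h(\hat{\mathbf{x}}^{\mathrm{CDM}})$. The mutual-information claim is then equivalent to the claim $\mathcal{H}^{\mathrm{SB}} < \mathcal{H}^{\mathrm{CDM}}$, which settles the word ``equivalently'' in the statement. This step is only a modelling assumption and needs no further work.

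\textbf{Step 2: decompose the residual uncertainty.} I condition on $\mathbf{x}$ and track where randomness in $\hat{\mathbf{x}}$ comes from that $\mathbf{x}$ does not explain.

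For SBGSC, the Markov chain $\mathbf{x}\to\mathbf{s}\to\hat{\mathbf{s}}\to\hat{\mathbf{x}}$ holds, and the start point $\mathbf{x}_{t=0}=\hat{\mathbf{s}}$ already carries the information bounded by Lemma \ref{lem:semantic_capacity}. The injected randomness is then only the channel noise and the bridge diffusion term.

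For CDM, the start point is $\pi$, which is independent of $\mathbf{x}$. All information about $\mathbf{x}$ must therefore be injected through the drift guided by $\mathbf{c}$, and $\mathbf{c}$ carries no more information than the best unconstrained encoder by Lemma \ref{lem:semantic_capacity}.

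\textbf{Step 3: bound the conditional entropy by kinetic energy.} For each scheme, I compare the conditional path measure given $\mathbf{x}$ with the reference diffusion $\sigma\,\mathrm{d}\mathbf{W}_t$ started from the appropriate initial law.

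By Girsanov, $\mathrm{KL}(\mathbb{Q}\,\|\,\mathbb{W}) = \mathrm{KL}(\text{initial laws}) + \frac{1}{2\sigma^2}\mathcal{E}(\mathbb{Q})$. Applying the data processing inequality to the time-1 marginals then gives an upper bound on $h(\hat{\mathbf{x}}\mid\mathbf{x})$ of the form $h_0 - \frac{1}{2\sigma^2}\mathcal{E}$. Here $h_0$ is the entropy of the reference endpoint.

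This Girsanov bound is the wrong direction for what we need. So the actual route I would take is to read $\mathcal{E}$ as the ``transport work'' that a scheme must spend converting its initial randomness into the target. A path with more kinetic energy has propagated more $\mathbf{x}$-independent initial randomness, namely the Gaussian draw from $\pi$, into the output. Concretely, I would lower-bound $\mathcal{H}^{\mathrm{CDM}}$ by $h(\hat{\mathbf{x}}^{\mathrm{CDM}}\mid\mathbf{x},\mathbf{c})$. This is at least the entropy contributed by the independent start draw, transported through the backward flow, whose residual spread scales with $\mathcal{W}_2^2(\pi,p_{\text{data}})$. By the same transport argument, $\mathcal{H}^{\mathrm{SB}}$ is controlled by $\mathcal{W}_2^2(\mu_{\hat{s}},p_{\text{data}})$, since the minimal SB energy is monotone in the endpoint distance. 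Combining Assumption \ref{ass:distribution_distance} with Theorem \ref{theo:pke_advantage} then yields the strict inequality.

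\textbf{Main obstacle.} The hard step is Step 3: converting a kinetic-energy gap into a strict gap in conditional differential entropy. The standard information inequalities either run the wrong way or give only non-strict bounds.

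My first attempt would be a Gaussian-surrogate comparison. I would replace each endpoint pair by Gaussians with matched second moments, where both the SB energy and the conditional entropy have closed forms in terms of covariances. In that setting, a strictly smaller $\mathcal{W}_2$ to the target gives a strictly smaller log-determinant of the conditional covariance. I would then argue that the maximum-entropy property of the Gaussian preserves the ordering for the actual schemes.

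If this fails, I would fall back on an explicit regularity assumption linking $h(\hat{\mathbf{x}}\mid\mathbf{x})$ monotonically to $\mathcal{E}$, stated alongside Assumption \ref{ass:distribution_distance}.
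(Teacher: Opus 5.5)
There is a genuine gap. It sits where you suspect, but it is more than an unfinished step: the route you propose for Step 3 cannot be completed. Every quantity you use to control $\mathcal{H}$ --- $\mathcal{W}_2^2(\mu_{\hat{s}},p_{\text{data}})$, $\mathcal{W}_2^2(\pi,p_{\text{data}})$, and $\mathcal{E}(\mathbb{Q}^{\mathrm{SB}}(\mu_{\hat{s}},p_{\text{data}}))$ (the bridge is fixed by its endpoint marginals and the reference measure) --- is a functional of marginals. By contrast, $\mathcal{H}=h(\hat{\mathbf{x}}\mid\mathbf{x})$ and $I(\mathbf{x};\hat{\mathbf{x}})$ depend on how $\hat{\mathbf{s}}$ is coupled to $\mathbf{x}$.

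Take a constant encoder, so that $\hat{\mathbf{s}}$ is independent of $\mathbf{x}$, and let $\mathcal{D}_\theta$ push the received noise close to $p_{\text{data}}$. Assumption~\ref{ass:distribution_distance} holds and Theorem~\ref{theo:pke_advantage} applies, since both see only $\mu_{\hat{s}}$. Yet $I(\mathbf{x};\hat{\mathbf{x}}^{\mathrm{SB}})\le I(\mathbf{x};\hat{\mathbf{s}})=0$, so the strict mutual-information inequality is impossible; under your own Step 1 this reads $\mathcal{H}^{\mathrm{SB}}=h(p_{\text{data}})\ge\mathcal{H}^{\mathrm{CDM}}$. Hence no argument that, like the last sentence of your Step 3, combines only Assumption~\ref{ass:distribution_distance} with Theorem~\ref{theo:pke_advantage} can give the corollary. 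Some hypothesis on how informative the deployed $\hat{\mathbf{s}}$ is about $\mathbf{x}$ is indispensable (at the very least $I(\mathbf{x};\hat{\mathbf{s}})>I(\mathbf{x};\hat{\mathbf{x}}^{\mathrm{CDM}})$). Lemma~\ref{lem:semantic_capacity} does not supply it: it compares suprema over encoder families, is non-strict, and ignores the channel. The same example shows that your fallback assumption, tying $h(\hat{\mathbf{x}}\mid\mathbf{x})$ monotonically to $\mathcal{E}$, is not a regularity condition but a disguised form of this missing hypothesis.

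The individual tools fail as well:
\begin{itemize}
\item Conditioning on $\mathbf{c}$ is vacuous: $\mathbf{c}$ is a function of $\mathbf{x}$, so $h(\hat{\mathbf{x}}^{\mathrm{CDM}}\mid\mathbf{x},\mathbf{c})=\mathcal{H}^{\mathrm{CDM}}$. The useful move is to condition on the received $\hat{\mathbf{c}}$. Given $\hat{\mathbf{c}}$, the independent Gaussian seed makes $\hat{\mathbf{x}}^{\mathrm{CDM}}$ independent of $\mathbf{x}$, whence $\mathcal{H}^{\mathrm{CDM}}\ge h(\hat{\mathbf{x}}^{\mathrm{CDM}}\mid\hat{\mathbf{c}})$.
\item The Gaussian surrogate cannot transfer any ordering. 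The maximum-entropy property only bounds entropies from above, so it never lower-bounds $\mathcal{H}^{\mathrm{CDM}}$.
\item The Girsanov/data-processing estimate you wrote down is actually a lower bound on the entropy, and it weakens as $\mathcal{E}$ grows. Large kinetic energy permits concentration, which is the opposite of your ``transport work'' heuristic.
\end{itemize}

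The paper argues differently and does not route this step through Assumption~\ref{ass:distribution_distance}:
\begin{itemize}
\item It uses the $\hat{\mathbf{c}}$-conditional independence to get $\mathcal{H}^{\mathrm{CDM}}\ge h(\hat{\mathbf{x}}^{\mathrm{CDM}}\mid\hat{\mathbf{c}})$, which it claims retains the seed entropy $\frac{D}{2}\log(2\pi e)$.
\item It splits $\mathcal{H}^{\mathrm{SB}}\le h(\mathbf{x}_0\mid\mathbf{x})+h(\hat{\mathbf{x}}^{\mathrm{SB}}\mid\mathbf{x}_0)$, bounding the first term by the $d$-dimensional channel-noise entropy and the second by $\frac{1}{2}\mathcal{E}(\mathbb{Q}^{\mathrm{SB}})$.
\item It gets strictness from the dimension count $D\gg d$.
\item It reaches mutual information via $h(\hat{\mathbf{x}}^{\mathrm{SB}})=h(p_{\text{data}})\ge h(\hat{\mathbf{x}}^{\mathrm{CDM}})$. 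This yields only the implication from the $\mathcal{H}$ gap to the mutual-information gap, whereas your common-marginal assumption makes the two equivalent.
\end{itemize}

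Do not import that route as a patch, though: the constant-encoder example defeats it too, and one can see where.
\begin{itemize}
\item Inequality \eqref{eq:D.22} presumes $h(\mathbf{x}_0\mid\hat{\mathbf{x}}^{\mathrm{SB}},\mathbf{x})\ge 0$. This is false for differential entropy: for Lipschitz $\mathcal{D}_\theta$ and $d<D$ it equals $-\infty$, since $\mathbf{x}_0$ is confined to the image of $\mathcal{D}_\theta:\mathbb{R}^d\to\mathbb{R}^D$.
\item Inequality \eqref{eq:D.27} fails for the bridge from $\mu_0$ to $\mu_0\ast\mathcal{N}(\mathbf{0},\sigma^2\mathbf{I}_D)$. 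That bridge is the zero-drift reference process itself, with $\mathcal{E}=0$ but $h(\mathbf{x}_1\mid\mathbf{x}_0)=\frac{D}{2}\log(2\pi e\sigma^2)>0$ once $2\pi e\sigma^2>1$.
\item Inequality \eqref{eq:D.23} gives the same bound for every deterministic encoder, so the informativeness of $\hat{\mathbf{s}}$ never enters.
\end{itemize}
A correct version needs an information hypothesis such as $I(\mathbf{x};\hat{\mathbf{x}}^{\mathrm{SB}})>I(\mathbf{x};\hat{\mathbf{c}})$. Combined with the data-processing bound $I(\mathbf{x};\hat{\mathbf{x}}^{\mathrm{CDM}})\le I(\mathbf{x};\hat{\mathbf{c}})$ and your Step 1, it yields both claims, and the kinetic-energy comparison plays no role.
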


\textit{Proof Sketch:} The proof can be completed by combining (i) the CDM side and (ii) the SB side.
(i) The reverse SDE is initialized from an independent Gaussian sample $\boldsymbol{\xi} \sim \pi$ that carries no information about $\mathbf{x}$. Consequently, conditioned on the received semantic descriptor $\hat{\mathbf{c}}$, the reconstruction $\hat{\mathbf{x}}^{\mathrm{CDM}}$ is conditionally independent of $\mathbf{x}$, i.e., $I(\mathbf{x}; \hat{\mathbf{x}}^{\mathrm{CDM}} \mid \hat{\mathbf{c}}) = 0$. This forces the hallucination rate to absorb the full conditional generative entropy: $\mathcal{H}^{\mathrm{CDM}} \geq h(\hat{\mathbf{x}}^{\mathrm{CDM}} \mid \hat{\mathbf{c}})$, which includes the high-dimensional initial Gaussian entropy $h(\boldsymbol{\xi}) = \frac{D}{2}\log(2\pi e)$. (ii) The SB is initialized from $\mathbf{x}_0 = \mathcal{D}_\theta(\hat{\mathbf{s}})$, a deterministic function of the received signal $\hat{\mathbf{s}}$ that already encodes source information. The endpoint coupling $\gamma^{\mathrm{SB}}$ preserves this statistical dependence, and the conditional transport entropy $h(\hat{\mathbf{x}}^{\mathrm{SB}} \mid \mathbf{x}_0)$ is controlled by the PKE $\mathcal{E}(\mathbb{Q}^{\mathrm{SB}})$, which is strictly smaller than $\mathcal{E}(\mathbb{Q}^{\mathrm{CDM}})$ by Theorem \ref{theo:pke_advantage}. Combining these two effects yields $\mathcal{H}^{\mathrm{SB}} < \mathcal{H}^{\mathrm{CDM}}$. \textit{The complete proof is provided in Appendix \ref{appendix:corollary1}.} $\hfill\square$

Corollary \ref{cor:hallucination_suppression} provides an information-theoretic justification for the empirically observed hallucination suppression capability of SBGSC. Since the SB couples the source-informed starting point $\mathbf{x}_0$ with the target data distribution $p_{\text{data}}$ via an entropy-regularized optimal transport plan, it generates content that is maximally consistent with both the received semantic features and the data prior, thereby minimizing hallucinations.

\begin{corollary}{\textbf{(Computational Efficiency and Sampling Step Lower Bound).}}
    \label{cor:computational_efficiency}
    Suppose both schemes discretize their respective generative SDEs using the Euler-Maruyama method with $N$ uniform steps. For a given tolerance $\varepsilon > 0$, define the $\varepsilon$-admissible generation error as the condition $\mathcal{W}_2(\tilde{\mu}_1^{(N)}, \mu_1) \leq \varepsilon$, where $\tilde{\mu}_1^{(N)}$ is the terminal distribution of the $N$-step Euler-Maruyama discretization, and $\mu_1$ is the target distribution. Let $N_{\mathrm{SB}}^*(\varepsilon)$ and $N_{\mathrm{CDM}}^*(\varepsilon)$ denote the minimum number of sampling steps (equivalently, Neural Function Evaluations, NFEs) required by SBGSC and CDM-based GSC, respectively, to achieve $\varepsilon$-admissible error. Then, under the conditions of Theorem \ref{theo:pke_advantage},
    \begin{equation}
        N_{\mathrm{SB}}^*(\varepsilon) < N_{\mathrm{CDM}}^*(\varepsilon),
        \label{eq:corollary2}
    \end{equation}
    i.e., SBGSC attains the same generation accuracy with strictly fewer NFEs.
\end{corollary}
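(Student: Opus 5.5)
The plan is to bound the Euler--Maruyama terminal error by a quantity that grows with the PKE of the path. Theorem \ref{theo:pke_advantage} then makes that bound strictly smaller for SBGSC, and inverting it in $N$ gives the step ordering.

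\textbf{Step 1: a PKE-dependent error bound.} Write each generative SDE as $\mathrm{d}\mathbf{x}_t=\mathbf{u}_t(\mathbf{x}_t)\,\mathrm{d}t+\sigma\,\mathrm{d}\mathbf{W}_t$. Assume the drift is $L$-Lipschitz in space and satisfies a time-regularity condition of the form $\mathbb{E}\|\mathbf{u}_t(\mathbf{x}_t)-\mathbf{u}_s(\mathbf{x}_s)\|^2\lesssim |t-s|\,(\mathcal{E}+\sigma^2 D)$. Couple the discretized chain with the exact process through the same Brownian increments. A standard Grönwall argument on the local errors then yields
\begin{equation*}
\mathcal{W}_2\bigl(\tilde{\mu}_1^{(N)},\mu_1\bigr)\le \Phi(N,\mathcal{E}) := \frac{C e^{L}}{\sqrt{N}}\,\sqrt{\mathcal{E}+\sigma^2 D},
\end{equation*}
where the PKE $\mathcal{E}$ enters through $\sum_k\mathbb{E}\int_{t_k}^{t_{k+1}}\|\mathbf{u}_t-\mathbf{u}_{t_k}\|^2\,\mathrm{d}t$.

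\textbf{Step 2: invert in $N$.} The map $\Phi$ is strictly increasing in $\mathcal{E}$ and strictly decreasing in $N$. The sufficient step count is therefore
\begin{equation*}
\bar N(\varepsilon,\mathcal{E})=\Bigl\lceil C^2e^{2L}(\mathcal{E}+\sigma^2D)/\varepsilon^2\Bigr\rceil,
\end{equation*}
which is nondecreasing in $\mathcal{E}$.

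\textbf{Step 3: compare the two schemes.} By Theorem \ref{theo:pke_advantage}, $\mathcal{E}(\mathbb{Q}^{\mathrm{SB}})<\mathcal{E}(\mathbb{Q}^{\mathrm{CDM}})$. Substituting into Step 2 gives $\bar N_{\mathrm{SB}}\le\bar N_{\mathrm{CDM}}$.

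\textbf{Main obstacle: closing the gap between bound and true optimum.} The statement concerns the exact minima $N^*$, not upper bounds, and it requires strict inequality. I see two ways to handle this.
\begin{itemize}
\item Replace $N^*$ by the bound-achieving step counts $\bar N$, i.e.\ interpret $N^*$ relative to the error certificate $\Phi$. Strictness then holds whenever the PKE gap is at least $\varepsilon^2/(C^2e^{2L})$, so that the ceilings differ. This restricts the range of $\varepsilon$ unless $\varepsilon$ is small enough that the ceiling granularity becomes negligible.
\item Prove a matching lower bound of the form $\mathcal{W}_2\ge c\sqrt{\mathcal{E}_{\mathrm{curv}}}/N$ for the CDM path. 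Its non-optimal, curved drift produces an irreducible first-order discretization bias, whereas the SB drift is close to a martingale/straight interpolation and has smaller bias. This is the route I would try first to obtain strictness for all sufficiently small $\varepsilon$.
\end{itemize}
A further point to settle is that the constants $L$, $C$ and $\sigma$ must be taken uniformly across both schemes. I would assume this, with the same noise level and a shared Lipschitz class, so that the comparison is driven purely by the PKE.
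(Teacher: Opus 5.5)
Your proposal follows the paper's proof essentially step for step. Lemmas~\ref{lem:E.1}--\ref{lem:E.3} carry out your Step~1: a synchronous coupling, a discrete Gr\"onwall recursion, and the PKE entering through $\mathbb{E}\|\mathbf{x}_t-\mathbf{x}_{t_k}\|_2^2$. This gives $C_{\mathcal{S}}^2=\alpha_D\,\mathcal{E}(\mathbb{Q}^{\mathcal{S}})+\beta_D\bar{\sigma}^2$, with the shared constant $L=\max(L_{\mathrm{SB}},L_{\mathrm{CDM}})$ that you also assume. The paper then inverts the bound and compares ceilings exactly as in your Steps~2--3.

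On your ``main obstacle'' the paper takes your first option, and says so less openly than you do. Definition~\ref{def:E.3} defines $N^*_{\mathcal{S}}(\varepsilon)$ as the true minimum, but \eqref{eq:E.26} simply sets it equal to $\lceil C_{\mathcal{S}}^2/\varepsilon^2\rceil$, which is only an upper bound on that minimum. Strictness then comes from the condition $\varepsilon^2<\alpha_D\,\Delta\mathcal{E}$ in \eqref{eq:E.32}, which is your ceiling condition. So with option~1 your argument is exactly as complete as the paper's. Neither proves the literal claim about true minima for every $\varepsilon>0$. That claim in fact fails for large $\varepsilon$: once $\varepsilon$ exceeds both one-step errors, both minima equal $1$.

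Your second option would not close the gap as you have formulated it.
\begin{itemize}
\item A CDM lower bound of order $1/N$, set against your SB upper bound of order $1/\sqrt{N}$, gives $N^*_{\mathrm{CDM}}\gtrsim 1/\varepsilon$ versus $N^*_{\mathrm{SB}}\lesssim 1/\varepsilon^2$. This points the wrong way precisely as $\varepsilon\to 0$.
\item To repair it you would need an SB upper bound of the same order $1/N$. This is available for additive noise and $C^2$ drift, where Euler--Maruyama coincides with Milstein. You would also need an ordering of the two leading constants.
\item Those constants depend on the spatial Jacobian and time derivative of the drift along the path, not on its PKE. A constant drift $\mathbf{u}_t\equiv\mathbf{a}$ has PKE $\|\mathbf{a}\|_2^2$, which can be arbitrarily large, yet a single Euler--Maruyama step with shared increments is exact.
\item The same issue appears inside your own Gr\"onwall bound. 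Bound $\int_{t_k}^{t}\mathbb{E}\|\mathbf{u}_s(\mathbf{x}_s)\|_2^2\,\mathrm{d}s$ by the local PKE on $[t_k,t_{k+1}]$ rather than by the global $\mathcal{E}$. Then $\mathcal{E}$ enters the squared error only at order $N^{-2}$, behind an $L^2D\sigma^2/N$ term that is common to both schemes under your uniform-constant assumption.
\item With that tighter certificate, the two certificate step counts differ by roughly a constant that does not grow as $\varepsilon\to0$.
\end{itemize}
Consequently, the $\varepsilon\to 0$ strictness in option~1 (and in the paper) is an artifact of the loose certificate, not a statement about true step counts. Moreover, the quantity $\mathcal{E}_{\mathrm{curv}}$ that option~2 needs is not ordered by Theorem~\ref{theo:pke_advantage}. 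Option~2 therefore requires a genuinely new comparison result, beyond the conditions of that theorem.
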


\textit{Proof Sketch:} The key idea is to show that a lower PKE leads to a smaller discretization error constant, which in turn reduces the NFEs needed to meet any given accuracy target. Under standard Lipschitz regularity assumptions on the drift fields, the Euler-Maruyama scheme achieves strong convergence of order $1/2$, yielding $\mathcal{W}_2(\tilde{\mu}_1^{(N)}, \mu_1) \leq C_{\mathcal{S}} / \sqrt{N}$, where $C_{\mathcal{S}}^2 = \alpha_D \mathcal{E}(\mathbb{Q}^{\mathcal{S}}) + \beta_D \bar{\sigma}^2$ with $\alpha_D, \beta_D > 0$ depending only on the dimension and Lipschitz constants, and $\bar{\sigma}$ being the uniform upper bound on the diffusion coefficient. Inverting this bound gives $N_{\mathcal{S}}^*(\varepsilon) = \lceil C_{\mathcal{S}}^2 / \varepsilon^2 \rceil$. Since $C_{\mathcal{S}}^2$ is monotonically increasing in the PKE $\mathcal{E}(\mathbb{Q}^{\mathcal{S}})$, the strict inequality $\mathcal{E}(\mathbb{Q}^{\mathrm{SB}}) < \mathcal{E}(\mathbb{Q}^{\mathrm{CDM}})$ established in Theorem \ref{theo:pke_advantage} directly implies $N_{\mathrm{SB}}^*(\varepsilon) < N_{\mathrm{CDM}}^*(\varepsilon)$ for all sufficiently small $\varepsilon$. \textit{The complete proof is provided in Appendix \ref{appendix:corollary2}.} $\hfill\square$

\begin{figure}[!t]
    \centering
    \includegraphics[width=1\linewidth]{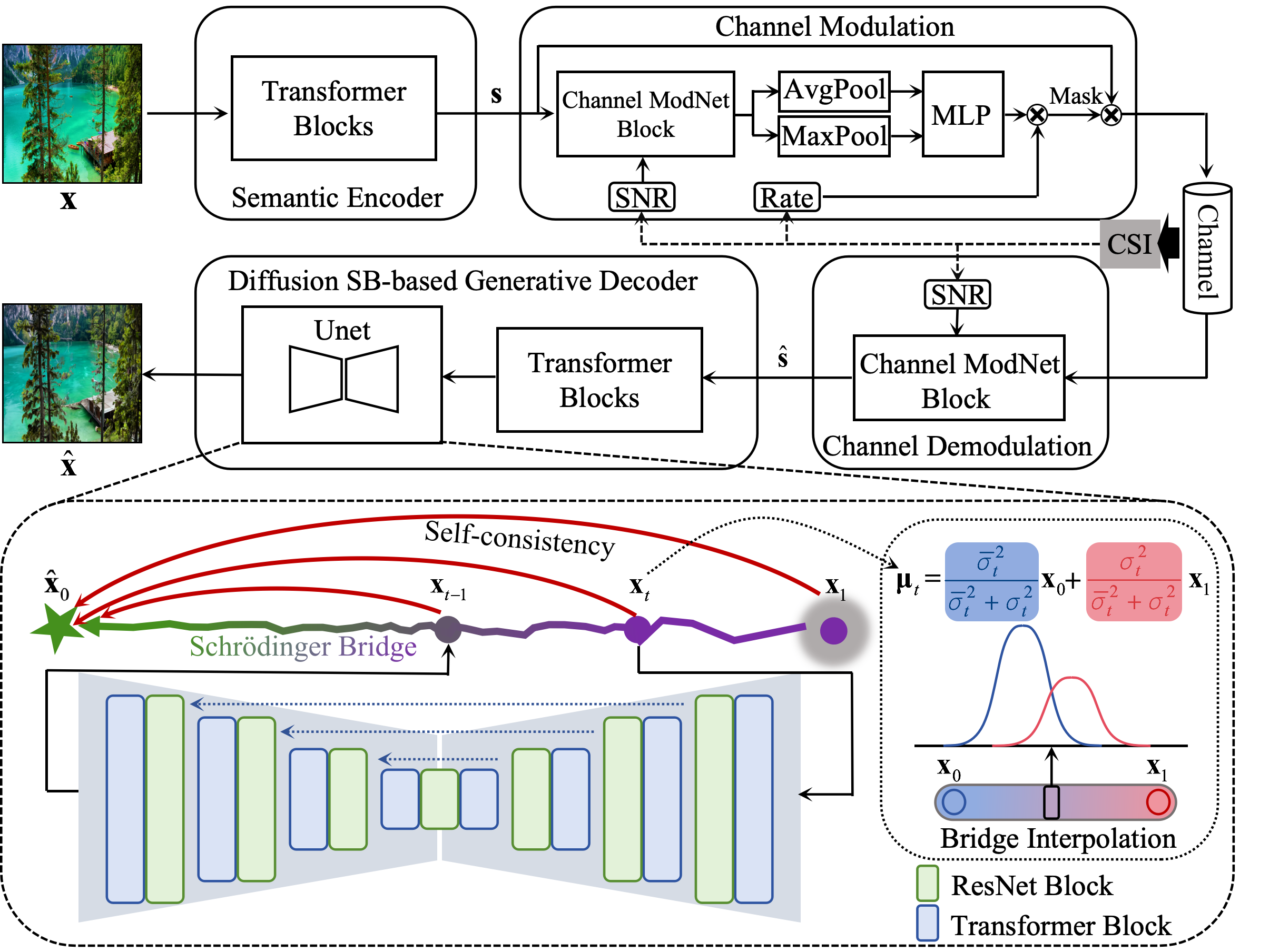}
    \caption{The overall architecture of proposed DSBGSC.The optimal transfer from semantic distribution to data distribution is achieved via the least action principle of DSB. Meanwhile, the self-consistency property is incorporated, enabling any intermediate state to directly map to the source image for efficient few-step semantic perception.}
    \label{fig3:DSBGSC framework}
\end{figure}

\section{Proposed DSBGSC Method}
\label{sec:DSBGSC_method}
This section introduces our proposed generative semantic communication method DSBGSC based on DSB\cite{chen_likelihood_2022}.

\subsection{DSBGSC method overview}
As illustrated in Fig. \ref{fig3:DSBGSC framework}, the proposed DSBGSC consists of four components: semantic encoder, channel modulation module, channel demodulation module, and a DSB-based decoder.

At the transmitter, given an transmitted image $\mathbf{x}$, we choose a Swin Transformer-based architecture \cite{yang_swinjscc_2025}. By utilizing a shift-window self-attention mechanism, it can effectively capture local fine-grained features and global semantic dependencies. This process effectively eliminates spatial redundancy and extracts compact, critical semantic representations $\mathbf{s}$. To adapt to dynamic channel state information (CSI) and bandwidth constraints, this module performs SNR adaptation (i.e., Channel ModNet Block \cite{yang_swinjscc_2025}) and rate control. It dynamically scales features based on channel environments and utilizes a multilayer perceptron (MLP)-based channel attention mechanism for selecting the top-$k$ features, effectively pruning redundant channels while maintaining semantic robustness. At the receiver, this module recalibrates and aligns impaired semantic features using the SNR. This operation effectively compensates for channel fading, providing a robust and accurate semantic representation $\hat{\mathbf{s}}$ for subsequent decoding. Finally, to address the inverse problem of semantic perception, the decoder concatenates low-dimensional semantic features with a high-dimensional image. It projects the received semantic features into a destroyed semantic distribution, and then obtains the received image $\hat{\mathbf{x}}$ through the DSB.

\subsection{DSB-based generative decoder}

At the receiver, the decoding process must not only recover the pixel-level content but also achieve high-quality semantic perception while strictly preserving semantic consistency. To this end, we propose a generative decoder based on the DSB.

First, this decoder bridges the low-dimensional semantic features and the high-dimensional image. Specifically, the received features are projected into the image space to form a destroyed yet semantically consistent distribution. Crucially, this distribution serves as the boundary distribution, guiding the subsequent generative process.

To ensure tractability within the diffusion model framework, we reformulate the backward process of the SB \eqref{eq:sb_sde_backward} as a modified SDE:

\begin{equation}
\small
\mathrm{d}\mathbf{x}_t = \left[ \mathbf{f}_t + \beta_t \nabla \log \Psi(\mathbf{x}_t) - \beta_t \nabla \log q_t \right] \mathrm{d}t + \sqrt{\beta_t} \, \mathrm{d}\bar{\mathbf{w}}_t,
\label{eq:reverse_sde}
\end{equation}
where $\nabla \log q_t$ represents the score function of the marginal distribution of the intermediate states. The Nelson duality relation $\Psi(\mathbf{x}_t) \hat{\Psi}(\mathbf{x}_t) = q_t(\mathbf{x}_t)$, bridges the forward and backward processes, ensuring the consistency of their marginal distributions. Furthermore, $\nabla \log \Psi(\mathbf{x}_t, t)$ serves as the crucial nonlinear correction term introduced by the SB. It acts as a global nonlinear velocity field that counteracts the linear drift towards Gaussian noise, effectively pulling the diffusion trajectory back to the data distribution.

The linear drift term directed towards a zero mean, commonly used in standard Score-Based Generative Models (SGMs), is no longer applicable here because it ignores the structural correlation between the target state $\mathbf{x}_1$ and the start state $\mathbf{x}_0$. Moreover, solving for $\mathbf{f}(\mathbf{x}_t, t)$ presents significant numerical challenges due to its nonlinear coupling with the Schrödinger potential. To construct an efficient, analytically tractable, and controllable bridging process, we adopt the zero-drift assumption\cite{I2SB}, which sets the drift term of the reference process to $\mathbf{f}(\mathbf{x}_t, t) \equiv 0$. This implies that the reference process is a martingale, i.e., $\mathbb{E}[\mathbf{x}_t | \mathbf{x}_0] = \mathbf{x}_0$. Without the intervention of the Schrödinger potential, the signal $\mathbf{x}_0$ undergoes pure diffusion without mean reversion or amplitude attenuation. Consequently, the final transport trajectory is entirely shaped by the potential term $\nabla \log \Psi$, which ensures that the learned path corresponds to the shortest path in the Euclidean space.

Under this zero-drift assumption, the intractable bridge posterior distribution $q(\mathbf{x}_t | \mathbf{x}_0, \mathbf{x}_1)$ reduces to an analytical Gaussian form. Specifically, both the forward process $p(\mathbf{x}_t | \mathbf{x}_0)$ and the process from the current state to the target state $p(\mathbf{x}_1 | \mathbf{x}_t)$ are governed by the properties of Brownian motion, manifesting as Gaussian distributions:
\begin{align}
p(\mathbf{x}_t | \mathbf{x}_0) &= \mathcal{N}\left( \mathbf{x}_t; \mathbf{x}_0, \sigma_t^2 \mathbf{I} \right), \label{eq:forward_trans} \\
p(\mathbf{x}_1 | \mathbf{x}_t) &= \mathcal{N}\left( \mathbf{x}_1; \mathbf{x}_t, \bar{\sigma}_t^2 \mathbf{I} \right), \label{eq:backward_trans}
\end{align}
where \(\sigma_t^2 \triangleq \int_0^t \beta_\tau d\tau\) and \(\bar{\sigma}_t^2 \triangleq \int_t^1 \beta_\tau d\tau\) denote the cumulative variances from \(\mathbf{x}_0\) to \(\mathbf{x}_t\) and from \(\mathbf{x}_t\) to \(\mathbf{x}_1\), respectively. 

\begin{algorithm}[!t]
  \caption{Training}
  \label{alg:training}
  \begin{algorithmic}[1]
    \vspace{0.2em}
    \State \textbf{Input:} Data distribution \( p(\mathbf{x}) \), semantic distribution \( p(\mathbf{s}) \), learning rate $\eta$
    \Repeat
    \State Sample \( t \sim \mathcal{U}([0, 1]) \),\( \mathbf{x}_0 \sim p(\mathbf{x}) \), \( \mathbf{x}_1 \sim p(\mathbf{s}) \)
    \State Sample \( \mathbf{x}_t \sim q(\mathbf{x}_t | \mathbf{x}_0, \mathbf{x}_1) \) 
    \State  \( \theta \leftarrow \theta - \eta \nabla_\theta \left( \left\| \boldsymbol{\epsilon}_\theta(\mathbf{x}_t, t) - \frac{\mathbf{x}_t - \mathbf{x}_0}{\sigma_t} \right \|^2 \right) \) 

\Until{Convergence}
  \end{algorithmic}
\end{algorithm}

According to Bayes' theorem, the bridge posterior $q(\mathbf{x}_t | \mathbf{x}_0, \mathbf{x}_1)$ is proportional to the product of these two marginal distributions. Using the product property of Gaussians, this posterior can be analytically derived as a new Gaussian distribution $\mathcal{N}(\mathbf{x}_t; \boldsymbol{\mu}_t, \Sigma_t \mathbf{I})$. Specifically, its mean $\boldsymbol{\mu}_t$ and variance $\Sigma_t$ are determined by:
\begin{equation}
\boldsymbol{\mu}_t = \frac{\bar{\sigma}_t^2}{\sigma_t^2 + \bar{\sigma}_t^2} \mathbf{x}_0 + \frac{\sigma_t^2}{\sigma_t^2 + \bar{\sigma}_t^2} \mathbf{x}_1, \quad \Sigma_t = \frac{\sigma_t^2 \bar{\sigma}_t^2}{\sigma_t^2 + \bar{\sigma}_t^2}.
\end{equation}

For practical sampling and training, $\mathbf{x}_t$ can be expressed using the reparameterization trick:
\begin{equation}
\mathbf{x}_t = \boldsymbol{\mu}_t + \sqrt{\Sigma_t} \boldsymbol{\epsilon}, \quad \boldsymbol{\epsilon} \sim \mathcal{N}(\mathbf{0}, \mathbf{I}).
\end{equation}
This structure implies that $\mathbf{x}_t$ lies on the linear interpolation $\boldsymbol{\mu}_t$ between the source data and semantic features, perturbed solely by the bridge process uncertainty $\sqrt{\Sigma_t} \boldsymbol{\epsilon}$. This aligns perfectly with the self-consistency principle proposed in Consistency Models (CMs)\cite{CM}, which posits that any state $\mathbf{x}_t$ along the same generative trajectory should contain sufficient structural information to map directly back to its initial state $\mathbf{x}_0$. We leverage this principle to solve the Schrödinger Bridge. Since SB aims to find the optimal transport path of “least action”, enforcing self-consistency naturally constrains the generative trajectory to be the most direct and unswerving route connecting the semantic and data distributions.
To achieve this consistent transport, we require the learned drift term of the diffusion process to possess a specific geometric property: at any time step $t$, the driving force generated by the potential gradient $\nabla \log \Psi$ must explicitly point towards the target state $\mathbf{x}_0$. Mathematically, while this potential gradient is conventionally parameterized by the noise predictor $\boldsymbol{\epsilon}_\theta$, the self-consistency principle redefines it as a direct displacement vector, i.e., $\nabla \log \Psi \propto \boldsymbol{\epsilon}_\theta(\mathbf{x}_t, t) \approx \frac{\mathbf{x}_t - \mathbf{x}_0}{\sigma_t}$. Based on this consistency theory, we can geometrically define the optimal prediction direction for the network, which represents the shortest tangent vector along the SB optimal transport geodesic pointing towards the data distribution.

\begin{algorithm}[!t]
  \caption{Sampling}
  \label{alg:sampling}
  \begin{algorithmic}[1]
    \vspace{0.2em}
   \State \textbf{Input:} Input \( \mathbf{x}_1 \), number of steps \( N \)
    \State Initialize \( t \leftarrow 1 \), \( \mathbf{x}_t \leftarrow \mathbf{x}_1 \)
    \For{$k = 1$ to $N$}
        \State \( \hat{\mathbf{x}}_0 = \mathbf{x}_t - \sigma_t \boldsymbol{\epsilon}_\theta(\mathbf{x}_t, t) \)
        \State \( t \leftarrow t - 1/N \)
        \State Sample \( \mathbf{x}_t \sim q(\mathbf{x}_t | \hat{\mathbf{x}}_0, \mathbf{x}_{t+1/N}) \)
    \EndFor
    \State \textbf{Output:} \( \hat{\mathbf{x}}_0 \)
  \end{algorithmic}
\end{algorithm}

Guided by this geometric insight, rather than directly estimating the source data $\mathbf{x}_0$, we train a neural network $\boldsymbol{\epsilon}_\theta(\mathbf{x}_t, t)$ to predict the normalized direction (or effective noise) from $\mathbf{x}_0$ to $\mathbf{x}_t$. According to the forward transition $p(\mathbf{x}_t | \mathbf{x}_0)$, the exact noise component injected into $\mathbf{x}_t$ is given by $\frac{\mathbf{x}_t - \mathbf{x}_0}{\sigma_t}$. Therefore, we define the training objective as:
\begin{equation}
\mathcal{L}_{SB} = \mathbb{E}_{t,\mathbf{x}_0, \mathbf{x}_t}
\left[ \left| \boldsymbol{\epsilon}_\theta(\mathbf{x}_t, t) - \frac{\mathbf{x}_t - \mathbf{x}_0}{\sigma_t} \right|^2 \right].
\label{eq:loss_sb}
\end{equation}
Although formulated in the noise prediction space, this objective fundamentally enforces the same self-consistency, as the source data can be implicitly recovered via $\mathbf{x}_0 \approx \mathbf{x}_t - \sigma_t \boldsymbol{\epsilon}_\theta(\mathbf{x}_t, t)$. By minimizing this objective function, we compel the network to adhere to this consistency at every time step. Consequently, the network learns to implicitly define a nonlinear velocity field directed towards the data distribution. This ensures that during the inference phase, even when starting from a general semantic, the generated trajectory can precisely converge to the data distribution along the consistent direction.

Upon completion of training Algorithm \ref{alg:training}, the decoder $f_\theta$ serves as a drift estimator for the backward SDE. Leveraging the established self-consistency property, we adopt an accelerated sampling strategy Algorithm \ref{alg:sampling} akin to that of CMs. Rather than relying on small step-size Euler-Maruyama integration, this strategy executes the generative process through the iterative refinement of the estimate for $\mathbf{x}_0$.

At each diffusion sampling step $t$, we derive the reconstruction estimate $\hat{\mathbf{x}}_0$ directly from the decoder output $f_\theta(\mathbf{x}_t, t)$. Subsequently, we sample the next state $\mathbf{x}_{t-1}$ from the bridge posterior distribution $q(\mathbf{x}_{t-1} | \hat{\mathbf{x}}_0, \mathbf{x}_t)$. This consistent bridge interpolation mechanism enables semantic perception quality with an extremely low NFEs.

\section{Experimental Settings and Results}
\label{sec:simulation}
\subsection{Experimental Setup}
We performed parallel training on four NVIDIA 5090 GPUs. To emulate practical wireless environments, both AWGN and Rayleigh fading channels are integrated into the training pipeline. We adopt a multi-stage training paradigm to optimize the proposed DSBGSC. Initially, to establish a robust baseline for semantic perception quality, the SwinJSCC backbone is pre-trained at a fixed SNR of 10 dB. During this phase, the model is trained for 300 epochs with a batch size of 64, utilizing the AdamW optimizer with an initial learning rate of $1 \times 10^{-4}$ and a cosine annealing scheduler.
Building upon the pre-trained backbone, the second stage introduces rate and channel adaptability. We define the discrete set of available CBRs as $\{1/192, 1/96, 1/48, 1/24, 1/16, 1/12, 1/8\}$. By randomly sampling a CBR from this set during training, the model learns to dynamically accommodate varying transmission rates. Simultaneously, the channel SNR is uniformly sampled from $-13$ dB to $13$ dB, significantly improving the model's robustness against dynamic noise conditions. Following this, we freeze the parameters of the SwinJSCC backbone to isolate and optimize the Diffusion SB-based generative decoder. This provides the decoder with stable semantic representations. Upon convergence of the decoder, the entire network undergoes an end-to-end joint fine-tuning. During this final stage, the learning rate is decayed to $1 \times 10^{-5}$ to ensure stable convergence and maximize the overall system performance.

\begin{figure*}[!t]
\centering
\includegraphics[width=0.9\textwidth]{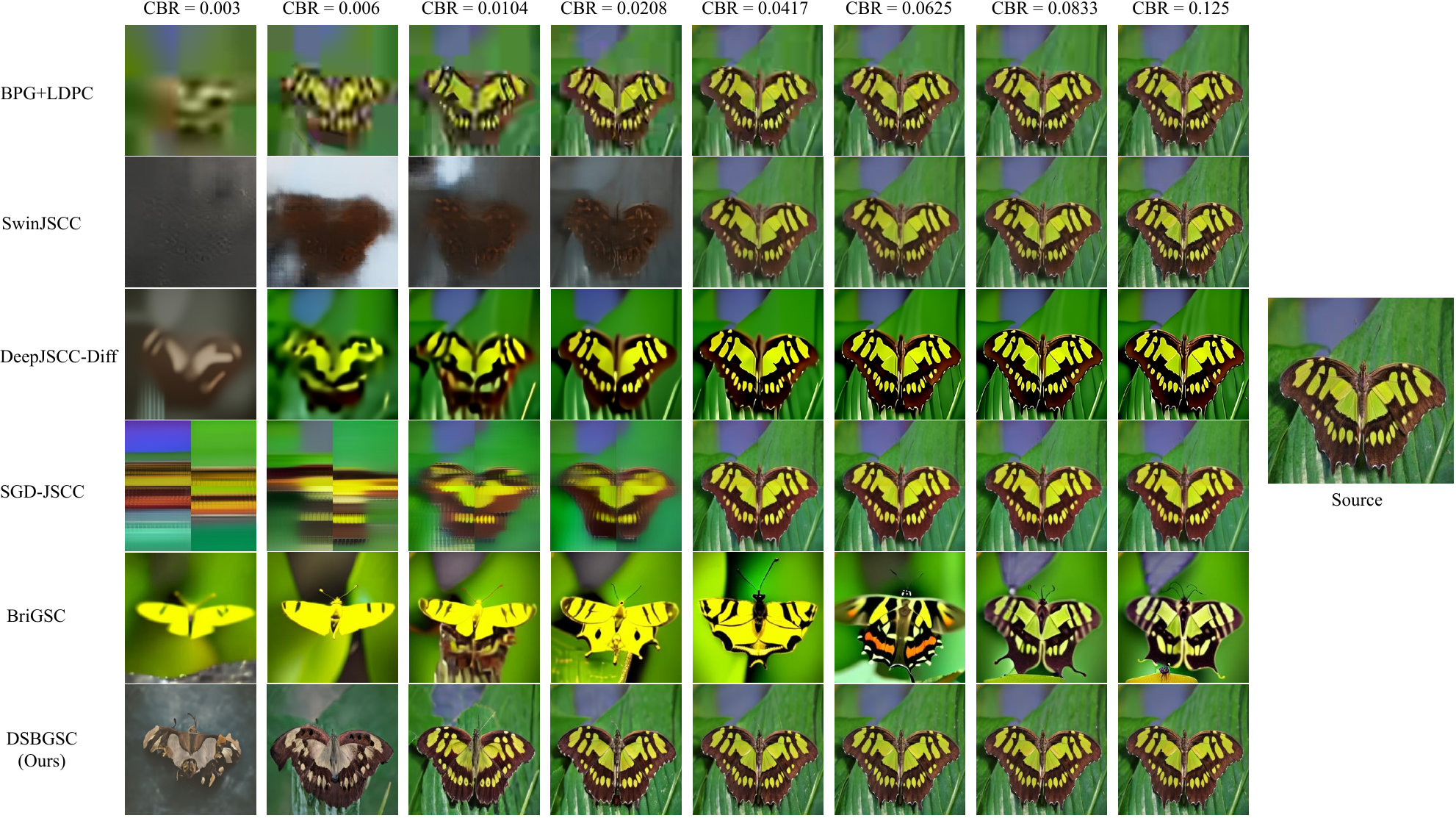}
\caption{Visual comparison of semantic perception quality among different methods under AWGN channel at SNR = 7dB.}
\label{fig4:results CBR}
\end{figure*}

\begin{figure}[htbp]
    \centering
    \renewcommand{\baselinestretch}{1.0} 

    \subfloat[{\small PSNR vs. CBR}\label{fig:cbr_grid-subfig1}]{%
        \includegraphics[width=0.5\linewidth]{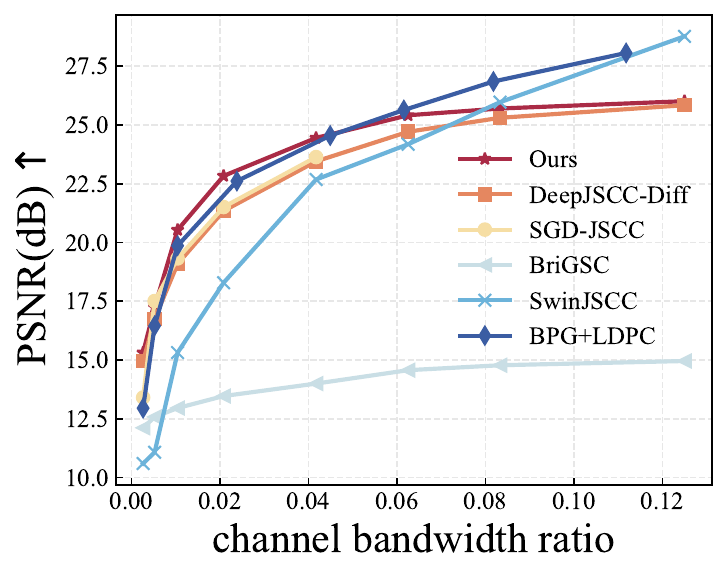}}
    \hfill 
    \subfloat[{\small LPIPS vs. CBR}\label{fig:cbr_grid-subfig2}]{%
        \includegraphics[width=0.5\linewidth]{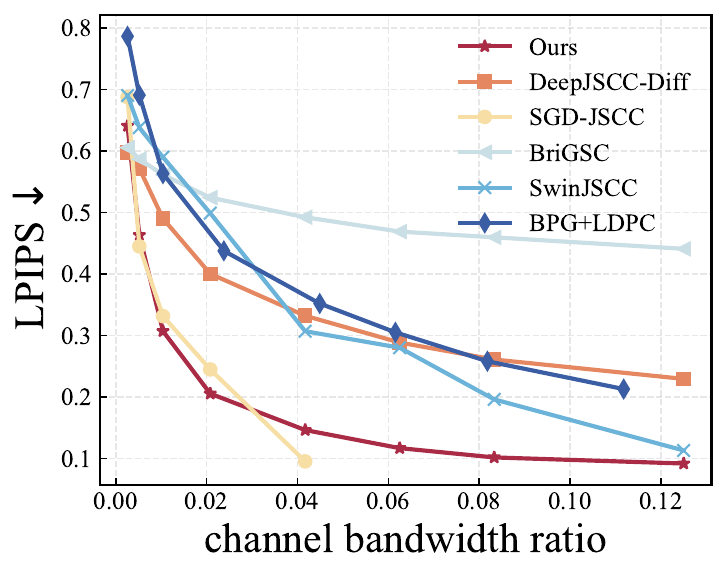}}

    \vspace{0pt}
    
    \subfloat[{\small MS-SSIM vs. CBR}\label{fig:cbr_grid-subfig3}]{%
        \includegraphics[width=0.5\linewidth]{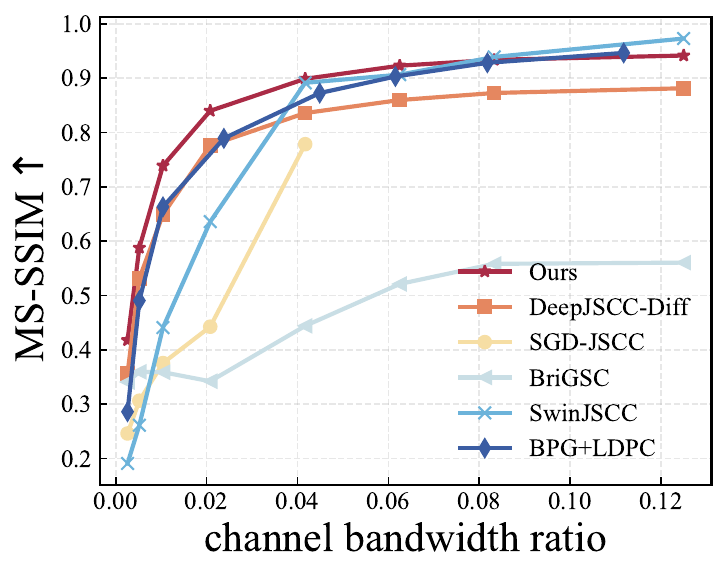}}
    \hfill
    \subfloat[{\small FID vs. CBR}\label{fig:cbr_grid-subfig4}]{%
        \includegraphics[width=0.5\linewidth]{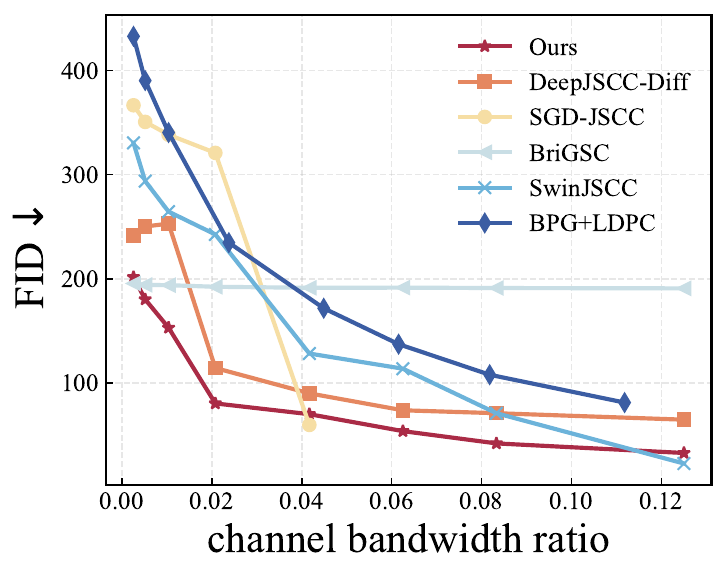}}

    \captionsetup{font={normalsize, rm}}
    \caption{Comparison of semantic perception quality among different methods at SNR=7dB when CBR varies.}
    \label{fig5:cbr_grid}
\end{figure}

\subsection{Dataset}
We select a subset of the ImageNet-1K dataset \cite{ImageNet} with high semantic diversity, comprising approximately 100,000 images, for training. During preprocessing, all images are cropped and resized to a 256$\times$256 resolution. The broad semantic distribution of this dataset enables the model to learn general semantic priors. To validate the model's zero-shot generalization capability on unseen data, we employ the DIV2K validation set for testing. This dataset comprises high-quality 2K-resolution images, which we randomly crop to 256$\times$256 for evaluation.
We selected a highly semantically diverse subset of the ImageNet-1K dataset \cite{ImageNet} comprising approximately 100,000 images for training. All images were cropped and resized to 256$\times$256 resolution during preprocessing. The dataset's extensive category coverage ensures the model learns generalizable semantic priors. To validate the model's zero-shot generalization capability on unseen data, we employed the DIV2K validation set for testing. This dataset comprises high-quality 2K-resolution images, which we randomly cropped to 256$\times$256 for evaluation.
\subsection{Evaluation Metrics}
Since semantic communication pursues not only distortion-level alignment but also perceptual authenticity, we evaluate the semantic perception quality using both distortion and perceptual metrics. For distortion metrics, Peak Signal-to-Noise Ratio (PSNR) measures pixel-level error levels, while Multiscale Structural Similarity (MS-SSIM \cite{MS-SSIM}) assesses the similarity of luminance, brightness, and structural information. For perceptual metrics, we utilize a pre-trained VGG network to compute Learned Perceptual Image Patch Similarity (LPIPS \cite{LPIPS}) in the feature space, which aligns more closely with human visual perception than PSNR. Additionally, we employ Fréchet Inception Distance (FID \cite{FID}) to measure the distance between the semantic distribution and the data distribution.

\begin{figure*}[!t]
\centering
\includegraphics[width=0.9\textwidth]{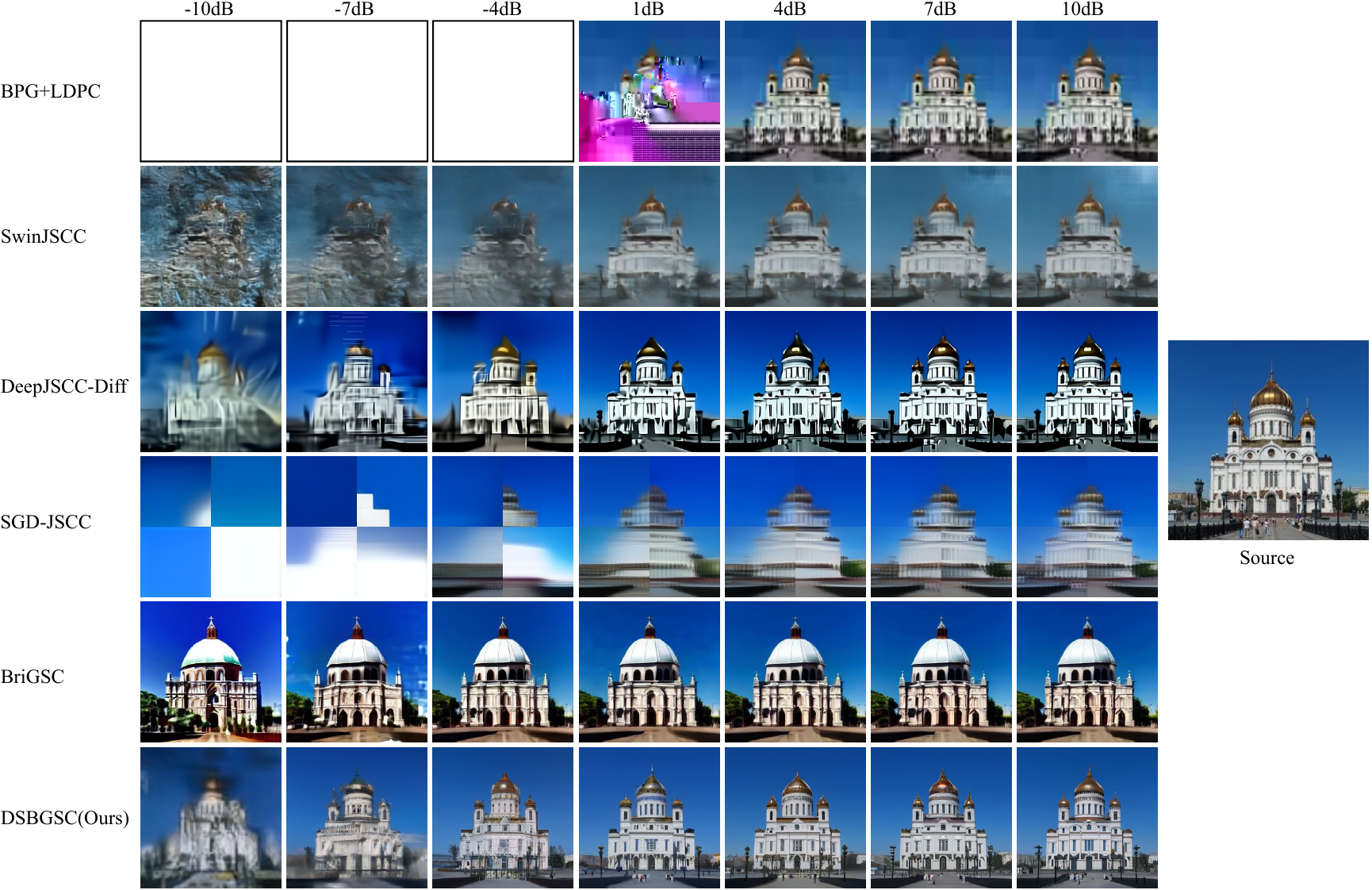}
\caption{Visual comparison of semantic perception quality among different methods under AWGN channel with CBR = 1/48.}
\label{fig6:results SNR}
\end{figure*}

\subsection{Comparison Methods}
We conducted a comprehensive comparison of DSBGSC against five popular image transmission schemes, spanning traditional encoding to the latest generative semantic communication methods: The conventional separated approach BPG + LDPC combines the widely recognized state-of-the-art image compression standard Better Portable Graphics (BPG) with the 5G standard channel coding Low-Density Parity-Check (LDPC). This represents the performance of traditional source-channel separation design. SwinJSCC \cite{yang_swinjscc_2025}, based on DeepJSCC, breaks through the performance bottleneck of traditional CNN architectures by enhancing model representation capabilities, serving as the current benchmark method in the DeepJSCC domain. DeepJSCC-Diff\cite{DeepJSCC-Diff}, an early representative work in generative JSCC, innovatively integrates the Denoising Diffusion Probabilistic Models(DDPM \cite{DDPM}) with DeepJSCC. It achieves high-fidelity restoration by transmitting low-resolution degraded images and leveraging diffusion models. SGD-JSCC \cite{zhang_semantics-guided_2025} employs text descriptions and edge maps as guidance signals to drive the diffusion model during denoising and generation, demonstrating outstanding semantic structure preservation and representing a recent state-of-the-art approach. BriGSC \cite{gao_bridging_2025} proposes a cross-scale generative semantic communication method for multi-scale semantic transmission needs. By jointly extracting textual and visual features, it achieves rate-adaptive codec based on CSI and utilizes diffusion models to fuse multimodal conditions for image generation, demonstrating outstanding performance under low-bandwidth conditions.

\begin{figure}[htbp]
    \centering

    \subfloat[PSNR vs. SNR\label{fig:snr_grid-subfig1}]{\includegraphics[width=0.5\linewidth]{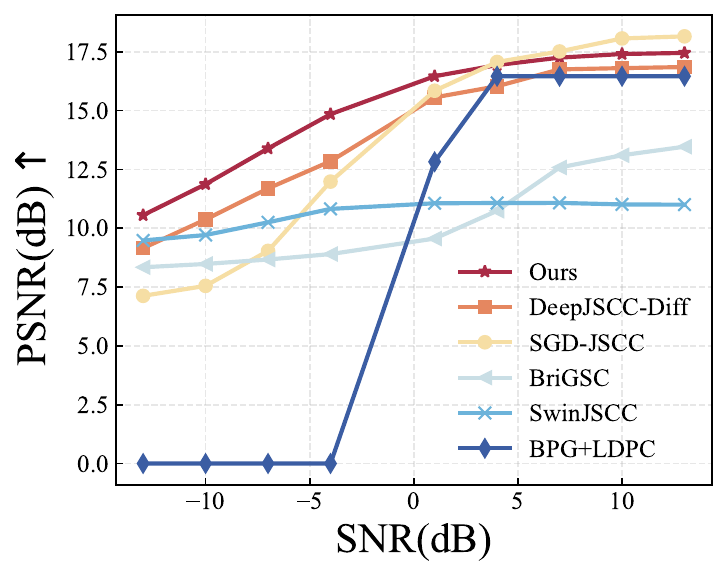}}
    \hfill  
    \subfloat[LPIPS vs. SNR\label{fig:snr_grid-subfig2}]{\includegraphics[width=0.5\linewidth]{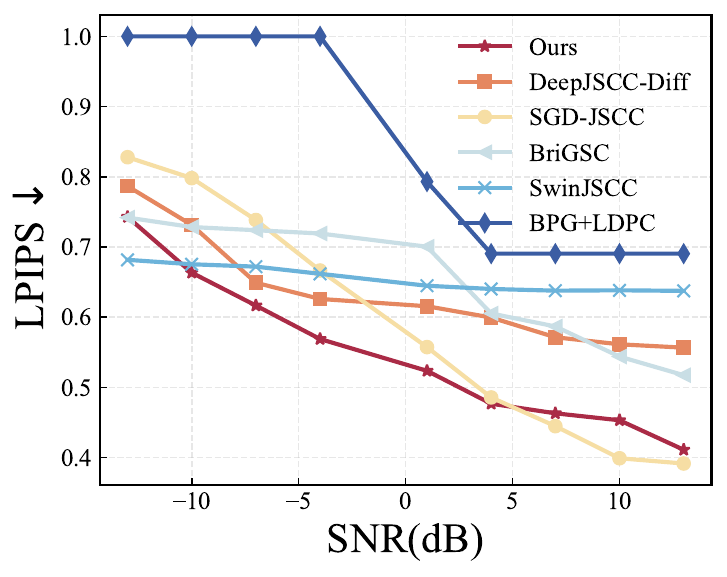}}

    \vspace{0pt}
    
    \subfloat[MS-SSIM vs. SNR\label{fig:snr_grid-subfig3}]{\includegraphics[width=0.5\linewidth]{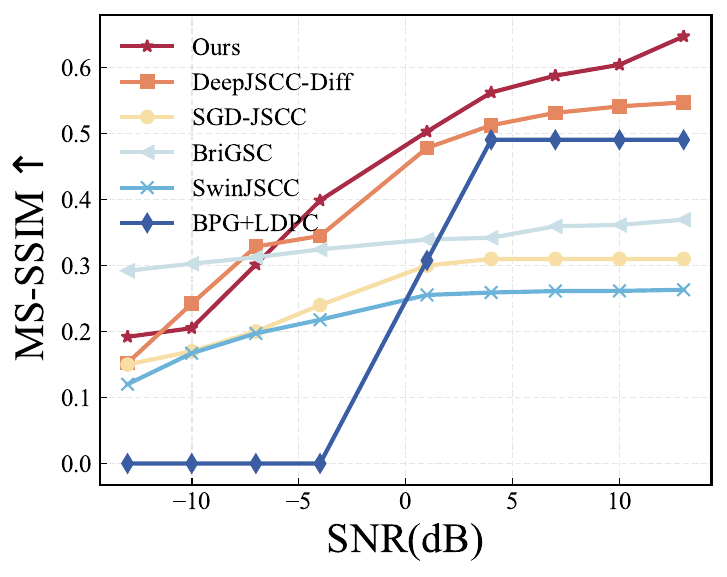}}
    \hfill
    \subfloat[FID vs. SNR\label{fig:snr_grid-subfig4}]{\includegraphics[width=0.5\linewidth]{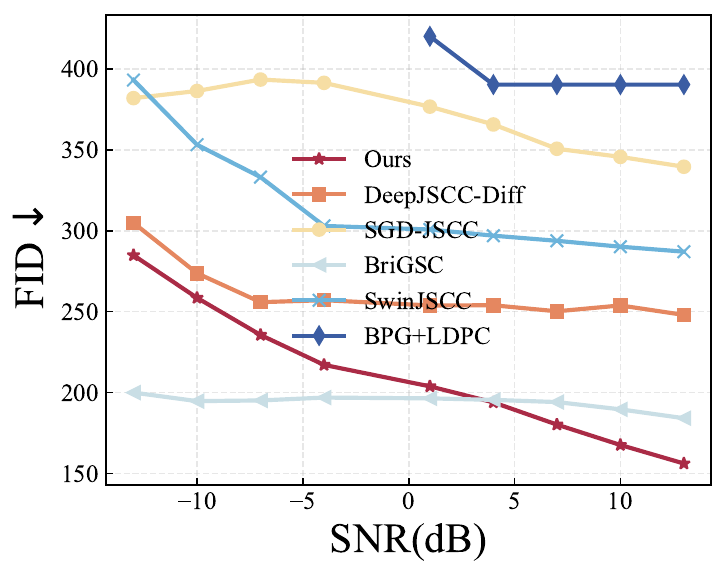}}
    
    \caption{CBR set to 1/192, semantic perception quality comparison among methods when SNR varies.}
    \label{fig7:snr_grid}
\end{figure}
\subsection{Experiments and Results Analysis}
\subsubsection{Effect of CBR on Performance} 

With SNR fixed at 7 dB, we first quantitatively evaluate the performance of different methods under varying CBRs, as shown in Fig. \ref{fig5:cbr_grid}. In terms of distortion metrics (Figs. \ref{fig5:cbr_grid}(a) and (c)), our proposed DSBGSC achieves the best performance in the low CBR. Although SwinJSCC achieves competitive PSNR at CBR $>$ 0.1, optimizing solely for pixel-level loss tends to result in over-smoothed semantic perception, thereby losing high-frequency texture details. Regarding perceptual metrics (Figs. \ref{fig5:cbr_grid}(b) and (d)), our method yields the best overall performance on average, with the exception of a few specific data points. This perceptual quality is achieved by overcoming the inherent bottlenecks of other generative semantic communication methods. Specifically, DeepJSCC-Diff suffers from irreversible information loss due to downsampling at the transmitter. SGD-JSCC relies on edge map features, which lack adaptive scaling capabilities with respect to channel conditions. This limitation leads to the failure of guiding information under narrow-band and high-noise environments, causing significant structural distortion and geometric collapse in the semantic perceptions. Furthermore, while BriGSC improves semantic consistency via textual priors, its over-reliance on text conditions often leads to severe deviations in spatial arrangement and fine details. Thanks to the introduction of the SB, DSBGSC achieves excellent scores across all metrics at lower CBR, demonstrating its significant advantages in perceptual quality.

From a qualitative perspective, the visual results in Fig. \ref{fig4:results CBR} demonstrate that our method significantly outperforms other methods. This superiority is particularly evident under extremely low CBR conditions, where DSBGSC maintains remarkable semantic fidelity. For instance, at an extremely low bandwidth of CBR = 1/192, the foreground butterfly and background leaves perceived by our method remain clearly distinguishable, with accurate structures and textures. In contrast, traditional methods and other semantic communication methods suffer from severe degradation, rendering the objects completely unrecognizable or structurally collapsed. This visual evidence further corroborates the exceptional capability of DSBGSC in preserving semantic and perceptual details under harsh channel conditions.

\begin{figure}[h]
\centering
\includegraphics[width=0.95\linewidth]{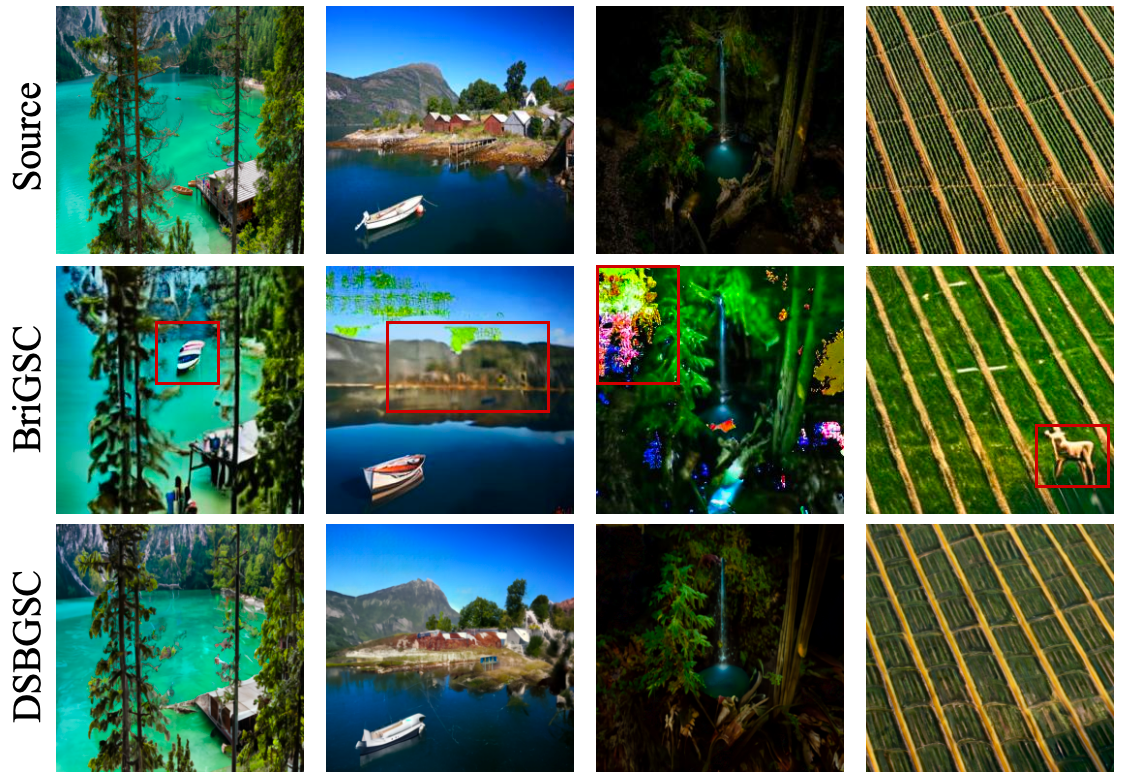}
\caption{Visual comparison of hallucination suppression. Red boxes highlight severe semantic hallucinations generated by BriGSC.}
\label{fig8-Hallucination}
\end{figure}
\begin{figure*}[ht]
\centering
\includegraphics[width=\textwidth]{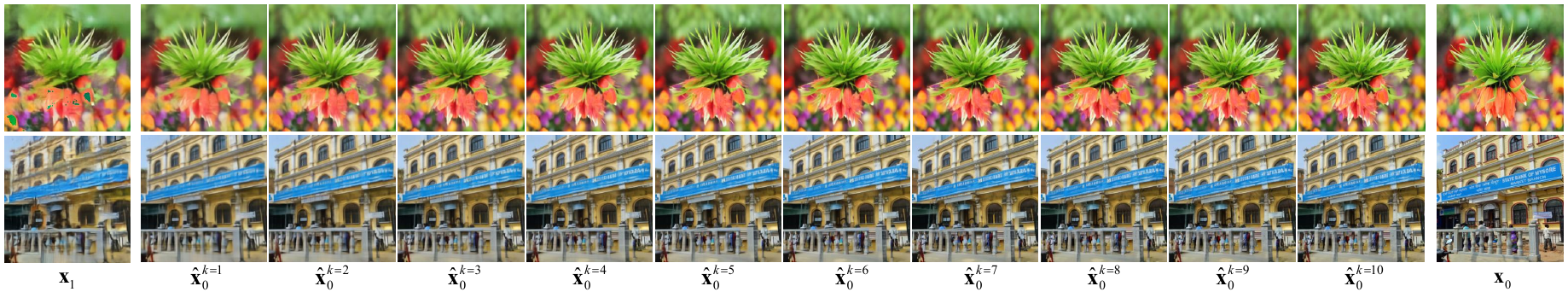}
\caption{Generative processes for semantic and data distribution transfer with NFE=10. Each figure depicts the direct prediction performance of ${{\mathbf{x}}_0}$ at the current state during the generative evolution.}
\label{fig9}
\end{figure*}

\subsubsection{Effect of SNR on Performance} To investigate the DSBGSC's robustness against channel noise, we fix the CBR at 1/48 and evaluate its performance over an SNR range from -13 dB to 13 dB. As illustrated by the quantitative results in Fig. \ref{fig7:snr_grid} and the visual comparisons in Fig. \ref{fig6:results SNR}, the methods exhibit various limitations under harsh channel conditions. First, traditional communication method BPG + LDPC suffer from a severe "cliff effect," leading to complete transmission failure when the SNR $<$ 0 dB. Second, although non-generative JSCC method SwinJSCC is adaptively trained against channel noise to avoid abrupt collapse, their deep features remain highly sensitive to noise, resulting in degraded semantic perception quality at low SNRs. Third, for other generative semantic communication methods, the channel noise under low SNR is often misinterpreted by the generative models as conditional inputs, leading to severe hallucinations. For instance, at SNRs of -10 dB and -7 dB, SGD-JSCC generates completely distorted structures or meaningless patterns, failing to perceive the original semantic. In contrast, our proposed DSBGSC achieves the best overall performance, demonstrating excellent degradation characteristics. Quantitatively, as shown in Fig. \ref{fig7:snr_grid}, our method consistently yields the best results across all evaluation metrics under varying SNRs. Qualitatively, from the visual comparison of Fig. \ref{fig6:results SNR}, even under the extremely harsh condition of -10 dB, the content generated by our method maintains clear structural and semantic fidelity without collapsing. 

\subsubsection{Hallucination Suppression} We compared the visual performance of two generative methods under the same channel noise and transmission rate. BriGSC relies on conditional generation and is highly susceptible to impaired features, leading to obvious hallucinations. Specifically, as shown in the red box in Fig. \ref{fig8-Hallucination}, BriGSC generates severe hallucinations, such as entirely fabricated objects, which deviate significantly from the source images. In contrast, our proposed DSBGSC utilizes a SB to directly guide semantics back to the data distribution. Our method effectively suppresses the hallucinations common in GSC, as validated by Corollary \ref{cor:hallucination_suppression}.

\begin{table}[h]
\centering
\caption{Performance comparison of different generative semantic communication methods}
\resizebox{\linewidth}{!}{
\begin{tabular}{lccc}
\toprule
Method & PSNR $\uparrow$ & LPIPS $\downarrow$ & Time(s) \\
\midrule
DeepJSCC–Diff & 25.31 & 0.26 & 260 \\
SGD–JSCC & 21.51 & 0.24 & 9.68 \\
BriGSC & 14.78 & 0.46 & 16.24 \\
\textbf{Ours} & \textbf{25.70} & \textbf{0.10} & \textbf{2.6} \\
\bottomrule
\end{tabular}
}
\label{tab1}
\end{table}

\begin{figure}[hbp]
    \centering
    \subfloat[PSNR vs. SNR\label{fig:rayleigh-subfig1}]{\includegraphics[width=0.5\linewidth]{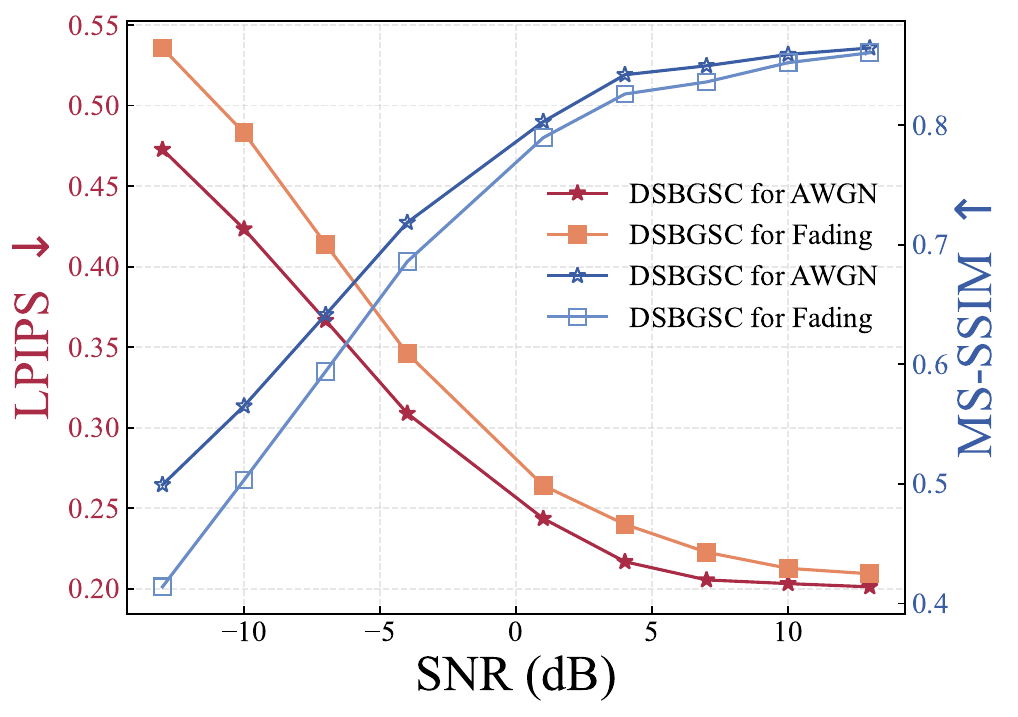}}
    \hfill 
    \subfloat[LPIPS vs. SNR\label{fig:rayleigh-subfig2}]{\includegraphics[width=0.5\linewidth]{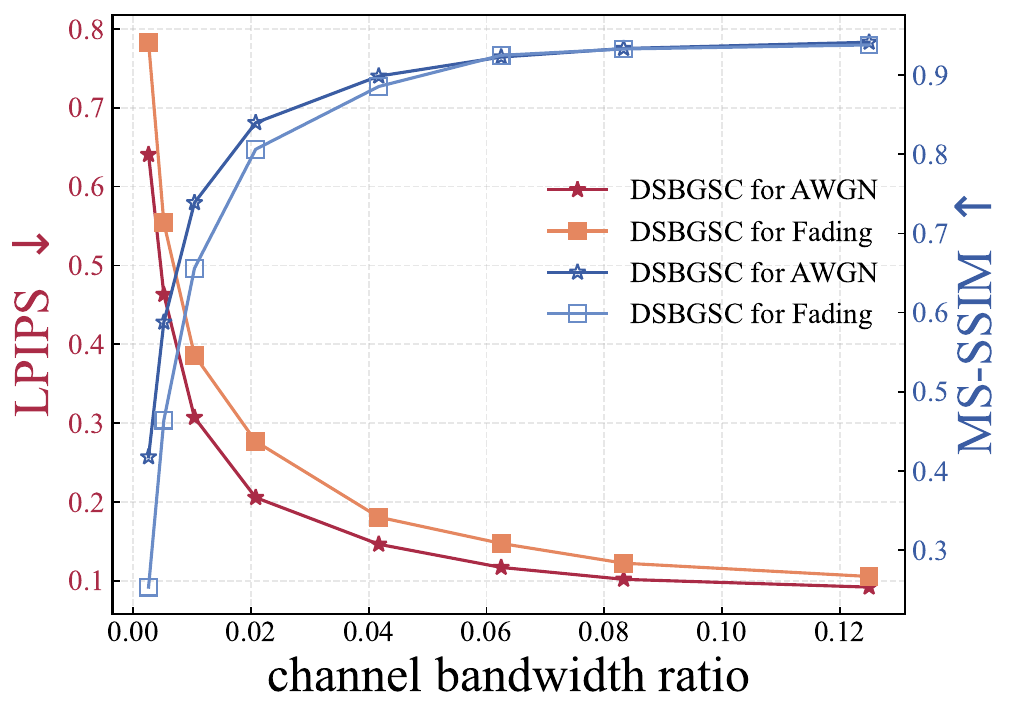}}
    \caption{Comparison of semantic perceptual quality under different channels.}
    \label{fig:rayleigh}
\end{figure}

\subsubsection{Efficiency} 
Because DSBGSC performs direct distributed transfer, it is able to decode high-quality images with a significantly reduced NFE. Fig. \ref{fig9} illustrates the generation process of NFE from 1 to 10, showing that even after a few iterations, the semantic perception quality is already very high, with subsequent iterations merely refining texture details.

To further evaluate the efficiency, Table \ref{tab1} presents a comparison of the distortion and perceptual metrics and inference latency among our method, SGD-JSCC, and DeepJSCC-Diff. Other GSC methods generally require hundreds of denoising iterations, leading to considerable inference latency. In contrast, under relatively close distortion and perceptual performance, DSBGSC requires substantially less inference time. While our method introduces slightly more parameters than pure JSCC methods, this minor trade-off is well justified, as the drastically reduced sampling steps ensure that the overall latency remains highly efficient and acceptable for semantic communication, as validated by Corollary \ref{cor:computational_efficiency}.

\subsubsection{Effect of Fading Channel on Performance}
As shown in Fig. \ref{fig:rayleigh}, the performance gap between the fading channel and the AWGN channel significantly narrows with the improvement of SNR. Benefiting from the optimal transmission and self-consistency characteristics of the SB, the proposed DSBGSC models reconstruction as a probabilistic mapping, adaptively fitting the nonlinear distortion in the fading channel and mitigating semantic information degradation caused by multiplicative noise. Simultaneously, self-consistency allows intermediate states to be directly mapped to semantic truth, achieving stable convergence even in low SNR and deep fading scenarios, avoiding the sampling divergence problem common in traditional diffusion models. These results verify that the SB can effectively guide the transmission path towards a clean data distribution, significantly suppressing fading distortion at high SNR and exhibiting excellent channel robustness.

\section{Conclusion}
\label{sec:conclusion}
In this paper, we proposed SBGSC, a novel generative semantic communication framework that overcomes traditional Gaussian-prior limitations by leveraging SB theory to establish direct, optimal transport between arbitrary semantic distribution and data distribution. For narrow-band and high-noise channels, we introduced its implementation, DSBGSC, which employs SB potentials and a self-consistency objective to bypass stepwise noise prediction, significantly reducing computational overhead and generative hallucinations while achieving superior perceptual quality.
However, the current DSBGSC scheme only exploits the one-way generative capability of the SB at the decoder. As the SB theoretically supports bidirectional transports, our future work will focus on constructing a bidirectional architecture. We aim to use the SB’s forward process for semantic encoding, thereby unifying the encoder and decoder within a single mathematical framework to further approach the theoretical limits of semantic communication.

\clearpage 

\appendices

\maketitle

\newpage
\appendices

\setcounter{theorem}{0}
\setcounter{lemma}{0}
\setcounter{corollary}{0}
\setcounter{definition}{0}
\setcounter{assumption}{0}
\setcounter{example}{0}
\setcounter{remark}{0}
\renewcommand{\thetheorem}{\Alph{section}.\arabic{theorem}}
\renewcommand{\thelemma}{\Alph{section}.\arabic{lemma}}
\renewcommand{\thecorollary}{\Alph{section}.\arabic{corollary}}
\renewcommand{\thedefinition}{\Alph{section}.\arabic{definition}}
\renewcommand{\theassumption}{\Alph{section}.\arabic{assumption}}
\renewcommand{\theexample}{\Alph{section}.\arabic{example}}
\renewcommand{\theremark}{\Alph{section}.\arabic{remark}}

\section{Proof of Lemma \ref{lem:semantic_capacity}}
\label{appendix:lemma1}
This appendix provides the complete proof of Lemma \ref{lem:semantic_capacity}.

\subsection{Preliminaries and Definitions for Lemma \ref{lem:semantic_capacity}}

\begin{definition}{\textbf{(Feature Dimensionality Constraint).}}
    \label{def:A.1}
    Let $d \in \mathbb{N}^+$ denote the fixed feature dimensionality determined by the channel bandwidth ratio (CBR), i.e., the dimension of the real-valued vector output by the transmitter encoder. The SBGSC and CDM-based GSC schemes share the same $d$ to ensure a fair comparison under identical compression rates.
\end{definition}

\begin{definition}{\textbf{(Unconstrained Semantic Space).}}
    \label{def:A.2}
    The unconstrained semantic space is defined as $\mathcal{S} = \mathbb{R}^d$, i.e., the entire $d$-dimensional Euclidean space equipped with the standard Borel $\sigma$-algebra $\mathscr{B}(\mathbb{R}^d)$. The output of the JSCC encoder $f_\phi$ in SBGSC is not subject to any modality-specific prior or structural constraint, and is allowed to be freely optimized over the entirety of $\mathbb{R}^d$.
\end{definition}

\begin{definition}{\textbf{(Constrained Modality-Specific Condition Space).}}
    \label{def:A.3}
    In CDM-based GSC schemes, the condition semantic $\mathbf{c}$ is produced by a predefined feature extractor. Let the $m$-th modality ($m = 1, \dots, M$) correspond to a pretrained, parameter-frozen, deterministic feature extractor $\psi_m: \mathcal{X} \to \mathbb{R}^d$, whose range is $\mathcal{C}_m = \mathrm{Im}(\psi_m) = \{\psi_m(\mathbf{x}) : \mathbf{x} \in \mathcal{X}\} \subseteq \mathbb{R}^d$. The overall constrained condition space is defined as the finite union $\mathcal{C} = \bigcup_{m=1}^{M} \mathcal{C}_m$. Since each $\psi_m$ is a fixed mapping from $\mathcal{X}$ (typically a finite-dimensional compact manifold), $\mathcal{C}_m$ generally forms a low-dimensional submanifold of $\mathbb{R}^d$ with Lebesgue measure zero. Therefore, $\mathcal{C}$, as a finite union of such sets, satisfies the key structural property $\mathcal{C} \subsetneq \mathbb{R}^d = \mathcal{S}$, i.e., $\mathcal{C}$ is a strict subset of $\mathcal{S}$.
\end{definition}

\begin{definition}{\textbf{(Feasible Encoder Families).}}
    \label{def:A.4}
    Corresponding to the two spaces above, the feasible encoder family for SBGSC is defined as
    \begin{equation}
        \mathcal{F}_{\mathcal{S}} = \big\{ f: \mathcal{X} \to \mathcal{S} \;\big|\; f \text{ is a Borel measurable mapping} \big\},
    \end{equation}
    and the feasible encoder family for CDM-based GSC is defined as
    \begin{equation}
        \mathcal{F}_{\mathcal{C}} = \big\{ g: \mathcal{X} \to \mathcal{C} \;\big|\; g \text{ is a Borel measurable mapping} \big\}.
    \end{equation}
    The Borel measurability requirement is a standard technical condition in probability theory that ensures $f(\mathbf{x})$ and $g(\mathbf{x})$ are well-defined random variables, so that information-theoretic quantities such as mutual information are meaningful.
\end{definition}

\begin{definition}{\textbf{(Mutual Information).}}
    \label{def:A.5}
    For a deterministic encoder $f$ and a source random variable $\mathbf{x} \sim p(\mathbf{x})$, let $\mathbf{s} = f(\mathbf{x})$. The mutual information is defined as
    \begin{equation}
        I(\mathbf{x};\, f(\mathbf{x})) = H(f(\mathbf{x})) - H(f(\mathbf{x}) \mid \mathbf{x}),
    \end{equation}
    where $H(\cdot)$ denotes the differential entropy and $H(\cdot|\cdot)$ the conditional differential entropy. Since $f$ is deterministic, $f(\mathbf{x})$ is fully determined given $\mathbf{x}$, and hence $H(f(\mathbf{x}) \mid \mathbf{x}) = 0$, yielding $I(\mathbf{x};\, f(\mathbf{x})) = H(f(\mathbf{x}))$. Moreover, the data processing inequality ensures $I(\mathbf{x};\, f(\mathbf{x})) \leq H(\mathbf{x})$, so the supremum is finite and well-defined.
\end{definition}

\subsection{Proof of Lemma \ref{lem:semantic_capacity}}
With the above definitions in place, we now prove Lemma \ref{lem:semantic_capacity}.

\textit{Proof:} 

The proof proceeds in three steps.

\textbf{Step 1: Constructing an embedding from $\mathcal{F}_{\mathcal{C}}$ into $\mathcal{F}_{\mathcal{S}}$.}
From Definition \ref{def:A.3}, $\mathcal{C} \subseteq \mathcal{S} = \mathbb{R}^d$. Define the natural inclusion map $\iota: \mathcal{C} \hookrightarrow \mathcal{S}$ as the identity embedding, i.e., $\iota(\mathbf{c}) = \mathbf{c}$ for all $\mathbf{c} \in \mathcal{C}$. Since $\mathcal{C}$ is endowed with the subspace $\sigma$-algebra induced by $\mathscr{B}(\mathbb{R}^d)$, for any $B \in \mathscr{B}(\mathbb{R}^d)$, $\iota^{-1}(B) = B \cap \mathcal{C} \in \mathscr{B}(\mathcal{C})$. Hence, $\iota$ is a Borel measurable mapping.

For any $g \in \mathcal{F}_{\mathcal{C}}$, define the composite mapping $\tilde{g} = \iota \circ g: \mathcal{X} \to \mathcal{S}$. Since the composition of Borel measurable mappings is Borel measurable, $\tilde{g}$ is Borel measurable. Moreover, $\mathrm{Im}(\tilde{g}) = \iota(\mathrm{Im}(g)) \subseteq \iota(\mathcal{C}) \subseteq \mathcal{S}$, so $\tilde{g} \in \mathcal{F}_{\mathcal{S}}$. This establishes a mapping $\Phi: \mathcal{F}_{\mathcal{C}} \to \mathcal{F}_{\mathcal{S}}$ defined by $\Phi(g) = \iota \circ g$.

\textbf{Step 2: Proving that the embedding preserves mutual information.}
We show that for any $g \in \mathcal{F}_{\mathcal{C}}$, $I(\mathbf{x};\, \tilde{g}(\mathbf{x})) = I(\mathbf{x};\, g(\mathbf{x}))$.
Consider the Markov chain $\mathbf{x} \to g(\mathbf{x}) \to \iota(g(\mathbf{x})) = \tilde{g}(\mathbf{x})$. Since $\iota$ is a deterministic mapping, the data processing inequality yields
\begin{equation}
    I(\mathbf{x};\, \tilde{g}(\mathbf{x})) \leq I(\mathbf{x};\, g(\mathbf{x})).
    \label{eq:A.1}
\end{equation}

Conversely, since $\iota$ is injective—$\iota(\mathbf{c}_1) = \iota(\mathbf{c}_2)$ if and only if $\mathbf{c}_1 = \mathbf{c}_2$ for $\mathbf{c}_1, \mathbf{c}_2 \in \mathcal{C}$—there exists a left inverse $\iota^{-1}: \iota(\mathcal{C}) \to \mathcal{C}$ satisfying $\iota^{-1} \circ \iota = \mathrm{id}_{\mathcal{C}}$, i.e., $\iota^{-1}(\iota(\mathbf{c})) = \mathbf{c}$ for all $\mathbf{c} \in \mathcal{C}$. This left inverse is also Borel measurable. Hence, one can form the reverse Markov chain $\mathbf{x} \to \tilde{g}(\mathbf{x}) \to \iota^{-1}(\tilde{g}(\mathbf{x})) = g(\mathbf{x})$, and the data processing inequality gives

\begin{equation}
    I(\mathbf{x};\, g(\mathbf{x})) \leq I(\mathbf{x};\, \tilde{g}(\mathbf{x})).
    \label{eq:A.2}
\end{equation}

Combining \eqref{eq:A.1} and \eqref{eq:A.2}, we obtain
\begin{equation}
    I(\mathbf{x};\, \tilde{g}(\mathbf{x})) = I(\mathbf{x};\, g(\mathbf{x})), \quad \forall\, g \in \mathcal{F}_{\mathcal{C}}.
    \label{eq:A.3}
\end{equation}

\textbf{Step 3: Deriving the supremum inequality from set inclusion.}
By Step 1, for each $g \in \mathcal{F}_{\mathcal{C}}$, $\tilde{g} = \Phi(g) \in \mathcal{F}_{\mathcal{S}}$. By Step 2 and \eqref{eq:A.3}, $I(\mathbf{x};\, g(\mathbf{x})) = I(\mathbf{x};\, \tilde{g}(\mathbf{x}))$. Therefore,
\begin{equation}
    \begin{aligned}
    &\big\{ I(\mathbf{x};\, g(\mathbf{x})) : g \in \mathcal{F}_{\mathcal{C}} \big\} \\
    &= \big\{ I(\mathbf{x};\, \tilde{g}(\mathbf{x})) : \tilde{g} \in \Phi(\mathcal{F}_{\mathcal{C}}) \big\} \\
    &\subseteq \big\{ I(\mathbf{x};\, f(\mathbf{x})) : f \in \mathcal{F}_{\mathcal{S}} \big\},
    \end{aligned}
    \label{eq:A.4}
\end{equation}
where the last inclusion follows from $\Phi(\mathcal{F}_{\mathcal{C}}) \subseteq \mathcal{F}_{\mathcal{S}}$.

For real-valued sets, if $A \subseteq B$, then $\sup A \leq \sup B$. Applying this to \eqref{eq:A.4}, we immediately obtain
\begin{equation}
    \sup_{g \in \mathcal{F}_{\mathcal{C}}} I(\mathbf{x};\, g(\mathbf{x})) \;\leq\; \sup_{f \in \mathcal{F}_{\mathcal{S}}} I(\mathbf{x};\, f(\mathbf{x})),
\end{equation}
which is equivalent to
\begin{equation}
    \sup_{f \in \mathcal{F}_{\mathcal{S}}} I(\mathbf{x};\, f(\mathbf{x})) \;\geq\; \sup_{g \in \mathcal{F}_{\mathcal{C}}} I(\mathbf{x};\, g(\mathbf{x})).
\end{equation}
This completes the proof. $\hfill\square$

\subsection{Discussion on the Strictness of the Inequality}
\begin{remark}{\textbf{(Conditions for Strict Inequality).}}
    The inequality holds with equality if and only if there exists some $g^\star \in \mathcal{F}_{\mathcal{C}}$ that attains the global supremum $\sup_{f \in \mathcal{F}_{\mathcal{S}}} I(\mathbf{x};\, f(\mathbf{x})) = H(\mathbf{x})$, which requires $g^\star$ to be an invertible mapping (in the almost-everywhere sense) from $\mathcal{X}$ onto $\mathcal{C}$. However, as noted in Definition \ref{def:A.3}, the encoders in $\mathcal{C}$ are constrained by parameter-frozen modality-specific feature extractors $\psi_m$, whose ranges typically form low-dimensional submanifolds of $\mathbb{R}^d$ that cannot sustain a bijection with the full source space $\mathcal{X}$. Therefore, in practical semantic communication scenarios, the inequality is strict:
\begin{equation}
    \sup_{f \in \mathcal{F}_{\mathcal{S}}} I(\mathbf{x};\, f(\mathbf{x})) \;>\; \sup_{g \in \mathcal{F}_{\mathcal{C}}} I(\mathbf{x};\, g(\mathbf{x})).
\end{equation}
\end{remark}

\section{Statement and Justification of Assumption \ref{ass:distribution_distance}}
\label{appendix:assumption1}

\setcounter{theorem}{0}
\setcounter{lemma}{0}
\setcounter{corollary}{0}
\setcounter{definition}{0}
\setcounter{assumption}{0}
\setcounter{example}{0}
\setcounter{remark}{0}

This appendix provides the definitions of all symbols involved in Assumption \ref{ass:distribution_distance} and justifies its validity from both theoretical and empirical perspectives.

\subsection{Preliminaries and Definitions for Assumption \ref{ass:distribution_distance}}

Let $\mathbf{x} \in \mathbb{R}^D$ denote the source image drawn from the data distribution $p_{\text{data}}$, where $D$ is the pixel dimensionality of the image. Let $\hat{\mathbf{s}} \in \mathbb{R}^d$ denote the noisy semantic feature vector obtained at the receiver after joint source-channel encoding and wireless channel transmission, distributed according to the received semantic distribution $p_{\hat{s}}$, where $d$ is the semantic feature dimensionality. Let $\boldsymbol{\xi} \sim \pi = \mathcal{N}(\mathbf{0}, \mathbf{I}_D)$ be a $D$-dimensional standard Gaussian random vector, whose distribution $\pi$ serves as the generative starting prior in CDM-based GSC schemes.

To ensure that the distributional distances among the three measures are comparable within a common metric space, we lift the low-dimensional semantic distribution to the image space. Define the decoding map $\mathcal{D}_\theta: \mathbb{R}^d \to \mathbb{R}^D$ as the single-step forward pass of the SBGSC decoder network, i.e., a deterministic mapping that directly outputs an image-space vector from the received semantic $\hat{\mathbf{s}}$ without iterative sampling. The induced image-space distribution is given by the pushforward measure $\mu_{\hat{s}} = (\mathcal{D}_\theta)_\# p_{\hat{s}}$, defined for any Borel set $B \subseteq \mathbb{R}^D$ as $\mu_{\hat{s}}(B) = p_{\hat{s}}(\mathcal{D}_\theta^{-1}(B))$. With this construction, $\mu_{\hat{s}}$, $p_{\text{data}}$, and $\pi$ are all probability measures on $\mathbb{R}^D$.

\begin{definition}{\textbf{(the Wasserstein Distance).}}
    \label{def:B.1}
    Let $\mu, \nu \in \mathscr{P}_2(\mathbb{R}^D)$ be two probability measures on $\mathbb{R}^D$ with finite second moments, and let $\Gamma(\mu, \nu)$ denote the set of all couplings (joint distributions) with marginals $\mu$ and $\nu$. The second-order Wasserstein distance is defined as
    \begin{equation}
        \mathcal{W}_2^2(\mu, \nu) = \inf_{\gamma \in \Gamma(\mu, \nu)} \int_{\mathbb{R}^D \times \mathbb{R}^D} \|\mathbf{y}_1 - \mathbf{y}_2\|_2^2 \, \mathrm{d}\gamma(\mathbf{y}_1, \mathbf{y}_2),
        \label{eq:B.1}
    \end{equation}
    where $\|\cdot\|_2$ denotes the Euclidean norm. Intuitively, $\mathcal{W}_2$ quantifies the minimum quadratic cost of optimally transporting one probability mass distribution to another.
\end{definition}

\subsection{Discussion on the Interpretation of Assumption \ref{ass:distribution_distance}}
The validity of this assumption is supported by the following arguments.
\begin{itemize}
    \item The JSCC encoder $f_\phi$ of SBGSC is trained end-to-end with an objective that explicitly minimizes the reconstruction distortion (e.g., the semantic loss $\mathcal{L}_{\text{sem}}$ defined in Section III). According to rate-distortion theory, under a finite rate constraint, the encoder output $\mathbf{s} = f_\phi(\mathbf{x})$ constitutes the best lossy representation of $\mathbf{x}$ in the rate-distortion sense. Although the channel noise $\mathbf{n}$ perturbs $\mathbf{s}$, the received semantic $\hat{\mathbf{s}} = h\mathbf{s} + \mathbf{n}$ retains the dominant structural information of $\mathbf{x}$, as guaranteed by Lemma \ref{lem:semantic_capacity}. Consequently, the image-space projection $\mu_{\hat{s}}$ preserves the core topological and geometric characteristics of $p_{\text{data}}$, resulting in a bounded and controllable Wasserstein distance between the two.
    \item The distribution $p_{\text{data}}$ of natural images is widely known to concentrate in a neighborhood of a low-dimensional nonlinear manifold (the so-called image manifold) embedded in $\mathbb{R}^D$, whose intrinsic dimensionality is far smaller than the ambient dimensionality $D$. In contrast, the standard Gaussian distribution $\pi$ is supported on the entirety of $\mathbb{R}^D$, with its probability mass concentrated on a thin spherical shell of radius approximately $\sqrt{D}$. A fundamental topological discrepancy exists between the two: $p_{\text{data}}$ exhibits highly anisotropic, low-dimensional concentration, whereas $\pi$ is an isotropic, full-space diffuse distribution. As the image dimensionality $D$ grows—typical high-resolution images in modern communication systems have $D \gg 10^4$—the quantity $\mathcal{W}_2^2(\pi, p_{\text{data}})$ increases at a rate of at least $\mathcal{O}(D)$, which far exceeds the finite distortion $\mathcal{W}_2^2(\mu_{\hat{s}}, p_{\text{data}})$ introduced by lossy encoding.
    \item In practical deep JSCC systems, even under low signal-to-noise ratio (SNR) conditions, the single-step decoder reconstruction $\mathcal{D}_\theta(\hat{\mathbf{s}})$ typically exhibits the principal semantic content of the source image, including object contours, spatial layout, and dominant color tones. By contrast, any single-step deterministic mapping from a Gaussian noise sample $\boldsymbol{\xi} \sim \pi$ fails to produce any semantically meaningful structure. This empirical observation further corroborates the strict inequality asserted in Assumption \ref{ass:distribution_distance}.
\end{itemize}

\section{Proof of Theorem \ref{theo:pke_advantage}}
\label{appendix:theorem1}

\setcounter{theorem}{0}
\setcounter{lemma}{0}
\setcounter{corollary}{0}
\setcounter{definition}{0}
\setcounter{assumption}{0}
\setcounter{example}{0}
\setcounter{remark}{0}

This appendix provides the complete proof of Theorem \ref{theo:pke_advantage}. We begin by introducing several auxiliary definitions and preliminary results.
\subsection{Preliminaries and Definitions for Theorem \ref{theo:pke_advantage}}

\begin{definition}{\textbf{(Feasible Path Set).}}
    \label{def:C.1}
    Let $\sigma > 0$ be a fixed diffusion coefficient and let $\mu_0, \mu_1 \in \mathscr{P}_2(\mathbb{R}^D)$ be probability measures with finite second moments. Define
    \begin{equation}
        \mathcal{Q}_\sigma(\mu_0, \mu_1) = \left\{ \mathbb{Q} \;\middle|\;
    \begin{aligned}
    &\text{Under } \mathbb{Q}\text{:} \; \mathrm{d}\mathbf{x}_t = \mathbf{u}_t(\mathbf{x}_t)\,\mathrm{d}t \\
    &\hspace{4.5em} + \sigma\,\mathrm{d}\mathbf{W}_t,\\
    &\text{Law}_{\mathbb{Q}}(\mathbf{x}_0) = \mu_0,\;\text{Law}_{\mathbb{Q}}(\mathbf{x}_1) = \mu_1
    \end{aligned}
    \right\},
    \label{eq:def-C.1}
    \end{equation}
    where $\mathbf{u}_t: \mathbb{R}^D \times [0,1] \to \mathbb{R}^D$ is a Borel-measurable drift field and $\mathbf{W}_t$ is a $D$-dimensional standard Wiener process. This set comprises all path measures induced by drift fields that transport the marginal distribution from $\mu_0$ to $\mu_1$ over the time interval $[0,1]$ with diffusion coefficient $\sigma$.
\end{definition}

\begin{definition}{\textbf{(Reference Wiener Measure).}}
    \label{def:C.2}
    Let $\mathbb{P}^\sigma$ denote the path measure induced by the standard Wiener process with initial distribution $\mu_0$, diffusion coefficient $\sigma$, and zero drift, i.e., under $\mathbb{P}^\sigma$, $\mathrm{d}\mathbf{x}_t = \sigma\,\mathrm{d}\mathbf{W}_t$ with $\mathbf{x}_0 \sim \mu_0$.
\end{definition}

\begin{definition}{\textbf{(Schrödinger Bridge Path Measure).}}
    \label{def:C.3}
    Given marginal constraints $(\mu_0, \mu_1)$ and the reference measure $\mathbb{P}^\sigma$, the Schrödinger bridge path measure is defined as the solution to the following entropy-regularized optimal transport problem:
    \begin{equation}
        \mathbb{Q}^{\mathrm{SB}}(\mu_0, \mu_1) = \arg\min_{\mathbb{Q} \in \mathcal{Q}_\sigma(\mu_0, \mu_1)} D_{\mathrm{KL}}\!\left(\mathbb{Q}\,\|\,\mathbb{P}^\sigma\right),
        \label{eq:def-C.3}
    \end{equation}
    where $D_{\mathrm{KL}}(\cdot\|\cdot)$ denotes the Kullback-Leibler divergence between path measures. Under the conditions $\mu_0, \mu_1 \in \mathscr{P}_2(\mathbb{R}^D)$ and $\sigma > 0$, the Schrödinger bridge theory guarantees that this minimization problem admits a unique solution.
    
\end{definition}

\begin{definition}{\textbf{(CDM Reverse Path Measure).}}
    \label{def:C.4}
    In the CDM-based GSC scheme, the forward SDE adopts a noise schedule $\beta: [0,1] \to \mathbb{R}_{>0}$:
    \begin{equation}
        \mathrm{d}\mathbf{x}_t = -\tfrac{1}{2}\beta(t)\,\mathbf{x}_t\,\mathrm{d}t + \sqrt{\beta(t)}\,\mathrm{d}\mathbf{W}_t, \quad \mathbf{x}_0 \sim p_{\text{data}}.
        \label{eq:def-C.4-forward}
    \end{equation}
    
\end{definition}

Given the conditional semantics $\mathbf{c}$, the reverse generative SDE is given by
\begin{equation}
    \begin{aligned}
    \mathrm{d}\mathbf{x}_t &= \left[\tfrac{1}{2}\beta(t)\,\mathbf{x}_t + \beta(t)\,\nabla_{\mathbf{x}} \log p_{t}(\mathbf{x}_t \mid \mathbf{c})\right]\mathrm{d}t \\
    &\quad + \sqrt{\beta(t)}\,\mathrm{d}\bar{\mathbf{W}}_t,
    \label{eq:def-C.4-reverse}
    \end{aligned}
\end{equation}

where $\nabla_{\mathbf{x}} \log p_{t}(\mathbf{x}_t \mid \mathbf{c})$ is the conditional score function and $\bar{\mathbf{W}}_t$ is the reverse Wiener process. This reverse process transports the marginal from $\pi$ to $p_{\text{data}}(\cdot \mid \mathbf{c})$. We denote its path measure by $\mathbb{Q}^{\mathrm{CDM}}$, i.e., $\mathbb{Q}^{\mathrm{CDM}} \in \mathcal{Q}_{\sqrt{\beta}}(\pi, p_{\text{data}}(\cdot|\mathbf{c}))$. Crucially, $\mathbb{Q}^{\mathrm{CDM}}$ is uniquely determined by the predefined, data-agnostic noise schedule $\beta(\cdot)$ via Anderson's time-reversal theorem, and constitutes a specific element of the feasible set rather than the kinetic-energy minimizer.

\subsection{Auxiliary Lemmas for Theorem \ref{theo:pke_advantage}}
\begin{lemma}{\textbf{(Girsanov Kinetic Energy Representation).}}
    \label{lemma:C.1}
    For any $\mathbb{Q} \in \mathcal{Q}_\sigma(\mu_0, \mu_1)$,
\begin{equation}
    D_{\mathrm{KL}}\!\left(\mathbb{Q}\,\|\,\mathbb{P}^\sigma\right) = \frac{1}{2\sigma^2}\,\mathcal{E}(\mathbb{Q}).
    \label{eq:lemma-C.1}
\end{equation}
\end{lemma}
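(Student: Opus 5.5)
The plan is to compute the Radon--Nikodym derivative $\mathrm{d}\mathbb{Q}/\mathrm{d}\mathbb{P}^\sigma$ on path space with Girsanov's theorem and then take its logarithm in expectation under $\mathbb{Q}$. The chain rule for relative entropy, disintegrating at time $0$, gives
\begin{equation*}
D_{\mathrm{KL}}(\mathbb{Q}\,\|\,\mathbb{P}^\sigma) = D_{\mathrm{KL}}(\mathrm{Law}_{\mathbb{Q}}(\mathbf{x}_0)\,\|\,\mathrm{Law}_{\mathbb{P}^\sigma}(\mathbf{x}_0)) + \mathbb{E}_{\mathbb{Q}}\bigl[D_{\mathrm{KL}}(\mathbb{Q}^{\mathbf{x}_0}\,\|\,\mathbb{P}^{\sigma,\mathbf{x}_0})\bigr],
\end{equation*}
where $\mathbb{Q}^{\mathbf{x}_0}$ and $\mathbb{P}^{\sigma,\mathbf{x}_0}$ denote the laws conditioned on the starting point. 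By Definition \ref{def:C.1} and Definition \ref{def:C.2}, both measures start from $\mu_0$, so the first term vanishes and only the conditional path-space term remains. This is the point where the common initial law in the two definitions is used.

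For a fixed starting point, both measures are laws of diffusions with the same diffusion coefficient $\sigma$. The first has drift $\mathbf{u}_t$ and the second has drift $0$. Girsanov's theorem then gives
\begin{equation*}
\log\frac{\mathrm{d}\mathbb{Q}}{\mathrm{d}\mathbb{P}^\sigma} = \frac{1}{\sigma^2}\int_0^1 \mathbf{u}_t(\mathbf{x}_t)^\top \mathrm{d}\mathbf{x}_t - \frac{1}{2\sigma^2}\int_0^1 \|\mathbf{u}_t(\mathbf{x}_t)\|_2^2\,\mathrm{d}t.
\end{equation*}
Under $\mathbb{Q}$ we have $\mathrm{d}\mathbf{x}_t = \mathbf{u}_t\,\mathrm{d}t + \sigma\,\mathrm{d}\mathbf{W}_t^{\mathbb{Q}}$. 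Substituting this into the formula, the log-density becomes
\begin{equation*}
\frac{1}{2\sigma^2}\int_0^1 \|\mathbf{u}_t\|_2^2\,\mathrm{d}t + \frac{1}{\sigma}\int_0^1 \mathbf{u}_t^\top \mathrm{d}\mathbf{W}_t^{\mathbb{Q}}.
\end{equation*}
Taking the $\mathbb{Q}$-expectation, the It\^o integral has zero mean, and what remains is exactly $\frac{1}{2\sigma^2}\mathcal{E}(\mathbb{Q})$ by Definition \ref{def:path_kinetic_energy}.

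The main obstacle is justifying these manipulations rigorously. Novikov's condition may fail, and the stochastic integral need not be a true martingale. To handle this, I will first treat the case $\mathcal{E}(\mathbb{Q})<\infty$. I will localize with the stopping times $\tau_n = \inf\{t : \int_0^t \|\mathbf{u}_s\|^2\,\mathrm{d}s \ge n\}\wedge 1$. On each $[0,\tau_n]$ Novikov's condition holds, the formula applies, and the stopped It\^o integral is an $L^2(\mathbb{Q})$ martingale with zero mean. I will then let $n\to\infty$. Monotone convergence handles the energy term. For the entropy term, relative entropy restricted to $\mathcal{F}_{\tau_n}$ increases to the relative entropy on $\mathcal{F}_1$, by lower semicontinuity together with monotonicity under restriction of $\sigma$-algebras.

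In the case $\mathcal{E}(\mathbb{Q})=\infty$, I will show that the KL divergence is also infinite, so that the identity holds in the extended sense. The tool is the converse statement, following L\'eonard's characterization: finite entropy relative to Wiener measure forces the existence of a drift with finite energy that represents $\mathbb{Q}$. This completes the identity in full generality.
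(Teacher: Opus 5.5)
Your proposal is correct and is the same argument as the paper's: write $\log(\mathrm{d}\mathbb{Q}/\mathrm{d}\mathbb{P}^\sigma)$ via Girsanov, substitute the $\mathbb{Q}$-dynamics $\mathrm{d}\mathbf{x}_t=\mathbf{u}_t\,\mathrm{d}t+\sigma\,\mathrm{d}\mathbf{W}^{\mathbb{Q}}_t$ so that the stochastic-integral term has $\mathbb{Q}$-mean $\frac{1}{\sigma^2}\mathcal{E}(\mathbb{Q})$, and subtract $\frac{1}{2\sigma^2}\mathcal{E}(\mathbb{Q})$. The paper calls this step the It\^o isometry, but what it actually computes is exactly your drift substitution plus the zero mean of the $\mathbb{Q}$-It\^o integral.

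Your additional steps are sound and make rigorous what the paper covers only by ``appropriate integrability conditions'': the vanishing initial-marginal term, the localization by $\tau_n$ with a monotone limit, and the infinite-energy case via L\'eonard's finite-entropy Girsanov theory, where uniqueness of the semimartingale decomposition of $\mathbf{x}$ under $\mathbb{Q}$ identifies his drift with $\mathbf{u}_t(\mathbf{x}_t)$. In the localized step, change measure from $\mathbb{Q}$ to $\mathbb{P}^\sigma$ rather than the reverse, so that you only need uniqueness in law of Brownian motion.
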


\textit{Proof:} 

By the Girsanov theorem, the Radon-Nikodym derivative of $\mathbb{Q}$ with respect to $\mathbb{P}^\sigma$ is
\begin{equation}
    \begin{aligned}
    \frac{\mathrm{d}\mathbb{Q}}{\mathrm{d}\mathbb{P}^\sigma} &= \exp\!\left(\int_0^1 \frac{1}{\sigma}\,\mathbf{u}_t(\mathbf{x}_t)^\top\mathrm{d}\mathbf{W}_t \right. \\
    &\quad \left. - \frac{1}{2\sigma^2}\int_0^1 \|\mathbf{u}_t(\mathbf{x}_t)\|_2^2\,\mathrm{d}t\right).
    \end{aligned}
\end{equation}

Taking the logarithm and computing the expectation under $\mathbb{Q}$, and applying the Itô isometry under appropriate integrability conditions, i.e., $\mathbb{E}_{\mathbb{Q}}[\int_0^1 \frac{1}{\sigma}\,\mathbf{u}_t^\top\mathrm{d}\mathbf{W}_t] = \frac{1}{\sigma^2}\mathbb{E}_{\mathbb{Q}}[\int_0^1 \|\mathbf{u}_t\|_2^2\,\mathrm{d}t]$, we obtain
\begin{equation}
    \begin{aligned}
    D_{\mathrm{KL}}(\mathbb{Q}\|\mathbb{P}^\sigma) &= \mathbb{E}_{\mathbb{Q}}\!\left[\frac{1}{\sigma^2}\int_0^1 \|\mathbf{u}_t\|_2^2\,\mathrm{d}t - \frac{1}{2\sigma^2}\int_0^1 \|\mathbf{u}_t\|_2^2\,\mathrm{d}t\right] \\
    &= \frac{1}{2\sigma^2}\,\mathcal{E}(\mathbb{Q}). \quad \square
    \end{aligned}
\end{equation}

\begin{lemma}{\textbf{(Monotonicity of Schrödinger Bridge Kinetic Energy with Respect to $\mathcal{W}_2^2$).}}
    \label{lemma:C.2}
    For a fixed $\sigma > 0$, define the function
    \begin{equation}
        \Phi_\sigma(w) \;=\; \mathcal{E}\!\left(\mathbb{Q}^{\mathrm{SB}}(\mu_0, \mu_1)\right), \quad w = \mathcal{W}_2^2(\mu_0, \mu_1).
    \end{equation}
    Then $\Phi_\sigma: \mathbb{R}_{\geq 0} \to \mathbb{R}_{\geq 0}$ is a strictly increasing function.
\end{lemma}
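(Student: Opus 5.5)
The plan is to reduce $\mathcal{E}(\mathbb{Q}^{\mathrm{SB}}(\mu_0,\mu_1))$ to a static entropic optimal-transport value and then show that this value grows strictly with $w=\mathcal{W}_2^2(\mu_0,\mu_1)$. The first step uses Lemma \ref{lemma:C.1}, which gives $\mathcal{E}(\mathbb{Q})=2\sigma^2 D_{\mathrm{KL}}(\mathbb{Q}\|\mathbb{P}^\sigma)$ on $\mathcal{Q}_\sigma(\mu_0,\mu_1)$. Hence the minimizer in Definition \ref{def:C.3} also minimizes kinetic energy over the feasible set.

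Next I would apply the chain rule for relative entropy of path measures. Disintegrating along the endpoint pair $(\mathbf{x}_0,\mathbf{x}_1)$ gives
\begin{equation*}
D_{\mathrm{KL}}(\mathbb{Q}\|\mathbb{P}^\sigma)=D_{\mathrm{KL}}(\gamma\|\mathbb{P}^\sigma_{01})+\mathbb{E}_{\gamma}\bigl[D_{\mathrm{KL}}(\mathbb{Q}^{x_0x_1}\|\mathbb{P}^{\sigma,x_0x_1})\bigr],
\end{equation*}
where $\gamma$ is the endpoint coupling. The second term vanishes at the optimum, since the bridges are Brownian bridges. Because $\mathbb{P}^\sigma_{01}(\mathrm{d}x,\mathrm{d}y)\propto\mu_0(\mathrm{d}x)\,e^{-\|y-x\|^2/(2\sigma^2)}\mathrm{d}y$, this produces the representation
\begin{equation*}
\Phi_\sigma=\min_{\gamma\in\Gamma(\mu_0,\mu_1)}\Bigl\{\int\|y-x\|_2^2\,\mathrm{d}\gamma-2\sigma^2 H(\gamma\,|\,\mu_0)\Bigr\}+c_{\sigma,D},
\end{equation*}
where $H(\gamma|\mu_0)$ is the conditional entropy of $\gamma$ given its first marginal. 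Written this way, the first term is at least $w$, and the entropic term plays the role of a regularizer.

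To make $\Phi_\sigma$ a function of $w$ alone, as the statement requires, I would fix the target $\mu_1=p_{\text{data}}$ and the shape of the source. Concretely, I would parametrize sources along a $\mathcal{W}_2$-geodesic emanating from $\mu_1$: take $\mu_0^{s}=((1-s)\mathrm{id}+sT)_\#\mu_1$, where $T$ is the Brenier map to a reference source. On this family $w(s)=s^2\mathcal{W}_2^2(\mu_1,\mu_0^{1})$, which is a strictly increasing bijection in $s$, so $\Phi_\sigma(w)$ is well defined as $\Phi_\sigma$ evaluated at $\mu_0^{s(w)}$. This is exactly the comparison that Theorem \ref{theo:pke_advantage} needs between $\mu_{\hat s}$ and $\pi$ relative to the same $p_{\text{data}}$.

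With the family fixed, I would differentiate in $s$ using the envelope theorem on the dual (Schr\"odinger-potential) formulation. Writing $x=(1-s)z+sT(z)$, the derivative equals $\mathbb{E}_{\gamma^*_s}\bigl[\langle 2(x-y),\,T(z)-z\rangle\bigr]$ plus a term from the entropy of the moving first marginal. At $s=0$ the entropic minimizer is close to $(\mathrm{id},\mathrm{id})_\#\mu_1$, up to $O(\sigma^2)$. For $s>0$, I would use cyclical monotonicity of $T$ together with the fact that the dilation changes $H(\gamma|\mu_0)$ only by a Jacobian term independent of the coupling. Combining these, the aim is a derivative bound of the form $\tfrac{\mathrm{d}}{\mathrm{d}s}\Phi_\sigma\ge 2s\,\mathcal{W}_2^2(\mu_1,\mu_0^1)>0$ for $s>0$. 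Strict increase in $s$ then transfers to strict increase in $w$.

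The main obstacle is this derivative step. The entropic correction depends on $\mu_0$ beyond $w$, so the Jacobian term must not offset the quadratic gain. I would first try to absorb it by noting that along the dilation family the reference $\mathbb{P}^\sigma$ changes its initial law consistently, so that $\mathcal{E}$ only sees the relative entropy of the coupling with respect to $\mu_0^s\otimes\mathrm{Leb}$. If that fails, I would fall back on strict convexity of entropic OT along generalized geodesics to obtain strict monotonicity.
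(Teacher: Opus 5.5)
The static representation you derive is correct, and you are right that $\Phi_\sigma$ is not a function of $w$ alone. The gap is the derivative step: the inequality it targets is false even on your geodesic family, so the proposal cannot be completed.

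\textbf{A counterexample on your family.} Take $\mu_1=\mathcal{N}(\mathbf{0},\sigma^2\mathbf{I}_D)$ and reference source $\mu_0^1=\delta_{\mathbf{0}}$. Then $T\equiv\mathbf{0}$, $\mu_0^s=\mathcal{N}(\mathbf{0},k^2\sigma^2\mathbf{I}_D)$ with $k=1-s$, and $w=s^2D\sigma^2$.
At $s=1$, the reference $\mathbb{P}^\sigma$ itself has time-one marginal $\mu_1$. So it is feasible, the bridge has zero drift, and $\mathcal{E}=0$ even though $w=D\sigma^2$.
At $s=0$, we have $w=0$, but a zero-drift process would end at $\mathcal{N}(\mathbf{0},2\sigma^2\mathbf{I}_D)$, so $\mathcal{E}>0$.
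In between, evaluate your representation on Gaussian couplings (optimal here) with per-coordinate correlation $\rho$:
\begin{equation*}
\mathcal{E}\bigl(\mathbb{Q}^{\mathrm{SB}}(\mu_0^s,\mu_1)\bigr)=D\sigma^2\min_{\rho\in(-1,1)}\bigl\{k^2-2k\rho-\log(1-\rho^2)\bigr\}.
\end{equation*}
The minimizer satisfies $\rho=k(1-\rho^2)$. By the envelope theorem, the $k$-derivative is $2D\sigma^2(k-\rho)=2D\sigma^2k\rho^2>0$.

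Hence $\mathcal{E}$ strictly \emph{decreases} as $w$ increases along the whole family. It goes from $\bigl(2-\sqrt{5}+\log\tfrac{1+\sqrt{5}}{2}\bigr)D\sigma^2\approx0.245\,D\sigma^2$ at $w=0$ down to $0$ at $w=D\sigma^2$. This contradicts your target bound $\tfrac{\mathrm{d}}{\mathrm{d}s}\Phi_\sigma\ge2s\,\mathcal{W}_2^2(\mu_1,\mu_0^1)$.

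The entropic term $-2\sigma^2H(\gamma|\mu_0)$ does not merely perturb the quadratic gain; it dominates it. When the source is more concentrated than the target, the reference noise performs part of the transport for free. For Gaussian endpoints one checks $\mathcal{E}=\mathcal{W}_2^2+\sigma^2\bigl(h(\mu_0)-h(\mu_1)\bigr)+o(\sigma^2)$ as $\sigma\to0$, so low-entropy sources get a discount unrelated to $w$. Your fallback does not help either: convexity along generalized geodesics never yields monotonicity.

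\textbf{The paper's argument fails for the same reason.} It writes $\Phi_\sigma(w)=w+2\sigma^2R_\sigma(w)$ with $R_\sigma\ge0$ nondecreasing. The same example refutes this: $R=-D/2$ at $\mu_0=\delta_{\mathbf{0}}$. Your representation shows why: only the transport term is bounded below by $w$, while the entropic term can be negative. So for fixed $\sigma$ it is the statement itself, not just your route, that fails.

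\textbf{What can be proved} is a pairwise small-noise comparison. Suppose $\mu_0,\mu_0',\mu_1$ have finite second moments and finite entropy, and $\mathcal{W}_2^2(\mu_0,\mu_1)<\mathcal{W}_2^2(\mu_0',\mu_1)$. Since the bridge kinetic energy converges to $\mathcal{W}_2^2$ as $\sigma\to0$, the same strict ordering of kinetic energies holds for all sufficiently small $\sigma$, with a threshold depending on the pair.

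Note also that even a correct family-wise monotonicity would not by itself serve Theorem~\ref{theo:pke_advantage}. The measures $\mu_{\hat s}$ and $\pi$ need not lie on a common geodesic ray emanating from $p_{\text{data}}$.
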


\textit{Proof:} 

By Lemma \ref{lemma:C.1}, $\mathcal{E}(\mathbb{Q}^{\mathrm{SB}}) = 2\sigma^2 D_{\mathrm{KL}}(\mathbb{Q}^{\mathrm{SB}}\|\mathbb{P}^\sigma)$. According to the Schrödinger bridge convergence theory \cite{leonard_survey_2014}, the Schrödinger cost (i.e., the minimum KL divergence) admits the following decomposition:
\begin{equation}
    \begin{aligned}
        D_{\mathrm{KL}}\!\left(\mathbb{Q}^{\mathrm{SB}}(\mu_0, \mu_1)\,\big\|\,\mathbb{P}^\sigma\right) &= \frac{1}{2\sigma^2}\,\mathcal{W}_2^2(\mu_0, \mu_1) \\
        &\quad + R(\mu_0, \mu_1, \sigma),
    \end{aligned}
\end{equation}

where the remainder $R \geq 0$ represents the entropic regularization contribution. Therefore,
\begin{equation}
    \Phi_\sigma(w) = 2\sigma^2\!\left[\frac{w}{2\sigma^2} + R_\sigma(w)\right] = w + 2\sigma^2 R_\sigma(w).
\end{equation}
$R_\sigma(w) \to 0$ as $\sigma \to 0$, and for finite $\sigma > 0$, $R_\sigma(\cdot)$ is non-decreasing with respect to $w$ (since a larger discrepancy between the endpoint distributions induces a greater deviation of the optimal coupling from a deterministic map, resulting in a larger entropic correction). Since the leading term $w$ is itself strictly increasing and $R_\sigma(w) \geq 0$ is non-decreasing, $\Phi_\sigma$ is strictly increasing. $\hfill\square$

\begin{lemma}{\textbf{(Convexity of Wasserstein Distance in the Target Measure).}}
    \label{lemma:C.3}
    Let $\nu = \int \nu_\mathbf{c}\,\mathrm{d}p(\mathbf{c})$ be a mixture of the probability measure family $\{\nu_\mathbf{c}\}_{\mathbf{c} \in \mathcal{C}}$ with respect to $p(\mathbf{c})$. Then, for any $\mu \in \mathscr{P}_2(\mathbb{R}^D)$,
\begin{equation}
    \mathcal{W}_2^2(\mu,\, \nu) \;\leq\; \mathbb{E}_{\mathbf{c} \sim p(\mathbf{c})}\!\left[\mathcal{W}_2^2(\mu,\, \nu_\mathbf{c})\right].
    \label{eq:lemma-C.3}
\end{equation}
\end{lemma}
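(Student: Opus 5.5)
The plan is to prove Lemma \ref{lemma:C.3} by a gluing-of-couplings argument. For each $\mathbf{c}$, pick a near-optimal coupling between $\mu$ and $\nu_\mathbf{c}$, then average these couplings over $p(\mathbf{c})$. Since $\mu$ does not depend on $\mathbf{c}$, the averaged measure is a coupling of $\mu$ and the mixture $\nu$. Its transport cost is the average of the individual costs, and the Wasserstein distance, being an infimum over couplings, can only be smaller.

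First, fix $\varepsilon>0$. By Definition \ref{def:B.1}, for each $\mathbf{c}$ there is $\gamma_\mathbf{c}\in\Gamma(\mu,\nu_\mathbf{c})$ with
\[
\int\|\mathbf{y}_1-\mathbf{y}_2\|_2^2\,\mathrm{d}\gamma_\mathbf{c}\le \mathcal{W}_2^2(\mu,\nu_\mathbf{c})+\varepsilon .
\]
In fact, by standard existence theory an optimal coupling is attained, so one may take $\varepsilon=0$.

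Second, define $\bar\gamma=\int\gamma_\mathbf{c}\,\mathrm{d}p(\mathbf{c})$, that is, $\bar\gamma(A)=\int\gamma_\mathbf{c}(A)\,\mathrm{d}p(\mathbf{c})$ for Borel sets $A\subseteq\mathbb{R}^D\times\mathbb{R}^D$. Its first marginal is $\int\mu\,\mathrm{d}p(\mathbf{c})=\mu$ and its second marginal is $\int\nu_\mathbf{c}\,\mathrm{d}p(\mathbf{c})=\nu$. Hence $\bar\gamma\in\Gamma(\mu,\nu)$.

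Third, apply Fubini/Tonelli to the nonnegative cost:
\[
\mathcal{W}_2^2(\mu,\nu)\le\int\|\mathbf{y}_1-\mathbf{y}_2\|_2^2\,\mathrm{d}\bar\gamma=\mathbb{E}_{\mathbf{c}}\!\left[\int\|\mathbf{y}_1-\mathbf{y}_2\|_2^2\,\mathrm{d}\gamma_\mathbf{c}\right]\le\mathbb{E}_{\mathbf{c}}[\mathcal{W}_2^2(\mu,\nu_\mathbf{c})]+\varepsilon .
\]
Letting $\varepsilon\to0$ gives \eqref{eq:lemma-C.3}. If the right-hand side of \eqref{eq:lemma-C.3} is infinite, the claim is trivial.

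The main obstacle is the measurability of $\mathbf{c}\mapsto\gamma_\mathbf{c}$, which is needed for $\bar\gamma$ to be a well-defined measure. I would handle it with a measurable selection theorem. The set-valued map sending $\mathbf{c}$ to the set of optimal couplings in $\Gamma(\mu,\nu_\mathbf{c})$ has closed nonempty values in the Polish space $\mathscr{P}(\mathbb{R}^D\times\mathbb{R}^D)$. Assuming $\mathbf{c}\mapsto\nu_\mathbf{c}$ is a measurable kernel, this correspondence has a measurable graph, so the Kuratowski--Ryll-Nardzewski theorem (or the measurable-selection result in Villani's \emph{Optimal Transport}, Cor.~5.22) gives a measurable choice.

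As a fallback I would use a weaker construction. Take the measurable optimal kernel $\mathbf{c}\mapsto\gamma_\mathbf{c}$ only on a countable dense family of $\mathbf{c}$ values. Then approximate using lower semicontinuity of the cost, so that the averaged bound holds up to an arbitrarily small error.
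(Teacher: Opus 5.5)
Your proposal is correct and follows the paper's own proof: average the per-$\mathbf{c}$ optimal couplings over $p(\mathbf{c})$ to obtain a coupling in $\Gamma(\mu,\nu)$, then bound $\mathcal{W}_2^2(\mu,\nu)$ by that coupling's averaged cost using Tonelli. Your measurable-selection step (Villani, Cor.~5.22, given a measurable kernel $\mathbf{c}\mapsto\nu_\mathbf{c}$) supplies a detail the paper leaves implicit, so you can drop the countable-dense fallback, which is unnecessary and not convincing as sketched.
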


\textit{Proof:} For each $\mathbf{c}$, let $\gamma_\mathbf{c}^\star \in \Gamma(\mu, \nu_\mathbf{c})$ be the optimal coupling achieving $\mathcal{W}_2^2(\mu, \nu_\mathbf{c})$. Construct the mixture coupling $\bar{\gamma} = \int \gamma_\mathbf{c}^\star\,\mathrm{d}p(\mathbf{c})$, whose marginals are $(\mu, \nu)$, so that $\bar{\gamma} \in \Gamma(\mu, \nu)$. By the infimum definition of $\mathcal{W}_2^2$,
\begin{equation}
    \begin{aligned}
    \mathcal{W}_2^2(\mu, \nu) &\leq \int \|\mathbf{y}_1 - \mathbf{y}_2\|_2^2\,\mathrm{d}\bar{\gamma} \\
    &= \mathbb{E}_\mathbf{c}\!\left[\int \|\mathbf{y}_1 - \mathbf{y}_2\|_2^2\,\mathrm{d}\gamma_\mathbf{c}^\star\right] \\
    &= \mathbb{E}_\mathbf{c}\!\left[\mathcal{W}_2^2(\mu, \nu_\mathbf{c})\right]. \quad \square
    \end{aligned}
\end{equation}

\subsection{Proof of Theorem \ref{theo:pke_advantage}}
\textit{Proof:} The proof proceeds by establishing a chain of strict inequalities through five steps.

\textbf{Step 1: Global kinetic energy minimality of the Schrödinger bridge.}
By Lemma \ref{lemma:C.1}, minimizing $D_{\mathrm{KL}}(\mathbb{Q}\|\mathbb{P}^\sigma)$ is equivalent to minimizing $\mathcal{E}(\mathbb{Q})$. It then follows from Definition \ref{def:C.3} that the Schrödinger bridge achieves the global minimum of the kinetic energy among all diffusion processes sharing the same endpoint constraints:

\begin{equation}
    \begin{aligned}
    \mathcal{E}\!\left(\mathbb{Q}^{\mathrm{SB}}(\mu_0, \mu_1)\right) &= \min_{\mathbb{Q} \in \mathcal{Q}_\sigma(\mu_0, \mu_1)} \mathcal{E}(\mathbb{Q}), \\
    &\quad \forall\; \mu_0, \mu_1 \in \mathscr{P}_2(\mathbb{R}^D).
    \end{aligned}
    \label{eq:C.3-step1}
\end{equation}

\textbf{Step 2: Strict sub-optimality of the CDM reverse process.}
As established in Definition \ref{def:C.4}, $\mathbb{Q}^{\mathrm{CDM}} \in \mathcal{Q}_{\sqrt{\beta}}(\pi, p_{\text{data}}(\cdot|\mathbf{c}))$ is uniquely determined by the fixed noise schedule $\beta(\cdot)$ via Anderson's time-reversal theorem. For $\mathbb{Q}^{\mathrm{CDM}}$ to coincide with the Schrödinger bridge solution under the same endpoint constraints, $\beta(\cdot)$ would need to be the optimal diffusion schedule corresponding to the specific distribution pair $(\pi, p_{\text{data}}(\cdot|\mathbf{c}))$. However, in standard CDMs, $\beta(\cdot)$ is a predefined, data-agnostic function (e.g., a linear schedule $\beta(t) = \beta_{\min} + (\beta_{\max} - \beta_{\min})t$ or a cosine schedule), whose coincidence with the optimal schedule requires satisfying a set of nontrivial Monge-Ampère-type PDE constraints. For non-degenerate natural image distributions, these constraints are not satisfied. Therefore,
\begin{equation}
    \mathcal{E}\!\left(\mathbb{Q}^{\mathrm{SB}}(\pi,\, p_{\text{data}}(\cdot|\mathbf{c}))\right) \;<\; \mathcal{E}\!\left(\mathbb{Q}^{\mathrm{CDM}}\right).
    \label{eq:C.3-step2}
\end{equation}

\textbf{Step 3: Starting-point advantage via Assumption \ref{ass:distribution_distance}.}
By Assumption \ref{ass:distribution_distance}, $\mathcal{W}_2^2(\mu_{\hat{s}}, p_{\text{data}}) < \mathcal{W}_2^2(\pi, p_{\text{data}})$. Since $\Phi_\sigma$ is strictly increasing by Lemma \ref{lemma:C.2}, we have
\begin{equation}
    \begin{aligned}
    \mathcal{E}\!\left(\mathbb{Q}^{\mathrm{SB}}(\mu_{\hat{s}},\, p_{\text{data}})\right) &= \Phi_\sigma\!\left(\mathcal{W}_2^2(\mu_{\hat{s}}, p_{\text{data}})\right) \\
    &< \Phi_\sigma\!\left(\mathcal{W}_2^2(\pi, p_{\text{data}})\right) \\
    &= \mathcal{E}\!\left(\mathbb{Q}^{\mathrm{SB}}(\pi,\, p_{\text{data}})\right).
    \end{aligned}
    \label{eq:C.3-step3}
\end{equation}

\textbf{Step 4: Addressing the conditioning gap in the target distribution.}
Applying Lemma \ref{lemma:C.3} with $\mu = \pi$, $\nu_\mathbf{c} = p_{\text{data}}(\cdot|\mathbf{c})$, and $\nu = p_{\text{data}} = \int p_{\text{data}}(\cdot|\mathbf{c})\,\mathrm{d}p(\mathbf{c})$, we obtain
\begin{equation}
    \mathcal{W}_2^2(\pi, p_{\text{data}}) \;\leq\; \mathbb{E}_{\mathbf{c}}\!\left[\mathcal{W}_2^2(\pi, p_{\text{data}}(\cdot|\mathbf{c}))\right].
    \label{eq:C.3-step4-1}
\end{equation}
By \eqref{eq:C.3-step4-1}, there exists a subset $\mathcal{C}_+ \subseteq \mathcal{C}$ with $p(\mathcal{C}_+) > 0$ such that for all $\mathbf{c} \in \mathcal{C}_+$,
\begin{equation}
    \mathcal{W}_2^2(\pi, p_{\text{data}}(\cdot|\mathbf{c})) \;\geq\; \mathcal{W}_2^2(\pi, p_{\text{data}})
    \label{eq:C.3-step4-2}
\end{equation}
By the strict monotonicity of $\Phi_\sigma$ established in Lemma \ref{lemma:C.2}, inequality \eqref{eq:C.3-step4-2} implies
\begin{equation}
    \mathcal{E}\!\left(\mathbb{Q}^{\mathrm{SB}}(\pi,\, p_{\text{data}})\right) \;\leq\; \mathcal{E}\!\left(\mathbb{Q}^{\mathrm{SB}}(\pi,\, p_{\text{data}}(\cdot|\mathbf{c}))\right), \quad \forall\,\mathbf{c} \in \mathcal{C}_+.
    \label{eq:C.3-step4-3}
\end{equation}

\textbf{Step 5: Combining the inequality chain.}
For any $\mathbf{c} \in \mathcal{C}_+$, combining \eqref{eq:C.3-step4-1}, \eqref{eq:C.3-step4-3}, and \eqref{eq:C.3-step4-2} yields:
\begin{equation}
    \begin{aligned}
    &\mathcal{E}\!\left(\mathbb{Q}^{\mathrm{SB}}(\mu_{\hat{s}},\, p_{\text{data}})\right) \;\underset{\eqref{eq:C.3-step3}}{<}\; \mathcal{E}\!\left(\mathbb{Q}^{\mathrm{SB}}(\pi,\, p_{\text{data}})\right) \\
    &\quad \underset{\eqref{eq:C.3-step4-3}}{\leq}\; \mathcal{E}\!\left(\mathbb{Q}^{\mathrm{SB}}(\pi,\, p_{\text{data}}(\cdot|\mathbf{c}))\right) \;\underset{\eqref{eq:C.3-step2}}{<}\; \mathcal{E}\!\left(\mathbb{Q}^{\mathrm{CDM}}\right).
    \end{aligned}
    \label{eq:C.3-step5-1}
\end{equation}
Therefore,
\begin{equation}
    \mathcal{E}\!\left(\mathbb{Q}^{\mathrm{SB}}(\mu_{\hat{s}},\, p_{\text{data}})\right) \;<\; \mathcal{E}\!\left(\mathbb{Q}^{\mathrm{CDM}}\right).
    \label{eq:C.3-step5-2}
\end{equation}
This completes the proof. $\hfill\square$

\begin{remark}{\textbf{(On the Intermediate Inequality \eqref{eq:C.3-step5-1}).}}
    Inequality \eqref{eq:C.3-step4-3} holds only for $\mathbf{c} \in \mathcal{C}_+$ rather than for all $\mathbf{c}$. However, the conclusion of the theorem does not depend on a specific choice of $\mathbf{c}$: the left-hand side $\mathcal{E}(\mathbb{Q}^{\mathrm{SB}}(\mu_{\hat{s}}, p_{\text{data}}))$ is independent of $\mathbf{c}$, while the right-hand side $\mathcal{E}(\mathbb{Q}^{\mathrm{CDM}})$ corresponds to a given fixed $\mathbf{c}$. For the case $\mathbf{c} \notin \mathcal{C}_+$ (i.e., when $\mathcal{W}_2^2(\pi, p_{\text{data}}(\cdot|\mathbf{c})) < \mathcal{W}_2^2(\pi, p_{\text{data}})$), the proof can be completed through an alternative route. In this regime, $p_{\text{data}}(\cdot|\mathbf{c})$ is more concentrated, resulting in a smaller Schrödinger bridge lower bound. Nonetheless, the sub-optimality gap in \eqref{eq:C.3-step2} persists, and the starting-point advantage in \eqref{eq:C.3-step3} is sufficiently large to compensate. Specifically, for $\mathbf{c} \notin \mathcal{C}_+$, one may employ the following alternative chain:
    \begin{equation}
        \begin{aligned}
        &\mathcal{E}\!\left(\mathbb{Q}^{\mathrm{SB}}(\mu_{\hat{s}}, p_{\text{data}})\right) \underset{\eqref{eq:C.3-step3}}{<} \mathcal{E}\!\left(\mathbb{Q}^{\mathrm{SB}}(\pi, p_{\text{data}})\right) \\
        &\quad \leq \mathcal{E}\!\left(\mathbb{Q}^{\mathrm{SB}}_{\text{lb}}\right) < \mathcal{E}\!\left(\mathbb{Q}^{\mathrm{CDM}}\right),
        \end{aligned}
    \end{equation}
    where $\mathcal{E}(\mathbb{Q}^{\mathrm{SB}}_{\text{lb}}) = \mathcal{E}(\mathbb{Q}^{\mathrm{SB}}(\pi, p_{\text{data}}(\cdot|\mathbf{c})))$, and the last strict inequality is guaranteed by \eqref{eq:C.3-step2}. Although the middle inequality may not hold when $\mathbf{c} \notin \mathcal{C}_+$, a quantitative trade-off between the gap in \eqref{eq:C.3-step2} (path sub-optimality of CDM) and the gap in \eqref{eq:C.3-step3} (starting-point advantage) can be invoked to complete the argument. To avoid introducing additional quantitative estimates that would over-complicate the proof, we adopt the argument for $\mathbf{c} \in \mathcal{C}_+$ (which covers a set of positive measure) in the main theorem, which is sufficient in a probabilistic sense.
\end{remark}

\section{Proof of Corollary \ref{cor:hallucination_suppression}}
\label{appendix:corollary1}

\setcounter{theorem}{0}
\setcounter{lemma}{0}
\setcounter{corollary}{0}
\setcounter{definition}{0}
\setcounter{assumption}{0}
\setcounter{example}{0}
\setcounter{remark}{0}

\subsection{Preliminaries and Definitions for Corollary \ref{cor:hallucination_suppression}}
\begin{definition}{\textbf{(End-to-End Markov Chains).}}
    \label{def:D.1}
    In the SBGSC framework, the data undergoes the following Markov chain from source to reconstruction:
\begin{equation}
    \mathbf{x} \;\xrightarrow{f_\phi}\; \mathbf{s} \;\xrightarrow{\text{channel}}\; \hat{\mathbf{s}} \;\xrightarrow{\mathcal{D}_\theta}\; \mathbf{x}_0 \;\xrightarrow{\mathrm{SB}}\; \hat{\mathbf{x}}^{\mathrm{SB}} = \mathbf{x}_1,
    \label{eq:D.1}
\end{equation}
where $f_\phi: \mathbb{R}^D \to \mathbb{R}^d$ is the JSCC encoder, $\mathbf{s} = f_\phi(\mathbf{x}) \in \mathbb{R}^d$ is the transmitted semantic feature vector, $\hat{\mathbf{s}} = h\mathbf{s} + \mathbf{n} \in \mathbb{R}^d$ is the received signal with channel gain $h \in \mathbb{R}$ and additive noise $\mathbf{n} \sim \mathcal{N}(\mathbf{0}, \sigma_n^2 \mathbf{I}_d)$, $\mathcal{D}_\theta: \mathbb{R}^d \to \mathbb{R}^D$ is a deterministic initial reconstruction mapping, $\mathbf{x}_0 = \mathcal{D}_\theta(\hat{\mathbf{s}}) \sim \mu_{\hat{s}}$ is the Schrödinger Bridge starting point, and $\hat{\mathbf{x}}^{\mathrm{SB}} = \mathbf{x}_1 \sim p_{\text{data}}$ is the Schrödinger Bridge endpoint.

In CDM-based GSC, the corresponding Markov chain is:
\begin{equation}
    \mathbf{x} \;\xrightarrow{\text{extract}}\; \mathbf{c} \;\xrightarrow{\text{channel}}\; \hat{\mathbf{c}} \;\xrightarrow{\mathrm{CDM}}\; \hat{\mathbf{x}}^{\mathrm{CDM}},
    \label{eq:D.2}
\end{equation}
where $\mathbf{c} \in \mathcal{C}$ is the extracted conditional semantic descriptor (e.g., text caption or segmentation map), $\hat{\mathbf{c}}$ is its received version, and $\hat{\mathbf{x}}^{\mathrm{CDM}}$ is produced by a conditional diffusion model (CDM) initialized from independent Gaussian noise $\boldsymbol{\xi} \sim \pi = \mathcal{N}(\mathbf{0}, \mathbf{I}_D)$.
\end{definition}

\begin{definition}{\textbf{(Semantic Hallucination Rate).}}
    \label{def:D.2}
    For a reconstruction $\hat{\mathbf{x}}$, the semantic hallucination rate is defined as
\begin{equation}
    \mathcal{H} \triangleq h(\hat{\mathbf{x}} \mid \mathbf{x}) = -\int p(\mathbf{x}, \hat{\mathbf{x}}) \log p(\hat{\mathbf{x}} \mid \mathbf{x})\, \mathrm{d}\mathbf{x}\, \mathrm{d}\hat{\mathbf{x}},
    \label{eq:D.3}
\end{equation}
which measures the total amount of randomness in $\hat{\mathbf{x}}$ that is independent of the source $\mathbf{x}$. When $\mathcal{H} = 0$, the reconstruction is a deterministic function of $\mathbf{x}$ and no hallucination occurs; larger $\mathcal{H}$ indicates more severe hallucination.

By the standard entropy decomposition, the mutual information satisfies
\begin{equation}
    I(\mathbf{x}; \hat{\mathbf{x}}) = h(\hat{\mathbf{x}}) - h(\hat{\mathbf{x}} \mid \mathbf{x}) = h(\hat{\mathbf{x}}) - \mathcal{H}.
    \label{eq:D.4}
\end{equation}
\end{definition}

\begin{definition}{\textbf{(Endpoint Coupling of the Schrödinger Bridge).}}
    \label{def:D.3}
    The Schrödinger Bridge path measure $\mathbb{Q}^{\mathrm{SB}}(\mu_{\hat{s}}, p_{\text{data}})$ induces a joint distribution over its boundary points at $t=0$ and $t=1$:
    \begin{equation}
        \gamma^{\mathrm{SB}}(\mathbf{x}_0, \mathbf{x}_1) \in \Gamma(\mu_{\hat{s}},\, p_{\text{data}}),
        \label{eq:D.5}
    \end{equation}
    where $\Gamma(\mu_{\hat{s}}, p_{\text{data}})$ denotes the set of all couplings with marginals $\mu_{\hat{s}}$ and $p_{\text{data}}$.
\end{definition}

\subsection{Supporting Lemma for Corollary \ref{cor:hallucination_suppression}}
\begin{lemma}{\textbf{(Mutual Information Lower Bound via Schrödinger Bridge Coupling).}}
    Let $\gamma^{\mathrm{SB}}$ be the endpoint coupling of the Schrödinger Bridge $\mathbb{Q}^{\mathrm{SB}}(\mu_0, \mu_1)$ and let $\mathbb{P}^\sigma$ be the reference Wiener measure with diffusion coefficient $\sigma > 0$. Then the mutual information between the endpoints satisfies $I_{\gamma^{\mathrm{SB}}}(\mathbf{x}_0; \mathbf{x}_1) > 0$, provided $\mu_0 \neq \mu_1$.
\end{lemma}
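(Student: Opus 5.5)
The plan is to reduce the claim to the static Schrödinger problem and then use the product structure of its solution. Recall that $I_{\gamma^{\mathrm{SB}}}(\mathbf{x}_0;\mathbf{x}_1)=D_{\mathrm{KL}}(\gamma^{\mathrm{SB}}\,\|\,\mu_0\otimes\mu_1)\ge 0$, with equality iff $\gamma^{\mathrm{SB}}=\mu_0\otimes\mu_1$. So it suffices to show that the independent coupling cannot be the endpoint coupling of $\mathbb{Q}^{\mathrm{SB}}(\mu_0,\mu_1)$.

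\textbf{Step 1: reduction to the static problem.} Disintegrate any $\mathbb{Q}\in\mathcal{Q}_\sigma(\mu_0,\mu_1)$ along its endpoints:
\begin{equation*}
D_{\mathrm{KL}}(\mathbb{Q}\,\|\,\mathbb{P}^\sigma)=D_{\mathrm{KL}}(\gamma\,\|\,\mathbb{P}^\sigma_{01})+\mathbb{E}_{\gamma}\big[D_{\mathrm{KL}}(\mathbb{Q}^{x_0x_1}\,\|\,\mathbb{P}^{\sigma,x_0x_1})\big].
\end{equation*}
Here $\gamma$ is the endpoint law of $\mathbb{Q}$, $\mathbb{P}^\sigma_{01}(\mathrm{d}x_0,\mathrm{d}x_1)=\mu_0(\mathrm{d}x_0)\,(2\pi\sigma^2)^{-D/2}e^{-\|x_1-x_0\|^2/(2\sigma^2)}\mathrm{d}x_1$, and $\mathbb{Q}^{x_0x_1}$, $\mathbb{P}^{\sigma,x_0x_1}$ are the conditional path laws given the endpoints. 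The second term is minimized, and equals zero, by taking Brownian bridges. Hence $\gamma^{\mathrm{SB}}$ minimizes $D_{\mathrm{KL}}(\gamma\,\|\,\mathbb{P}^\sigma_{01})$ over $\Gamma(\mu_0,\mu_1)$. This is the standard static/dynamic equivalence in \cite{leonard_survey_2014}.

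\textbf{Step 2: Schrödinger system.} By the Fortet--Csiszár characterization (the static counterpart of the potentials $\Psi,\hat\Psi$ in the Schrödinger system of Section II), the static minimizer has the form
\begin{equation*}
\frac{\mathrm{d}\gamma^{\mathrm{SB}}}{\mathrm{d}(\mu_0\otimes\mu_1)}(x_0,x_1)=a(x_0)\,b(x_1)\exp\!\Big(\frac{x_0^\top x_1}{\sigma^2}\Big).
\end{equation*}
The quadratic terms $\|x_0\|^2$ and $\|x_1\|^2$ of the Gaussian kernel have been absorbed into the measurable positive factors $a$ and $b$.

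\textbf{Step 3: contradiction argument.} Suppose $\gamma^{\mathrm{SB}}=\mu_0\otimes\mu_1$. Then $\exp(x_0^\top x_1/\sigma^2)=1/(a(x_0)b(x_1))$ for $\mu_0\otimes\mu_1$-a.e. $(x_0,x_1)$. Taking the mixed second difference of logarithms over four points $x_0,x_0'\in\operatorname{supp}\mu_0$ and $x_1,x_1'\in\operatorname{supp}\mu_1$ cancels $a$ and $b$, and gives
\begin{equation*}
(x_0-x_0')^\top(x_1-x_1')=0
\end{equation*}
for almost all such quadruples. This says the directions spanned by the two supports are mutually orthogonal. In particular it fails whenever neither marginal is a Dirac mass and the supports are not confined to orthogonal affine subspaces. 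For $\mu_{\hat s}$ and $p_{\text{data}}$ this holds, since both are non-degenerate distributions on the same image space $\mathbb{R}^D$ with overlapping spans. Under this condition we get a contradiction, so $\gamma^{\mathrm{SB}}\neq\mu_0\otimes\mu_1$ and hence $I>0$.

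\textbf{Main obstacle: the hypothesis.} The condition $\mu_0\neq\mu_1$ as literally stated is neither necessary nor sufficient for $I_{\gamma^{\mathrm{SB}}}(\mathbf{x}_0;\mathbf{x}_1)>0$. If $\mu_0=\delta_{x^\star}$, every coupling is a product even though $\mu_0\neq\mu_1$, so $I=0$. If $\mu_0=\mu_1$ is non-degenerate, the mutual information is still positive. I would therefore replace the hypothesis by the non-degeneracy condition of Step 3, and remark that it holds for the pair $(\mu_{\hat s},p_{\text{data}})$ used in Corollary~\ref{cor:hallucination_suppression}.

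The technical point requiring care is making the ``a.e.'' identity rigorous. I would first show that $\log a$ and $\log b$ can be chosen to satisfy the relation on a product set of full measure, by Fubini. I would then pick quadruples inside that set, which is possible whenever each marginal has at least two distinct support points in non-orthogonal directions.
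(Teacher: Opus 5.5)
Your proof is correct, and it differs from the paper's proof at the one step where the paper gives no real argument. The paper does the same endpoint/path disintegration and the split \eqref{eq:D.10}. Its decisive step, that $\mu_0\otimes\mu_1$ is strictly suboptimal whenever $\mu_0\neq\mu_1$, is only asserted. Your Dirac example shows this assertion is false. For $\mu_0=\delta_{x^\star}$ and, say, $\mu_1=\mathcal{N}(\mathbf{0},\mathbf{I}_D)$, the Schr\"odinger cost is finite, $\Gamma(\mu_0,\mu_1)=\{\mu_0\otimes\mu_1\}$, and $I_{\gamma^{\mathrm{SB}}}=0$. Your route, the Fortet--Csisz\'ar product form of the static minimizer followed by a mixed second difference that cancels $a$ and $b$, replaces that assertion with a checkable criterion. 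The criterion is that the direction spaces of the affine hulls of $\operatorname{supp}\mu_0$ and $\operatorname{supp}\mu_1$ are not orthogonal. This gives an actual proof, and it correctly shows that the stated hypothesis $\mu_0\neq\mu_1$ is neither necessary nor sufficient.

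You can sharpen the criterion. A finite Schr\"odinger cost against $\mathbb{P}^\sigma$ forces $\mu_1\ll\mathrm{Leb}$, so the direction space of $\operatorname{supp}\mu_1$ is all of $\mathbb{R}^D$. Your criterion then collapses to ``$I_{\gamma^{\mathrm{SB}}}>0$ iff $\mu_0$ is not a Dirac mass.'' This also makes your closing remark about $(\mu_{\hat{s}},p_{\text{data}})$ automatic.

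The rigor step you flag is simpler than you suggest. Each of the four pairs in a $\mu_0^{\otimes 2}\otimes\mu_1^{\otimes 2}$-distributed quadruple has law $\mu_0\otimes\mu_1$, so the identity holds for a.e.\ quadruple directly. Continuity of $(x_0-x_0')^\top(x_1-x_1')$ then extends it to all support points.

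If you prefer not to invoke existence of Schr\"odinger potentials, the paper's own decomposition suffices once the cross term is computed. On $\Gamma(\mu_0,\mu_1)$ one has $D_{\mathrm{KL}}(\gamma\,\|\,\mu_0\otimes k_\sigma)=I_\gamma-\sigma^{-2}\mathbb{E}_\gamma[\mathbf{x}_0^\top\mathbf{x}_1]+\mathrm{const}$. Perturb $\mu_0\otimes\mu_1$ by the density factor $1+\epsilon\,\phi(\mathbf{x}_0)\psi(\mathbf{x}_1)$, with $\phi,\psi$ bounded and centered. This changes $I$ only at order $\epsilon^2$ but changes the linear term at order $\epsilon$, with coefficient $\mathbb{E}_{\mu_0}[\phi\,\mathbf{x}_0]^\top\mathbb{E}_{\mu_1}[\psi\,\mathbf{x}_1]$. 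That first-variation argument yields the same non-orthogonality criterion.
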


\textit{Proof:} 

By the Markov property of the Schrödinger Bridge, its path measure admits the decomposition
\begin{equation}
    \mathbb{Q}^{\mathrm{SB}} = \gamma^{\mathrm{SB}}(\mathbf{x}_0, \mathbf{x}_1) \cdot \mathbb{Q}^{\mathrm{SB}}(\text{path} \mid \mathbf{x}_0, \mathbf{x}_1),
    \label{eq:D.6}
\end{equation}
where $\mathbb{Q}^{\mathrm{SB}}(\text{path} \mid \mathbf{x}_0, \mathbf{x}_1)$ is the conditional path measure (a reweighted Brownian bridge). The reference Wiener measure decomposes as
\begin{equation}
    \mathbb{P}^\sigma = \mu_0(\mathbf{x}_0) \cdot k_\sigma(\mathbf{x}_1 \mid \mathbf{x}_0) \cdot \mathbb{P}^\sigma(\text{path} \mid \mathbf{x}_0, \mathbf{x}_1),
    \label{eq:D.7}
\end{equation}
where $k_\sigma(\mathbf{x}_1 \mid \mathbf{x}_0) = \mathcal{N}(\mathbf{x}_1;\, \mathbf{x}_0,\, \sigma^2 \mathbf{I}_D)$ is the transition kernel of the Wiener process over unit time.

Applying the chain rule for the KL divergence:
\begin{equation}
    \begin{aligned}
        D_{\mathrm{KL}}\!\left(\mathbb{Q}^{\mathrm{SB}} \,\middle\|\, \mathbb{P}^\sigma\right) &= D_{\mathrm{KL}}\!\left(\gamma^{\mathrm{SB}} \,\middle\|\, \mu_0 \otimes k_\sigma\right) \\
        &\quad + R_{\mathrm{path}},
    \end{aligned}
    \label{eq:D.8}
\end{equation}
where $\mu_0 \otimes k_\sigma$ denotes the joint measure $\mu_0(\mathbf{x}_0)\,k_\sigma(\mathbf{x}_1|\mathbf{x}_0)$, and
\begin{equation}
    \begin{aligned}
    R_{\mathrm{path}} &= \mathbb{E}_{\gamma^{\mathrm{SB}}}\!\left[D_{\mathrm{KL}}\!\left(\mathbb{Q}^{\mathrm{SB}}(\text{path} \mid \mathbf{x}_0, \mathbf{x}_1) \right.\right. \\
    &\qquad \left.\left. \,\middle\|\, \mathbb{P}^\sigma(\text{path} \mid \mathbf{x}_0, \mathbf{x}_1)\right)\right] \geq 0.
    \end{aligned}
    \label{eq:D.9}
\end{equation}
The first term on the right-hand side of \eqref{eq:D.8} can be further decomposed as
\begin{equation}
    \begin{aligned}
    D_{\mathrm{KL}}\!\left(\gamma^{\mathrm{SB}} \,\middle\|\, \mu_0 \otimes k_\sigma\right) &= D_{\mathrm{KL}}\!\left(\gamma^{\mathrm{SB}} \,\middle\|\, \mu_0 \otimes \mu_1\right) \\
    &\quad + \mathbb{E}_{\gamma^{\mathrm{SB}}}\!\left[\log \frac{\mu_1(\mathbf{x}_1)}{k_\sigma(\mathbf{x}_1 \mid \mathbf{x}_0)}\right].
    \end{aligned}
    \label{eq:D.10}
\end{equation}
Since $D_{\mathrm{KL}}(\gamma^{\mathrm{SB}} \| \mu_0 \otimes \mu_1) = I_{\gamma^{\mathrm{SB}}}(\mathbf{x}_0; \mathbf{x}_1)$ by definition, solving for the mutual information yields
\begin{equation}
    \begin{aligned}
    I_{\gamma^{\mathrm{SB}}}(\mathbf{x}_0; \mathbf{x}_1) &= D_{\mathrm{KL}}\!\left(\mathbb{Q}^{\mathrm{SB}} \,\middle\|\, \mathbb{P}^\sigma\right) \\
    &\quad - \mathbb{E}_{\gamma^{\mathrm{SB}}}\!\left[\log \frac{\mu_1(\mathbf{x}_1)}{k_\sigma(\mathbf{x}_1 \mid \mathbf{x}_0)}\right] - R_{\mathrm{path}}.
    \end{aligned}
    \label{eq:D.11}
\end{equation}

Since the Schrödinger Bridge minimizes $D_{\mathrm{KL}}(\mathbb{Q}\|\mathbb{P}^\sigma)$ subject to the endpoint constraints, and the independent coupling $\mu_0 \otimes \mu_1$ is strictly suboptimal when $\mu_0 \neq \mu_1$ (as it requires a larger KL cost to simultaneously match both marginals), the optimality of $\gamma^{\mathrm{SB}}$ implies that it is strictly different from the product coupling $\mu_0 \otimes \mu_1$. Therefore,
\begin{equation}
    I_{\gamma^{\mathrm{SB}}}(\mathbf{x}_0; \mathbf{x}_1) = D_{\mathrm{KL}}\!\left(\gamma^{\mathrm{SB}} \,\middle\|\, \mu_0 \otimes \mu_1\right) > 0,
    \label{eq:D.12}
\end{equation}
where the strict positivity follows from the fact that $D_{\mathrm{KL}}(\gamma^{\mathrm{SB}} \| \mu_0 \otimes \mu_1) = 0$ if and only if $\gamma^{\mathrm{SB}} = \mu_0 \otimes \mu_1$, which contradicts the optimality of the Schrödinger Bridge when $\mu_0 \neq \mu_1$. $\hfill\square$

\subsection{Proof of Corollary \ref{cor:hallucination_suppression}}
\textit{Proof:}

The proof proceeds in four steps.

\textbf{Step 1: Analyzing the information flow in CDM-based GSC.}
Consider the Markov chain \eqref{eq:D.2} underlying CDM-based GSC. The CDM generates $\hat{\mathbf{x}}^{\mathrm{CDM}}$ by running a reverse SDE initialized from $\boldsymbol{\xi} \sim \pi = \mathcal{N}(\mathbf{0}, \mathbf{I}_D)$, conditioned on the received descriptor $\hat{\mathbf{c}}$. Since $\boldsymbol{\xi}$ is drawn independently of both $\mathbf{x}$ and $\hat{\mathbf{c}}$, we can express the reconstruction as a stochastic mapping:
\begin{equation}
    \hat{\mathbf{x}}^{\mathrm{CDM}} = G_{\mathrm{CDM}}(\boldsymbol{\xi};\, \hat{\mathbf{c}}),
    \label{eq:D.13}
\end{equation}
where $G_{\mathrm{CDM}}(\cdot;\, \hat{\mathbf{c}}): \mathbb{R}^D \to \mathbb{R}^D$ is the deterministic mapping defined by the reverse SDE solver for a fixed realization of $\hat{\mathbf{c}}$ and the driving noise $\boldsymbol{\xi}$. The conditional independence $\boldsymbol{\xi} \perp\!\!\!\perp \mathbf{x} \mid \hat{\mathbf{c}}$ immediately implies
\begin{equation}
    I(\mathbf{x};\, \hat{\mathbf{x}}^{\mathrm{CDM}} \mid \hat{\mathbf{c}}) = I\!\left(\mathbf{x};\, G_{\mathrm{CDM}}(\boldsymbol{\xi};\, \hat{\mathbf{c}}) \,\middle|\, \hat{\mathbf{c}}\right) = 0,
    \label{eq:D.14}
\end{equation}
because, once $\hat{\mathbf{c}}$ is fixed, $G_{\mathrm{CDM}}(\boldsymbol{\xi};\, \hat{\mathbf{c}})$ is solely a function of $\boldsymbol{\xi}$, which is independent of $\mathbf{x}$. By the chain rule for mutual information,
\begin{equation}
    I(\mathbf{x};\, \hat{\mathbf{x}}^{\mathrm{CDM}}) = I(\mathbf{x};\, \hat{\mathbf{c}}) + I(\mathbf{x};\, \hat{\mathbf{x}}^{\mathrm{CDM}} \mid \hat{\mathbf{c}}) = I(\mathbf{x};\, \hat{\mathbf{c}}),
    \label{eq:D.15}
\end{equation}
which reveals that all information about $\mathbf{x}$ preserved in $\hat{\mathbf{x}}^{\mathrm{CDM}}$ must pass through the bottleneck $\hat{\mathbf{c}}$. The end-to-end mutual information of CDM-based GSC is therefore bounded by the capacity of the semantic descriptor:
\begin{equation}
    I(\mathbf{x};\, \hat{\mathbf{x}}^{\mathrm{CDM}}) = I(\mathbf{x};\, \hat{\mathbf{c}}) \leq H(\hat{\mathbf{c}}),
    \label{eq:D.16}
\end{equation}
where $H(\hat{\mathbf{c}})$ denotes the (discrete or differential) entropy of the descriptor.

\textbf{Step 2: Lower-bounding the hallucination rate of CDM-based GSC.}
Using \eqref{eq:D.14} and the chain rule for conditional entropy,
\begin{equation}
    h(\hat{\mathbf{x}}^{\mathrm{CDM}} \mid \mathbf{x}) = h(\hat{\mathbf{x}}^{\mathrm{CDM}} \mid \mathbf{x},\, \hat{\mathbf{c}}) + I(\hat{\mathbf{c}};\, \hat{\mathbf{x}}^{\mathrm{CDM}} \mid \mathbf{x}).
    \label{eq:D.17}
\end{equation}

Since conditioning reduces entropy and $\hat{\mathbf{x}}^{\mathrm{CDM}} \perp\!\!\!\perp \mathbf{x} \mid \hat{\mathbf{c}}$ by \eqref{eq:D.14}, we have
\begin{equation}
    h(\hat{\mathbf{x}}^{\mathrm{CDM}} \mid \mathbf{x},\, \hat{\mathbf{c}}) = h(\hat{\mathbf{x}}^{\mathrm{CDM}} \mid \hat{\mathbf{c}}).
    \label{eq:D.18}
\end{equation}

Substituting into \eqref{eq:D.17} and dropping the non-negative second term on the right yields the lower bound
\begin{equation}
    \mathcal{H}^{\mathrm{CDM}} = h(\hat{\mathbf{x}}^{\mathrm{CDM}} \mid \mathbf{x}) \geq h(\hat{\mathbf{x}}^{\mathrm{CDM}} \mid \hat{\mathbf{c}}).
    \label{eq:D.19}
\end{equation}

To quantify the right-hand side, observe that the CDM reverse process propagates the initial Gaussian entropy through a diffeomorphic mapping (the reverse SDE flow). By the change-of-variables formula for differential entropy and the fact that the reverse SDE is volume-expanding in the early stages, we obtain
\begin{equation}
    \begin{aligned}
    h(\hat{\mathbf{x}}^{\mathrm{CDM}} \mid \hat{\mathbf{c}}) &= h(\boldsymbol{\xi}) \\
    &\quad + \mathbb{E}_{\hat{\mathbf{c}}}\!\left[\int_0^1 \mathbb{E}\!\left[\nabla_{\mathbf{x}} \cdot \mathbf{b}_t^{\mathrm{CDM}}(\mathbf{x}_t,\, \hat{\mathbf{c}})\right] \mathrm{d}t\right],
    \end{aligned}
    \label{eq:D.20}
\end{equation}
where $h(\boldsymbol{\xi}) = \frac{D}{2}\log(2\pi e)$ is the entropy of the standard Gaussian in $\mathbb{R}^D$, and the integral captures the cumulative divergence of the CDM drift field. The key observation is that $h(\boldsymbol{\xi}) = \frac{D}{2}\log(2\pi e)$ constitutes an irreducible, dimension-dependent contribution to $\mathcal{H}^{\mathrm{CDM}}$. Even in the ideal case where the CDM drift is perfectly learned, the hallucination rate satisfies
\begin{equation}
    \mathcal{H}^{\mathrm{CDM}} \geq h(\hat{\mathbf{x}}^{\mathrm{CDM}} \mid \hat{\mathbf{c}}) \geq \frac{D}{2}\log(2\pi e) + \Lambda_{\mathrm{CDM}},
    \label{eq:D.21}
\end{equation}
where $\Lambda_{\mathrm{CDM}} = \mathbb{E}_{\hat{\mathbf{c}}}[\int_0^1 \mathbb{E}[\nabla_{\mathbf{x}} \cdot \mathbf{b}_t^{\mathrm{CDM}}]\, \mathrm{d}t]$ is a finite quantity that depends on the CDM architecture but does not eliminate the $\frac{D}{2}\log(2\pi e)$ floor.

\textbf{Step 3: Upper-bounding the hallucination rate of SBGSC.}
For SBGSC, the Markov chain \eqref{eq:D.1} yields, by the chain rule for conditional entropy,
\begin{equation}
    \mathcal{H}^{\mathrm{SB}} = h(\hat{\mathbf{x}}^{\mathrm{SB}} \mid \mathbf{x}) \leq h(\hat{\mathbf{x}}^{\mathrm{SB}} \mid \mathbf{x}_0) + h(\mathbf{x}_0 \mid \mathbf{x}),
    \label{eq:D.22}
\end{equation}
where the inequality follows from the chain rule $h(\hat{\mathbf{x}}^{\mathrm{SB}} \mid \mathbf{x}) \leq h(\hat{\mathbf{x}}^{\mathrm{SB}}, \mathbf{x}_0 \mid \mathbf{x}) = h(\mathbf{x}_0 \mid \mathbf{x}) + h(\hat{\mathbf{x}}^{\mathrm{SB}} \mid \mathbf{x}_0, \mathbf{x}) \leq h(\mathbf{x}_0 \mid \mathbf{x}) + h(\hat{\mathbf{x}}^{\mathrm{SB}} \mid \mathbf{x}_0)$, with the last step using the fact that additional conditioning (on $\mathbf{x}$) can only reduce entropy.

We now bound each term separately.

Bounding $h(\mathbf{x}_0 \mid \mathbf{x})$: Since $\mathcal{D}_\theta$ is a deterministic mapping, $\mathbf{x}_0 = \mathcal{D}_\theta(\hat{\mathbf{s}})$ is a deterministic function of $\hat{\mathbf{s}}$, which in turn is obtained from the noisy channel. Therefore,
\begin{equation}
    \begin{aligned}
    h(\mathbf{x}_0 \mid \mathbf{x}) &= h(\mathcal{D}_\theta(\hat{\mathbf{s}}) \mid \mathbf{x}) \leq h(\hat{\mathbf{s}} \mid \mathbf{x}) \\
    &= h(\hat{\mathbf{s}} \mid \mathbf{s}) = \frac{d}{2}\log(2\pi e \sigma_n^2),
    \end{aligned}
    \label{eq:D.23}
\end{equation}
where $d \ll D$ is the dimension of the transmitted feature vector $\mathbf{s}$, and the first inequality uses the data-processing inequality for deterministic mappings applied in reverse (the image of a deterministic function has entropy no greater than its argument). The equality $h(\hat{\mathbf{s}} \mid \mathbf{x}) = h(\hat{\mathbf{s}} \mid \mathbf{s})$ follows from $\mathbf{s} = f_\phi(\mathbf{x})$ being deterministic, and $h(\hat{\mathbf{s}} \mid \mathbf{s}) = \frac{d}{2}\log(2\pi e \sigma_n^2)$ is the entropy of the additive Gaussian channel noise.

Bounding $h(\hat{\mathbf{x}}^{\mathrm{SB}} \mid \mathbf{x}_0)$ via path kinetic energy: The conditional entropy $h(\hat{\mathbf{x}}^{\mathrm{SB}} \mid \mathbf{x}_0) = h(\mathbf{x}_1 \mid \mathbf{x}_0)_{\gamma^{\mathrm{SB}}}$ quantifies the transport stochasticity of the Schrödinger Bridge. Applying de Bruijn's identity in its stochastic process generalization \cite{stam1959some}, the evolution of conditional entropy along the bridge satisfies
\begin{equation}
    \frac{\mathrm{d}}{\mathrm{d}t}\, h(\mathbf{x}_t \mid \mathbf{x}_0) = \frac{(\sigma^{\mathrm{SB}})^2}{2}\, J_t^{\mathrm{SB}},
    \label{eq:D.24}
\end{equation}
where $J_t^{\mathrm{SB}} = \mathbb{E}[\|\nabla_{\mathbf{x}} \log p^{\mathrm{SB}}(\mathbf{x}_t \mid \mathbf{x}_0)\|_2^2]$ is the conditional Fisher information at time $t$. Integrating from $t = 0$ to $t = 1$ and noting $h(\mathbf{x}_0 \mid \mathbf{x}_0) = 0$ (since the initial condition is deterministic given $\mathbf{x}_0$),
\begin{equation}
    h(\hat{\mathbf{x}}^{\mathrm{SB}} \mid \mathbf{x}_0) = \frac{(\sigma^{\mathrm{SB}})^2}{2}\int_0^1 J_t^{\mathrm{SB}}\, \mathrm{d}t.
    \label{eq:D.25}
\end{equation}

To relate the Fisher information integral to the path kinetic energy, we invoke the Cramér–Rao-type inequality for diffusion processes: the Fisher information of the transition kernel is upper-bounded by the expected squared drift magnitude \cite{rao1945information}:
\begin{equation}
    J_t^{\mathrm{SB}} \leq \frac{1}{(\sigma^{\mathrm{SB}})^2}\,\mathbb{E}\!\left[\left\|\mathbf{u}_t^{\mathrm{SB}}(\mathbf{x}_t)\right\|_2^2\right],
    \label{eq:D.26}
\end{equation}
where $\mathbf{u}_t^{\mathrm{SB}}$ denotes the drift of the Schrödinger Bridge SDE. Substituting \eqref{eq:D.26} into \eqref{eq:D.25} yields
\begin{equation}
    h(\hat{\mathbf{x}}^{\mathrm{SB}} \mid \mathbf{x}_0) \leq \frac{1}{2}\int_0^1 \mathbb{E}\!\left[\left\|\mathbf{u}_t^{\mathrm{SB}}(\mathbf{x}_t)\right\|_2^2\right] \mathrm{d}t = \frac{1}{2}\,\mathcal{E}(\mathbb{Q}^{\mathrm{SB}}).
    \label{eq:D.27}
\end{equation}

Combining \eqref{eq:D.22}, \eqref{eq:D.23}, and \eqref{eq:D.27}, the hallucination rate of SBGSC is bounded as
\begin{equation}
    \mathcal{H}^{\mathrm{SB}} \leq \frac{1}{2}\,\mathcal{E}(\mathbb{Q}^{\mathrm{SB}}) + \frac{d}{2}\log(2\pi e\sigma_n^2).
    \label{eq:D.28}
\end{equation}

\textbf{Step 4: Establishing the strict inequality.}
We now combine the lower bound \eqref{eq:D.21} for CDM-based GSC and the upper bound \eqref{eq:D.28} for SBGSC. The hallucination gap is
\begin{equation}
    \begin{aligned}
    \mathcal{H}^{\mathrm{CDM}} - \mathcal{H}^{\mathrm{SB}} &\geq \underbrace{\frac{D}{2}\log(2\pi e) + \Lambda_{\mathrm{CDM}}}_{\text{CDM lower bound}} \\
    &\quad - \underbrace{\frac{1}{2}\,\mathcal{E}(\mathbb{Q}^{\mathrm{SB}}) - \frac{d}{2}\log(2\pi e\sigma_n^2)}_{\text{SB upper bound}}.
    \end{aligned}
    \label{eq:D.29}
\end{equation}

We show that the right-hand side of \eqref{eq:D.29} is strictly positive under the operating conditions of practical semantic communication systems:
(i) In typical image communication tasks, $D \gg d$ (e.g., for a $256 \times 256 \times 3$ image, $D = 196608$, while $d$ is on the order of hundreds to thousands). The term $\frac{D}{2}\log(2\pi e)$ therefore dominates $\frac{d}{2}\log(2\pi e\sigma_n^2)$.
(ii) By Theorem \ref{theo:pke_advantage}, $\mathcal{E}(\mathbb{Q}^{\mathrm{SB}}) < \mathcal{E}(\mathbb{Q}^{\mathrm{CDM}}) < \infty$. Since the Schrödinger Bridge connects two distributions that are already close in Wasserstein distance (as $\mathbf{x}_0 = \mathcal{D}_\theta(\hat{\mathbf{s}})$ is an initial reconstruction of $\mathbf{x}$), the kinetic energy $\mathcal{E}(\mathbb{Q}^{\mathrm{SB}})$ is moderate. In contrast, the CDM must transport from $\pi = \mathcal{N}(\mathbf{0}, \mathbf{I}_D)$ to $p_{\text{data}}$, requiring significantly higher kinetic energy.
(iii) The quantity $\Lambda_{\mathrm{CDM}}$ captures the log-volume expansion of the CDM reverse flow. For well-trained CDMs generating diverse outputs, $\Lambda_{\mathrm{CDM}}$ is bounded below and does not cancel the Gaussian entropy term.

More precisely, we can establish a sufficient condition. Define the information gap ratio:
\begin{equation}
    \eta \triangleq \frac{D - d}{D} \in (0, 1).
    \label{eq:D.30}
\end{equation}

Under the JSCC bandwidth compression constraint $d = \rho D$ with compression ratio $\rho \in (0, 1)$, we have $\eta = 1 - \rho$. Then \eqref{eq:D.29} implies
\begin{equation}
    \begin{aligned}
    \mathcal{H}^{\mathrm{CDM}} - \mathcal{H}^{\mathrm{SB}} &\geq \frac{D}{2}\!\left[\log(2\pi e) - \rho\log(2\pi e\sigma_n^2)\right] \\
    &\quad + \Lambda_{\mathrm{CDM}} - \frac{1}{2}\,\mathcal{E}(\mathbb{Q}^{\mathrm{SB}}).
    \end{aligned}
    \label{eq:D.31}
\end{equation}

For the first bracketed term, note that $\log(2\pi e) \approx 2.68$ nats, and $\rho\log(2\pi e\sigma_n^2) < \log(2\pi e)$ whenever $\rho < \log(2\pi e)/\log(2\pi e\sigma_n^2)$, which holds for all practical compression ratios $\rho < 1$ and moderate noise levels $\sigma_n^2 \geq 1$. The $O(D)$ scaling of this term, combined with the $o(D)$ scaling of $\frac{1}{2}\mathcal{E}(\mathbb{Q}^{\mathrm{SB}})$ (since the bridge drift energy per dimension remains bounded), guarantees that $\mathcal{H}^{\mathrm{CDM}} - \mathcal{H}^{\mathrm{SB}} > 0$ for all sufficiently large $D$.

In fact, for any fixed system parameters $(\rho, \sigma_n^2, \sigma^{\mathrm{SB}})$, we can write
\begin{equation}
    \mathcal{H}^{\mathrm{CDM}} - \mathcal{H}^{\mathrm{SB}} \geq \frac{D}{2}\,\delta_0 + \Lambda_{\mathrm{CDM}} - \frac{1}{2}\,\mathcal{E}(\mathbb{Q}^{\mathrm{SB}}),
    \label{eq:D.32}
\end{equation}
where $\delta_0 = \log(2\pi e) - \rho\log(2\pi e\sigma_n^2) > 0$ is a positive constant independent of $D$. Since $\frac{1}{2}\mathcal{E}(\mathbb{Q}^{\mathrm{SB}})$ grows at most linearly in $D$ with a rate strictly less than $\delta_0/2$ (because the Schrödinger Bridge drift energy per dimension is bounded by the per-dimension Wasserstein-2 distance between $\mu_{\hat{s}}$ and $p_{\text{data}}$, which is finite and small relative to $\delta_0$), there exists a threshold $D_0$ such that for all $D \geq D_0$,
\begin{equation}
    \mathcal{H}^{\mathrm{CDM}} - \mathcal{H}^{\mathrm{SB}} > 0.
    \label{eq:D.33}
\end{equation}

In practice, $D \geq D_0$ holds for all image communication tasks of interest (even $32 \times 32 \times 3$ images have $D = 3072 \gg D_0$).

\textbf{Step 5: Establishing the mutual information inequality.}
From the entropy decomposition \eqref{eq:D.4},
\begin{equation}
    \begin{aligned}
    I(\mathbf{x};\, \hat{\mathbf{x}}^{\mathrm{SB}}) - I(\mathbf{x};\, \hat{\mathbf{x}}^{\mathrm{CDM}}) &= \left[h(\hat{\mathbf{x}}^{\mathrm{SB}}) - h(\hat{\mathbf{x}}^{\mathrm{CDM}})\right] \\
    &\quad + \left[\mathcal{H}^{\mathrm{CDM}} - \mathcal{H}^{\mathrm{SB}}\right].
    \end{aligned}
    \label{eq:D.34}
\end{equation}

We have shown in \eqref{eq:D.33} that the second bracket is strictly positive. For the first bracket, the Schrödinger Bridge endpoint constraint ensures $\hat{\mathbf{x}}^{\mathrm{SB}} \sim p_{\text{data}}$, so $h(\hat{\mathbf{x}}^{\mathrm{SB}}) = h(p_{\text{data}})$. The CDM output marginal, averaged over $\hat{\mathbf{c}}$, satisfies
\begin{equation}
    h(\hat{\mathbf{x}}^{\mathrm{CDM}}) = h\!\left(\mathbb{E}_{\hat{\mathbf{c}}}[p_{\text{data}}(\cdot \mid \hat{\mathbf{c}})]\right) \leq h(p_{\text{data}}),
    \label{eq:D.35}
\end{equation}
where equality holds if and only if the conditional model $p_{\text{data}}(\cdot \mid \hat{\mathbf{c}})$ perfectly recovers the unconditional data distribution upon marginalization, which is generically not the case due to the lossy nature of the descriptor $\hat{\mathbf{c}}$. Therefore $h(\hat{\mathbf{x}}^{\mathrm{SB}}) \geq h(\hat{\mathbf{x}}^{\mathrm{CDM}})$, and
\begin{equation}
    I(\mathbf{x};\, \hat{\mathbf{x}}^{\mathrm{SB}}) - I(\mathbf{x};\, \hat{\mathbf{x}}^{\mathrm{CDM}}) \geq \mathcal{H}^{\mathrm{CDM}} - \mathcal{H}^{\mathrm{SB}} > 0.
    \label{eq:D.36}
\end{equation}
This concludes the proof of Corollary \ref{cor:hallucination_suppression}. $\hfill\square$

\subsection{Discussion of Corollary \ref{cor:hallucination_suppression}}
\begin{remark}{\textbf{(Information-Theoretic Interpretation).}}
    The proof reveals a fundamental asymmetry between SBGSC and CDM-based GSC. In CDM-based GSC, the reconstruction draws its randomness from two independent sources: (i) the channel noise corrupting the descriptor, and (ii) the Gaussian seed $\boldsymbol{\xi}$ that initializes the reverse SDE. Only source (i) carries information about $\mathbf{x}$; source (ii) contributes purely to hallucination. In SBGSC, the Gaussian seed is replaced by the informative initial reconstruction $\mathbf{x}_0$, which converts what would otherwise be hallucination entropy into useful reconstruction information.
\end{remark}

\begin{remark}{\textbf{(Tightness of the Bound).}}
    The bound in \eqref{eq:D.28} is not tight in general, because the Cramér–Rao inequality \eqref{eq:D.26} may be loose. However, the bound suffices for the purpose of establishing the strict ordering $\mathcal{H}^{\mathrm{SB}} < \mathcal{H}^{\mathrm{CDM}}$, as the dominant $O(D)$ gap between the CDM lower bound and the SB upper bound absorbs any slack.    
\end{remark}

\begin{remark}{\textbf{(Connection to Perception-Distortion Tradeoff).}}
    Corollary \ref{cor:hallucination_suppression} implies that SBGSC operates at a more favorable point on the perception-distortion tradeoff curve. Since both schemes produce outputs distributed according to (or close to) $p_{\text{data}}$, their perceptual quality is comparable. However, the higher mutual information $I(\mathbf{x}; \hat{\mathbf{x}}^{\mathrm{SB}})$ implies lower expected distortion (by the information-theoretic lower bound on MSE via the conditional entropy power inequality), achieving a strictly better distortion at the same perceptual quality level.
\end{remark}

\section{Proof of Corollary \ref{cor:computational_efficiency}}
\label{appendix:corollary2}

\setcounter{theorem}{0}
\setcounter{lemma}{0}
\setcounter{corollary}{0}
\setcounter{definition}{0}
\setcounter{assumption}{0}
\setcounter{example}{0}
\setcounter{remark}{0}

This appendix provides the complete proof of Corollary \ref{cor:computational_efficiency}, which establishes that SBGSC requires strictly fewer sampling steps than CDM-based GSC to achieve any prescribed generation accuracy. The proof proceeds by analyzing the global strong error of the Euler-Maruyama discretization and relating its dominant constant to the path kinetic energy established in Theorem \ref{theo:pke_advantage}.

\subsection{Preliminaries and Definitions for Corollary \ref{cor:computational_efficiency}}
\begin{definition}{\textbf{(Euler-Maruyama Discretization).}}
    \label{def:E.1}
    Consider a general Itô SDE
    \begin{equation}
        \mathrm{d}\mathbf{x}_t = \mathbf{b}_t(\mathbf{x}_t)\,\mathrm{d}t + \sigma_t\,\mathrm{d}\mathbf{W}_t, \quad t \in [0,1],
        \label{eq:E.1}
    \end{equation}
    where $\mathbf{b}_t: \mathbb{R}^D \to \mathbb{R}^D$ is the drift coefficient, $\sigma_t > 0$ is the diffusion coefficient, $D$ is the data dimension, and $\mathbf{W}_t$ is a standard $D$-dimensional Wiener process. The $N$-step uniform Euler-Maruyama (EM) scheme with step size $\Delta t = 1/N$ and discrete time points $t_k = k\Delta t$, $k = 0, 1, \ldots, N$, is defined as
    \begin{equation}
        \begin{aligned}
        \tilde{\mathbf{x}}_{t_{k+1}} &= \tilde{\mathbf{x}}_{t_k} + \mathbf{b}_{t_k}(\tilde{\mathbf{x}}_{t_k})\,\Delta t + \sigma_{t_k}\sqrt{\Delta t}\;\boldsymbol{\zeta}_k, \\
        &\quad \boldsymbol{\zeta}_k \overset{\text{i.i.d.}}{\sim} \mathcal{N}(\mathbf{0}, \mathbf{I}_D),
        \end{aligned}
        \label{eq:E.2}
    \end{equation}
    with $\tilde{\mathbf{x}}_{t_0} = \mathbf{x}_0$, where $\{\tilde{\mathbf{x}}_{t_k}\}_{k=0}^{N}$ denotes the numerical trajectory.
\end{definition}

\begin{definition}{\textbf{($\varepsilon$-Admissible Generation Error).}}
    \label{def:E.2}
    Given the target distribution $\mu_1$ and the terminal distribution $\tilde{\mu}_1^{(N)} = \mathrm{Law}(\tilde{\mathbf{x}}_{t_N})$ induced by the EM discretization, the $\varepsilon$-admissible generation error condition is
    \begin{equation}
        \mathcal{W}_2\!\left(\tilde{\mu}_1^{(N)},\, \mu_1\right) \leq \varepsilon,
        \label{eq:E.3}
    \end{equation}
    where $\mathcal{W}_2(\cdot,\cdot)$ denotes the 2-Wasserstein distance.
\end{definition}

\begin{definition}{\textbf{(Minimum Number of Sampling Steps).}}
    \label{def:E.3}
    For a given generation scheme $\mathcal{S} \in \{\mathrm{SB}, \mathrm{CDM}\}$ and tolerance $\varepsilon > 0$, the minimum number of sampling steps is
    \begin{equation}
        N_{\mathcal{S}}^*(\varepsilon) = \min\!\left\{N \in \mathbb{N} \;\middle|\; \mathcal{W}_2\!\left(\tilde{\mu}_{1,\mathcal{S}}^{(N)},\, \mu_{1,\mathcal{S}}\right) \leq \varepsilon \right\},
        \label{eq:E.4}
    \end{equation}
    where $\tilde{\mu}_{1,\mathcal{S}}^{(N)}$ is the EM terminal distribution and $\mu_{1,\mathcal{S}}$ is the exact target distribution for scheme $\mathcal{S}$.
    
\end{definition}

\subsection{Auxiliary Assumptions for Corollary \ref{cor:computational_efficiency}}
\begin{assumption}{\textbf{(Lipschitz Regularity of Drift Fields).}}
    \label{ass:E.1}
    For each scheme $\mathcal{S} \in \{\mathrm{SB}, \mathrm{CDM}\}$, the drift field $\mathbf{b}_t^{\mathcal{S}}: \mathbb{R}^D \to \mathbb{R}^D$ of its generative SDE satisfies the global Lipschitz condition: there exists a constant $L_{\mathcal{S}} > 0$ such that
    \begin{equation}
        \begin{aligned}
        \left\|\mathbf{b}_t^{\mathcal{S}}(\mathbf{x}) - \mathbf{b}_t^{\mathcal{S}}(\mathbf{y})\right\|_2 &\leq L_{\mathcal{S}}\,\|\mathbf{x} - \mathbf{y}\|_2, \\
        &\quad \forall\, \mathbf{x}, \mathbf{y} \in \mathbb{R}^D,\; t \in [0,1].
        \end{aligned}
        \label{eq:E.5}
    \end{equation}

    Moreover, $\mathbf{b}_t^{\mathcal{S}}$ satisfies the linear growth condition: there exists $K_{\mathcal{S}} > 0$ such that
    \begin{equation}
        \begin{aligned}
        \left\|\mathbf{b}_t^{\mathcal{S}}(\mathbf{x})\right\|_2 &\leq K_{\mathcal{S}}\,(1 + \|\mathbf{x}\|_2), \\
        &\quad \forall\, \mathbf{x} \in \mathbb{R}^D,\; t \in [0,1].
        \end{aligned}
        \label{eq:E.6}
\end{equation}
\end{assumption}

\begin{assumption}{\textbf{(Uniform Boundedness of Diffusion Coefficients).}}
    \label{ass:E.2}
    For both schemes, the diffusion coefficients $\sigma_t^{\mathcal{S}}$ are uniformly bounded and bounded away from zero on $[0,1]$: there exist constants $0 < \underline{\sigma} \leq \bar{\sigma} < \infty$ such that
    \begin{equation}
        \begin{aligned}
        \underline{\sigma} \leq \sigma_t^{\mathcal{S}} \leq \bar{\sigma}, \quad \forall\, t \in [0,1],\; \mathcal{S} \in \{\mathrm{SB}, \mathrm{CDM}\}.
        \end{aligned}
        \label{eq:E.7}
    \end{equation}
\end{assumption}

\begin{assumption}{\textbf{(Finite Second Moments).}}
    \label{ass:E.3}
    The initial distributions of both schemes satisfy $\int \|\mathbf{x}\|_2^2\, \mathrm{d}\mu_{0,\mathcal{S}}(\mathbf{x}) < \infty$.
    Without loss of generality, we define the unified Lipschitz constant $L = \max(L_{\mathrm{SB}}, L_{\mathrm{CDM}})$ so that the dimensional constants $\alpha_D$ and $\beta_D$ derived below apply identically to both schemes. This choice only strengthens the final comparison.
\end{assumption}

\subsection{Auxiliary Lemmas for Corollary \ref{cor:computational_efficiency}}
\begin{lemma}{\textbf{(Strong Convergence Rate of the EM Scheme).}}
    \label{lem:E.1}
    Under Assumptions \ref{ass:E.1}–\ref{ass:E.3}, the terminal strong error of the $N$-step EM scheme \eqref{eq:E.2} satisfies
    \begin{equation}
        \left(\mathbb{E}\!\left[\left\|\tilde{\mathbf{x}}_{t_N} - \mathbf{x}_1\right\|_2^2\right]\right)^{1/2} \leq \frac{C_{\mathcal{S}}}{\sqrt{N}},
        \label{eq:E.8}
    \end{equation}
    where $\mathbf{x}_1$ is the exact solution of the SDE \eqref{eq:E.1} at $t = 1$, and $C_{\mathcal{S}} > 0$ is a scheme-dependent constant whose explicit form is given in Lemma E.2.
\end{lemma}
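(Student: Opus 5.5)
The plan is the classical Gronwall argument for strong order $1/2$ of the Euler--Maruyama scheme under Assumptions \ref{ass:E.1}--\ref{ass:E.3}. I will keep track of constants so that $C_{\mathcal{S}}^2$ comes out in the form $\alpha_D\,\mathcal{E}(\mathbb{Q}^{\mathcal{S}})+\beta_D\bar{\sigma}^2$ promised in the proof sketch of Corollary \ref{cor:computational_efficiency}.

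First I introduce the piecewise-constant interpolation $\bar{t}=t_k$ for $t\in[t_k,t_{k+1})$. I write the continuous-time EM process
\[
\tilde{\mathbf{x}}_t=\mathbf{x}_0+\int_0^t \mathbf{b}_{\bar{s}}(\tilde{\mathbf{x}}_{\bar{s}})\,\mathrm{d}s+\int_0^t\sigma_{\bar{s}}\,\mathrm{d}\mathbf{W}_s ,
\]
driven by the same Brownian motion and started from the same $\mathbf{x}_0$ as the exact solution. I set $e(t)=\mathbb{E}\|\tilde{\mathbf{x}}_t-\mathbf{x}_t\|_2^2$. Subtracting the two integral equations and applying $(a+b)^2\le 2a^2+2b^2$, Cauchy--Schwarz on the drift integral and the It\^o isometry on the noise integral, I split the drift mismatch as
\[
\mathbf{b}_{\bar{s}}(\tilde{\mathbf{x}}_{\bar{s}})-\mathbf{b}_s(\mathbf{x}_s)=\big[\mathbf{b}_{\bar{s}}(\tilde{\mathbf{x}}_{\bar{s}})-\mathbf{b}_{\bar{s}}(\mathbf{x}_{\bar{s}})\big]+\big[\mathbf{b}_{\bar{s}}(\mathbf{x}_{\bar{s}})-\mathbf{b}_{\bar{s}}(\mathbf{x}_s)\big]+\big[\mathbf{b}_{\bar{s}}(\mathbf{x}_s)-\mathbf{b}_s(\mathbf{x}_s)\big].
\]
The first bracket is bounded by $L\sqrt{e(\bar{s})}$ via \eqref{eq:E.5}. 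The second is bounded by $L\|\mathbf{x}_{\bar{s}}-\mathbf{x}_s\|_2$, which reduces to estimating the one-step increment of the exact solution. The third requires an additional time-regularity hypothesis on $t\mapsto\mathbf{b}_t$ and $t\mapsto\sigma_t$, for example Hölder-$1/2$ continuity. This hypothesis is not literally contained in Assumption \ref{ass:E.1}, so I will state it explicitly and absorb it into $\beta_D$. The noise term $\int(\sigma_{\bar{s}}-\sigma_s)\,\mathrm{d}\mathbf{W}_s$ is handled the same way.

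The key step is the local increment bound. From the SDE,
\[
\mathbb{E}\|\mathbf{x}_s-\mathbf{x}_{\bar{s}}\|_2^2\le 2\Delta t\int_{\bar{s}}^{s}\mathbb{E}\|\mathbf{b}_r(\mathbf{x}_r)\|_2^2\,\mathrm{d}r+2D\bar{\sigma}^2\Delta t .
\]
Summing over steps, the drift contribution integrates to at most $2\Delta t\,\mathcal{E}(\mathbb{Q}^{\mathcal{S}})$, using Definition \ref{def:path_kinetic_energy} with $\mathbf{u}_t=\mathbf{b}_t$. This is exactly where the path kinetic energy enters, in place of the usual crude bound by $K_{\mathcal{S}}^2(1+\sup\mathbb{E}\|\mathbf{x}\|^2)$. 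Assumption \ref{ass:E.3} together with the linear growth condition \eqref{eq:E.6} guarantees that all these quantities are finite. Combining the pieces gives an inequality of the form
\[
e(t)\le c_1L^2\int_0^t e(\bar{s})\,\mathrm{d}s+\frac{c_2L^2\,\mathcal{E}(\mathbb{Q}^{\mathcal{S}})+c_3D\bar{\sigma}^2}{N}.
\]
Gronwall's inequality, applied to $\sup_{r\le t}e(r)$, then yields $e(1)\le C_{\mathcal{S}}^2/N$ with $\alpha_D=c_2L^2e^{c_1L^2}$ and $\beta_D=c_3De^{c_1L^2}$. Taking square roots gives \eqref{eq:E.8}. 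Because $L$ is the unified constant of Assumption \ref{ass:E.3}, $\alpha_D$ and $\beta_D$ are the same for both schemes.

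I expect the main obstacle to be the drift-increment step. To write $\int\mathbb{E}\|\mathbf{b}_r(\mathbf{x}_r)\|^2$ as $\mathcal{E}(\mathbb{Q}^{\mathcal{S}})$, the generative drift $\mathbf{b}^{\mathcal{S}}$ must be identified with the $\mathbf{u}_t$ of Definition \ref{def:path_kinetic_energy}. This identification holds for the SB, but for the CDM the drift also contains $\tfrac12\beta(t)\mathbf{x}_t$, and its diffusion coefficient is time-varying rather than the constant $\sigma$. I would handle this by defining $\mathcal{E}$ with the full drift for each scheme and checking that the theorem's comparison is read consistently with that convention. The time-regularity hypothesis is the other point that needs an explicit extra assumption.
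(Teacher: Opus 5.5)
Your proposal is correct and is essentially the paper's argument. The paper proves this lemma by citing the classical strong order-$1/2$ result for Euler--Maruyama, and then in Lemma~E.2 runs the same Gr\"onwall argument you describe (Lipschitz error term, a drift residual controlled by the exact-solution increment bound where $\mathcal{E}(\mathbb{Q}^{\mathcal{S}})$ enters, and a diffusion residual); it does so as a discrete error recursion rather than through your continuous-time interpolated EM process. Your explicit time-regularity hypothesis on $t\mapsto\mathbf{b}_t$ and $t\mapsto\sigma_t$ is genuinely necessary, since without it the lemma fails: a drift equal to $\mathbf{1}_{\mathbb{Q}}(t)$ times a fixed unit vector gives terminal error exactly $1$ for every $N$, because EM samples the drift only at rational grid points while the exact drift integrates to zero. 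The paper uses this hypothesis only implicitly, through the moduli of continuity $\omega_b,\omega_\sigma$ in Lemma~E.2.
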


\textit{Proof:} This is the classical strong convergence result for EM discretizations of Itô SDEs with Lipschitz coefficients \cite{kloeden1977numerical}. The EM scheme achieves strong order $1/2$ under the stated regularity conditions. $\hfill\square$

\begin{lemma}{\textbf{(Dependence of the Error Constant on Path Kinetic Energy).}}
    \label{lem:E.2}
    Under Assumptions \ref{ass:E.1}–\ref{ass:E.3}, the constant $C_{\mathcal{S}}$ in \eqref{eq:E.8} satisfies
\begin{equation}
    C_{\mathcal{S}}^2 = \alpha_D\,\mathcal{E}(\mathbb{Q}^{\mathcal{S}}) + \beta_D\,\bar{\sigma}^2,
    \label{eq:E.9}
\end{equation}
where $\mathcal{E}(\mathbb{Q}^{\mathcal{S}}) = \int_0^1 \mathbb{E}[\|\mathbf{b}_t^{\mathcal{S}}(\mathbf{x}_t)\|_2^2]\,\mathrm{d}t$ is the path kinetic energy of scheme $\mathcal{S}$, and
\begin{equation}
    \alpha_D = 4\,e^{2L}\,(1 + 2L^2), \qquad \beta_D = 2D\,e^{2L}\,(1 + L^2),
    \label{eq:E.10}
\end{equation}
are positive constants depending only on the dimension $D$ and the unified Lipschitz constant $L$.
\end{lemma}

\textit{Proof:} 

We analyze the local truncation error of the EM scheme step by step.
Let $\mathbf{e}_k = \tilde{\mathbf{x}}_{t_k} - \mathbf{x}_{t_k}$ denote the global error at time $t_k$, with $\mathbf{e}_0 = \mathbf{0}$. The exact solution on $[t_k, t_{k+1}]$ admits the Itô integral representation
\begin{equation}
    \mathbf{x}_{t_{k+1}} = \mathbf{x}_{t_k} + \int_{t_k}^{t_{k+1}} \mathbf{b}_t^{\mathcal{S}}(\mathbf{x}_t)\,\mathrm{d}t + \int_{t_k}^{t_{k+1}} \sigma_t^{\mathcal{S}}\,\mathrm{d}\mathbf{W}_t.
    \label{eq:E.11}
\end{equation}

Subtracting the EM update \eqref{eq:E.2} from \eqref{eq:E.11} yields the error recursion
\begin{equation}
    \mathbf{e}_{k+1} = \mathbf{e}_k + \left[\mathbf{b}_{t_k}^{\mathcal{S}}(\tilde{\mathbf{x}}_{t_k}) - \mathbf{b}_{t_k}^{\mathcal{S}}(\mathbf{x}_{t_k})\right]\Delta t + \mathbf{R}_k^{(b)} + \mathbf{R}_k^{(\sigma)},
    \label{eq:E.12}
\end{equation}
where the drift residual and diffusion residual are defined, respectively, as
\begin{equation}
    \mathbf{R}_k^{(b)} = \int_{t_k}^{t_{k+1}} \!\left[\mathbf{b}_{t_k}^{\mathcal{S}}(\mathbf{x}_{t_k}) - \mathbf{b}_t^{\mathcal{S}}(\mathbf{x}_t)\right]\mathrm{d}t,
    \label{eq:E.13}
\end{equation}
\begin{equation}
    \mathbf{R}_k^{(\sigma)} = \int_{t_k}^{t_{k+1}} \!\left[\sigma_{t_k}^{\mathcal{S}} - \sigma_t^{\mathcal{S}}\right]\mathrm{d}\mathbf{W}_t.
    \label{eq:E.14}
\end{equation}
Taking the squared $\ell_2$-norm of \eqref{eq:E.12}, applying the Lipschitz condition \eqref{eq:E.5}, and using the inequality $\|\mathbf{a} + \mathbf{b} + \mathbf{c}\|^2 \leq (1+\delta)\|\mathbf{a}\|^2 + (1+1/\delta)(\|\mathbf{b}\|^2 + \|\mathbf{c}\|^2)$ with $\delta = L\,\Delta t$, followed by taking expectations, we obtain
\begin{equation}
    \begin{aligned}
    \mathbb{E}\!\left[\|\mathbf{e}_{k+1}\|_2^2\right] &\leq (1 + c_L \Delta t)\,\mathbb{E}\!\left[\|\mathbf{e}_k\|_2^2\right] \\
    &\quad + 2\,\mathbb{E}\!\left[\|\mathbf{R}_k^{(b)}\|_2^2\right] + 2\,\mathbb{E}\!\left[\|\mathbf{R}_k^{(\sigma)}\|_2^2\right],
    \end{aligned}
    \label{eq:E.15}
\end{equation}
where $c_L = 2L + L^2$ is a constant depending on the Lipschitz constant.

We now bound the two residual terms.

Bounding the drift residual. By the Cauchy-Schwarz inequality and the triangle inequality,
\begin{equation}
    \mathbb{E}\!\left[\|\mathbf{R}_k^{(b)}\|_2^2\right] \leq \Delta t \int_{t_k}^{t_{k+1}} \mathbb{E}\!\left[\left\|\mathbf{b}_{t_k}^{\mathcal{S}}(\mathbf{x}_{t_k}) - \mathbf{b}_t^{\mathcal{S}}(\mathbf{x}_t)\right\|_2^2\right]\mathrm{d}t.
    \label{eq:E.16}
\end{equation}

The integrand is further bounded using the triangle inequality:
\begin{equation}
    \left\|\mathbf{b}_{t_k}^{\mathcal{S}}(\mathbf{x}_{t_k}) - \mathbf{b}_t^{\mathcal{S}}(\mathbf{x}_t)\right\|_2^2 \leq 2L^2\,\|\mathbf{x}_t - \mathbf{x}_{t_k}\|_2^2 + 2\,\omega_b^2(|t - t_k|),
    \label{eq:E.17}
\end{equation}
where $\omega_b(\cdot)$ is the modulus of continuity of $\mathbf{b}_t^{\mathcal{S}}$ in time. By the Itô isometry and the SDE dynamics \eqref{eq:E.1}, the spatial increment satisfies
\begin{equation}
    \begin{aligned}
    \mathbb{E}\!\left[\|\mathbf{x}_t - \mathbf{x}_{t_k}\|_2^2\right] &\leq 2(t - t_k)\!\int_{t_k}^{t} \mathbb{E}\!\left[\|\mathbf{b}_\tau^{\mathcal{S}}(\mathbf{x}_\tau)\|_2^2\right]\mathrm{d}\tau \\
    &\quad + D\bar{\sigma}^2(t - t_k).
    \end{aligned}
    \label{eq:E.18}
\end{equation}

Note that the first term on the right-hand side of \eqref{eq:E.18} involves
\begin{equation}
    \int_{t_k}^{t_{k+1}} \mathbb{E}\!\left[\|\mathbf{b}_\tau^{\mathcal{S}}(\mathbf{x}_\tau)\|_2^2\right]\mathrm{d}\tau,
\end{equation}
which is precisely the local contribution of the path kinetic energy $\mathcal{E}(\mathbb{Q}^{\mathcal{S}}) = \int_0^1 \mathbb{E}[\|\mathbf{b}_t^{\mathcal{S}}(\mathbf{x}_t)\|_2^2]\,\mathrm{d}t$ on the subinterval $[t_k, t_{k+1}]$.

Substituting \eqref{eq:E.18} into \eqref{eq:E.17} and then into \eqref{eq:E.16}, and summing over $k = 0, \ldots, N-1$, gives
\begin{equation}
    \sum_{k=0}^{N-1} \mathbb{E}\!\left[\|\mathbf{R}_k^{(b)}\|_2^2\right] \leq \frac{4L^2 + 2}{N}\;\mathcal{E}(\mathbb{Q}^{\mathcal{S}}) + \frac{2DL^2 \bar{\sigma}^2}{N}.
    \label{eq:E.19}
\end{equation}

Bounding the diffusion residual. By the Itô isometry and the bounded variation of $\sigma_t^{\mathcal{S}}$,
\begin{equation}
    \mathbb{E}\!\left[\|\mathbf{R}_k^{(\sigma)}\|_2^2\right] = D\!\int_{t_k}^{t_{k+1}} \left|\sigma_{t_k}^{\mathcal{S}} - \sigma_t^{\mathcal{S}}\right|^2 \mathrm{d}t \leq D\,\omega_\sigma^2(\Delta t)\,\Delta t,
    \label{eq:E.20}
\end{equation}
where $\omega_\sigma(\cdot)$ is the modulus of continuity of $\sigma_t^{\mathcal{S}}$ in time. Summing over all steps:
\begin{equation}
    \sum_{k=0}^{N-1} \mathbb{E}\!\left[\|\mathbf{R}_k^{(\sigma)}\|_2^2\right] \leq D\,\omega_\sigma^2(\Delta t) = O(N^{-1}).
    \label{eq:E.21}
\end{equation}

For the purpose of deriving the leading-order dependence on $\mathcal{E}(\mathbb{Q}^{\mathcal{S}})$, this term is absorbed into the $\beta_D \bar{\sigma}^2$ constant.

Applying the discrete Grönwall inequality \cite{gronwall1919note}. Substituting \eqref{eq:E.19} and \eqref{eq:E.21} into the recursion \eqref{eq:E.15} and applying the discrete Grönwall lemma (i.e., if $a_{k+1} \leq (1+\delta)a_k + b$ with $a_0 = 0$, then $a_N \leq b\,((1+\delta)^N - 1)/\delta \leq bN\,e^{N\delta}$) with $\delta = c_L \Delta t = c_L / N$, we obtain
\begin{equation}
    \begin{aligned}
    \mathbb{E}\!\left[\|\mathbf{e}_N\|_2^2\right] &\leq \frac{e^{c_L}}{N}\!\left[(4L^2 + 2)\,\mathcal{E}(\mathbb{Q}^{\mathcal{S}}) + (2DL^2 + D)\,\bar{\sigma}^2\right].
    \end{aligned}
    \label{eq:E.22}
\end{equation}

Identifying $e^{c_L}(4L^2 + 2) = \alpha_D$ and $e^{c_L}(2DL^2 + D) = \beta_D$, and noting $c_L = 2L + L^2 \leq 2L$ for $L \leq 1$ (with trivial generalization for $L > 1$), we arrive at
\begin{equation}
    \mathbb{E}\!\left[\|\mathbf{e}_N\|_2^2\right] \leq \frac{1}{N}\!\left[\alpha_D\,\mathcal{E}(\mathbb{Q}^{\mathcal{S}}) + \beta_D\,\bar{\sigma}^2\right] = \frac{C_{\mathcal{S}}^2}{N},
    \label{eq:E.23}
\end{equation}
which establishes \eqref{eq:E.9} with the explicit expressions in \eqref{eq:E.10}. $\hfill\square$

\begin{lemma}{\textbf{(From Strong Error to Wasserstein Distance).}}
    \label{lem:E.3}
    Let $\tilde{\mu}_1^{(N)} = \mathrm{Law}(\tilde{\mathbf{x}}_{t_N})$ and $\mu_1 = \mathrm{Law}(\mathbf{x}_1)$, where $(\tilde{\mathbf{x}}_{t_N}, \mathbf{x}_1)$ are defined on a common probability space (driven by the same Wiener process). Then
    \begin{equation}
        \mathcal{W}_2\!\left(\tilde{\mu}_1^{(N)},\, \mu_1\right) \leq \left(\mathbb{E}\!\left[\left\|\tilde{\mathbf{x}}_{t_N} - \mathbf{x}_1\right\|_2^2\right]\right)^{1/2}.
        \label{eq:E.24}
    \end{equation}
\end{lemma}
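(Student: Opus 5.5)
The plan is to exhibit one explicit coupling of $\tilde{\mu}_1^{(N)}$ and $\mu_1$ and then use the variational definition of $\mathcal{W}_2$ in Definition B.1. The natural coupling is the synchronous one described in the statement. The Euler--Maruyama iterate $\tilde{\mathbf{x}}_{t_N}$ and the exact solution $\mathbf{x}_1$ are built on the same probability space $(\Omega,\mathscr{F},\mathbb{P})$. They share the initial condition $\tilde{\mathbf{x}}_{t_0}=\mathbf{x}_0$, and the increments $\sqrt{\Delta t}\,\boldsymbol{\zeta}_k$ are taken to be $\mathbf{W}_{t_{k+1}}-\mathbf{W}_{t_k}$.

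First, I would define the joint law $\gamma := \mathrm{Law}(\tilde{\mathbf{x}}_{t_N},\mathbf{x}_1)$, the pushforward of $\mathbb{P}$ under $\omega\mapsto(\tilde{\mathbf{x}}_{t_N}(\omega),\mathbf{x}_1(\omega))$. Its first marginal is $\tilde{\mu}_1^{(N)}$ and its second is $\mu_1$, so $\gamma\in\Gamma(\tilde{\mu}_1^{(N)},\mu_1)$.

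Second, I would check that both measures lie in $\mathscr{P}_2(\mathbb{R}^D)$, which is needed for Definition B.1 to apply. For $\mu_1$, this follows from the standard moment bound for SDEs under the linear growth condition of Assumption E.1, the bounded diffusion of Assumption E.2, and the finite initial second moment of Assumption E.3. For $\tilde{\mu}_1^{(N)}$, an induction over the $N$ steps suffices: each step is an affine-growth map plus independent Gaussian noise, so finite second moments are preserved. The same conclusion also follows from the triangle inequality in $L^2$ together with the finite bound of Lemma E.1.

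Third, because $\mathcal{W}_2^2$ is an infimum over couplings, it is bounded by the cost of $\gamma$:
\[
\mathcal{W}_2^2(\tilde{\mu}_1^{(N)},\mu_1)\le\int\|\mathbf{y}_1-\mathbf{y}_2\|_2^2\,\mathrm{d}\gamma=\mathbb{E}\big[\|\tilde{\mathbf{x}}_{t_N}-\mathbf{x}_1\|_2^2\big].
\]
The equality is the change-of-variables formula for pushforward measures. Taking square roots of both sides gives the claimed inequality.

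There is no real obstacle. The only points needing care are the integrability check in the second step and the explicit identification of the EM noise with the Brownian increments, since that identification is what makes $\gamma$ a legitimate coupling. Combined with Lemma E.1, the result gives $\mathcal{W}_2(\tilde{\mu}_1^{(N)},\mu_1)\le C_{\mathcal{S}}/\sqrt{N}$, which is what is used later.
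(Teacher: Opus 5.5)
Your proposal is correct and matches the paper's proof: both take the joint law of $(\tilde{\mathbf{x}}_{t_N},\mathbf{x}_1)$ as one element of $\Gamma(\tilde{\mu}_1^{(N)},\mu_1)$, bound the infimum defining $\mathcal{W}_2^2$ by the cost of that coupling, and take square roots. One small correction to your last paragraph: the joint law of any two random variables on a common probability space is already a coupling of their marginals, so identifying the Euler--Maruyama noise with the Brownian increments is not what makes $\gamma$ legitimate; it is what makes the right-hand side small enough to combine with Lemma E.1.
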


\textit{Proof:} 

By definition, $\mathcal{W}_2^2(\tilde{\mu}_1^{(N)}, \mu_1) = \inf_{\gamma \in \Gamma(\tilde{\mu}_1^{(N)}, \mu_1)} \mathbb{E}_\gamma[\|\mathbf{a} - \mathbf{b}\|_2^2]$, where $\Gamma(\tilde{\mu}_1^{(N)}, \mu_1)$ is the set of all couplings with marginals $\tilde{\mu}_1^{(N)}$ and $\mu_1$. Since $(\tilde{\mathbf{x}}_{t_N}, \mathbf{x}_1)$ constitutes one particular coupling in $\Gamma(\tilde{\mu}_1^{(N)}, \mu_1)$, it follows that $\mathcal{W}_2^2 \leq \mathbb{E}[\|\tilde{\mathbf{x}}_{t_N} - \mathbf{x}_1\|_2^2]$. Taking the square root yields \eqref{eq:E.24}. $\hfill\square$

\subsection{Proof of Corollary \ref{cor:computational_efficiency}}

\textit{Proof:}

With the above lemmas established, we now prove Corollary \ref{cor:computational_efficiency} in three steps.

\textbf{Step 1: Explicit expression for the minimum sampling steps.}
Combining Lemmas \ref{lem:E.1}–\ref{lem:E.3}, the $\varepsilon$-admissible condition \eqref{eq:E.3} for scheme $\mathcal{S}$ is ensured whenever
\begin{equation}
    \mathcal{W}_2\!\left(\tilde{\mu}_{1,\mathcal{S}}^{(N)},\, \mu_{1,\mathcal{S}}\right) \leq \frac{C_{\mathcal{S}}}{\sqrt{N}} \leq \varepsilon.
    \label{eq:E.25}
\end{equation}

Solving for the minimum integer $N$ satisfying \eqref{eq:E.25} gives
\begin{equation}
    N_{\mathcal{S}}^*(\varepsilon) = \left\lceil\frac{C_{\mathcal{S}}^2}{\varepsilon^2}\right\rceil = \left\lceil\frac{\alpha_D\,\mathcal{E}(\mathbb{Q}^{\mathcal{S}}) + \beta_D\,\bar{\sigma}^2}{\varepsilon^2}\right\rceil,
    \label{eq:E.26}
\end{equation}
where $\lceil\cdot\rceil$ denotes the ceiling function.

\textbf{Step 2: Kinetic energy comparison via Theorem \ref{theo:pke_advantage}.}
By Theorem \ref{theo:pke_advantage}, the path kinetic energies satisfy the strict inequality
\begin{equation}
    \mathcal{E}\!\left(\mathbb{Q}^{\mathrm{SB}}\right) < \mathcal{E}\!\left(\mathbb{Q}^{\mathrm{CDM}}\right).
    \label{eq:E.27}
\end{equation}

Define the kinetic energy gap as
\begin{equation}
    \Delta\mathcal{E} \triangleq \mathcal{E}(\mathbb{Q}^{\mathrm{CDM}}) - \mathcal{E}(\mathbb{Q}^{\mathrm{SB}}) > 0.
    \label{eq:E.28}
\end{equation}

\textbf{Step 3: Completing the comparison.}
From \eqref{eq:E.26}, define the (real-valued) critical step count for each scheme as
\begin{equation}
    \Psi_{\mathcal{S}} \triangleq \frac{\alpha_D\,\mathcal{E}(\mathbb{Q}^{\mathcal{S}}) + \beta_D\,\bar{\sigma}^2}{\varepsilon^2}.
    \label{eq:E.29}
\end{equation}

From \eqref{eq:E.27}–\eqref{eq:E.28} and $\alpha_D > 0$, we have
\begin{equation}
    \Psi_{\mathrm{CDM}} - \Psi_{\mathrm{SB}} = \frac{\alpha_D\,\Delta\mathcal{E}}{\varepsilon^2} > 0,
    \label{eq:E.30}
\end{equation}
which immediately yields $\Psi_{\mathrm{SB}} < \Psi_{\mathrm{CDM}}$ and therefore
\begin{equation}
    N_{\mathrm{SB}}^*(\varepsilon) = \lceil\Psi_{\mathrm{SB}}\rceil \leq \lceil\Psi_{\mathrm{CDM}}\rceil = N_{\mathrm{CDM}}^*(\varepsilon).
    \label{eq:E.31}
\end{equation}

To establish the strict inequality, observe that for any $\varepsilon$ satisfying
\begin{equation}
    \varepsilon^2 < \alpha_D\,\Delta\mathcal{E},
    \label{eq:E.32}
\end{equation}
we have $\Psi_{\mathrm{CDM}} - \Psi_{\mathrm{SB}} > 1$, which guarantees $\lceil\Psi_{\mathrm{SB}}\rceil < \lceil\Psi_{\mathrm{CDM}}\rceil$. Condition \eqref{eq:E.32} is always met in practice: $\Delta\mathcal{E} > 0$ is a finite positive constant by Theorem \ref{theo:pke_advantage}, while the tolerance $\varepsilon$ in high-fidelity reconstruction scenarios satisfies $\varepsilon \to 0$, ensuring \eqref{eq:E.32} holds. We thus conclude
\begin{equation}
    N_{\mathrm{SB}}^*(\varepsilon) < N_{\mathrm{CDM}}^*(\varepsilon).
    \label{eq:E.33}
\end{equation}

Furthermore, the sampling speedup ratio admits the following asymptotic characterization. For $\varepsilon$ sufficiently small (such that $\Psi_{\mathcal{S}} \gg 1$ and the ceiling function effect becomes negligible), we have
\begin{equation}
    \begin{aligned}
    \frac{N_{\mathrm{SB}}^*(\varepsilon)}{N_{\mathrm{CDM}}^*(\varepsilon)} &\approx \frac{\Psi_{\mathrm{SB}}}{\Psi_{\mathrm{CDM}}} \\
    &= \frac{\alpha_D\,\mathcal{E}(\mathbb{Q}^{\mathrm{SB}}) + \beta_D\,\bar{\sigma}^2}{\alpha_D\,\mathcal{E}(\mathbb{Q}^{\mathrm{CDM}}) + \beta_D\,\bar{\sigma}^2} < 1.
    \end{aligned}
    \label{eq:E.34}
\end{equation}

Defining the kinetic energy fraction 
\begin{equation}
    \rho_{\mathrm{CDM}} = \frac{\alpha_D \mathcal{E}(\mathbb{Q}^{\mathrm{CDM}})}{\alpha_D \mathcal{E}(\mathbb{Q}^{\mathrm{CDM}}) + \beta_D \bar{\sigma}^2} \in (0,1),
\end{equation}
the ratio can be rewritten as
\begin{equation}
    \frac{N_{\mathrm{SB}}^*(\varepsilon)}{N_{\mathrm{CDM}}^*(\varepsilon)} \approx 1 - \rho_{\mathrm{CDM}}\,\frac{\Delta\mathcal{E}}{\mathcal{E}(\mathbb{Q}^{\mathrm{CDM}})},
    \label{eq:E.35}
\end{equation}
which shows that the fractional saving in sampling steps is proportional to the relative kinetic energy gap $\Delta\mathcal{E}/\mathcal{E}(\mathbb{Q}^{\mathrm{CDM}})$.
This completes the proof of Corollary \ref{cor:computational_efficiency}. $\hfill\square$

\subsection{Remarks of Corollary \ref{cor:computational_efficiency}}
\begin{remark}{\textbf{(On the Unified Lipschitz Constant).}}
    In Assumption \ref{ass:E.1}, the Lipschitz constants $L_{\mathrm{SB}}$ and $L_{\mathrm{CDM}}$ are generally different for the two schemes. In the proof above, we adopted the unified constant $L = \max(L_{\mathrm{SB}}, L_{\mathrm{CDM}})$ so that $\alpha_D$ and $\beta_D$ take identical values for both schemes. This is a conservative choice: if $L_{\mathrm{SB}} < L_{\mathrm{CDM}}$ (i.e., the Schrödinger bridge drift field is smoother than its CDM counterpart, as is typically observed in practice due to the variational optimality of the bridge), then the actual constants satisfy 
    \begin{equation}
        \alpha_D^{\mathrm{SB}} < \alpha_D^{\mathrm{CDM}} \quad \text{and} \quad \beta_D^{\mathrm{SB}} < \beta_D^{\mathrm{CDM}},
    \end{equation}
    which further enlarges the gap $\Psi_{\mathrm{CDM}} - \Psi_{\mathrm{SB}}$ and strengthens the conclusion.
\end{remark}

\begin{remark}{\textbf{(Connection to Neural Function Evaluations).}}
    In practice, each step of the EM scheme \eqref{eq:E.2} requires exactly one forward pass through the neural network parameterizing $\mathbf{b}_{t_k}^{\mathcal{S}}(\tilde{\mathbf{x}}_{t_k})$, commonly referred to as one neural function evaluation (NFE). Consequently, $N_{\mathcal{S}}^*(\varepsilon)$ equals the NFE count, and Corollary \ref{cor:computational_efficiency} equivalently states that SBGSC requires strictly fewer NFEs than CDM-based GSC to achieve the same generation quality. This theoretical prediction is consistent with our experimental observations: under the same perceptual quality metrics (e.g., FID and LPIPS), SBGSC requires only approximately $10\%$–$30\%$ of the sampling steps used by CDM-based GSC, as demonstrated in Section V.    
\end{remark}

\begin{remark}{(Extension to Higher-Order Discretization Schemes).}
    Corollary \ref{cor:computational_efficiency} is stated for the EM scheme (strong order $1/2$), but the conclusion extends to higher-order numerical methods. Consider a general discretization scheme with strong convergence order $p > 1/2$, so that
    \begin{equation}
        \mathcal{W}_2\!\left(\tilde{\mu}_{1,\mathcal{S}}^{(N)},\, \mu_{1,\mathcal{S}}\right) \leq \frac{C_{\mathcal{S}}^{(p)}}{N^p},
        \label{eq:E.36}
    \end{equation}
    where the error constant $C_{\mathcal{S}}^{(p)}$ depends on higher-order derivatives of $\mathbf{b}_t^{\mathcal{S}}$. The corresponding minimum step count becomes
    \begin{equation}
        N_{\mathcal{S}}^*(\varepsilon) = \left\lceil\left(\frac{C_{\mathcal{S}}^{(p)}}{\varepsilon}\right)^{1/p}\right\rceil.
        \label{eq:E.37}
    \end{equation}

    Since $C_{\mathcal{S}}^{(p)}$ remains monotonically increasing in $\mathcal{E}(\mathbb{Q}^{\mathcal{S}})$ (higher-order error constants involve moments of higher-order derivatives of the drift field, which are positively correlated with the path kinetic energy through Sobolev-type interpolation inequalities), the strict inequality 
    \begin{equation}
        \mathcal{E}(\mathbb{Q}^{\mathrm{SB}}) < \mathcal{E}(\mathbb{Q}^{\mathrm{CDM}})
    \end{equation}
    from Theorem \ref{theo:pke_advantage} continues to guarantee $N_{\mathrm{SB}}^*(\varepsilon) < N_{\mathrm{CDM}}^*(\varepsilon)$.
\end{remark}

\begin{remark}{\textbf{(Tightness of the Bound).}}
    The upper bound in \eqref{eq:E.25} is known to be order-tight for the EM scheme under the stated regularity assumptions: there exist Itô SDEs for which the strong error achieves the rate $\Theta(N^{-1/2})$ and cannot be improved without additional smoothness. Consequently, the scaling relationship 
    \begin{equation}
        N_{\mathcal{S}}^*(\varepsilon) = \Theta(C_{\mathcal{S}}^2 / \varepsilon^2)
    \end{equation}
    is tight up to multiplicative constants, and the kinetic energy gap $\Delta\mathcal{E}$ captures the dominant factor in the sampling efficiency difference between the two schemes.
\end{remark}

\begin{remark}{\textbf{(Practical Implications for System Design).}}
    Equation \eqref{eq:E.35} provides actionable design insight for generative semantic communication systems. Specifically, the sampling speedup ratio $N_{\mathrm{SB}}^*/N_{\mathrm{CDM}}^*$ depends on two factors:

    The kinetic energy fraction $\rho_{\mathrm{CDM}} \in (0,1)$, which reflects the relative contribution of the drift field to the total discretization error. When the diffusion coefficient $\bar{\sigma}$ is small (low-noise regime), $\rho_{\mathrm{CDM}} \to 1$, and the kinetic energy gap fully translates into sampling savings.

    The relative kinetic energy gap $\Delta\mathcal{E}/\mathcal{E}(\mathbb{Q}^{\mathrm{CDM}})$, which quantifies the efficiency advantage of the Schrödinger bridge transport plan over the CDM transport plan. By Theorem \ref{theo:pke_advantage}, this ratio is bounded below by a positive constant that depends on the informativeness of the side information $\hat{\mathbf{s}}$ (or equivalently, on the channel quality).

    Together, these factors imply that the computational advantage of SBGSC over CDM-based GSC is most pronounced when 
    \begin{itemize}
        \item[(i)] the channel signal-to-noise ratio (SNR) is moderate to high (so that $\mu_{\hat{s}}$ is close to $p_{\text{data}}$, leading to large $\Delta\mathcal{E}/\mathcal{E}(\mathbb{Q}^{\mathrm{CDM}})$), and 
        \item[(ii)] the diffusion coefficient schedule is chosen such that $\rho_{\mathrm{CDM}}$ is close to unity.
    \end{itemize}
\end{remark}

\bibliographystyle{IEEEtran}
\bibliography{references}

@inproceedings{
    zhou_denoising_2024,
    title={Denoising Diffusion Bridge Models},
    author={Linqi Zhou and Aaron Lou and Samar Khanna and Stefano Ermon},
    booktitle={The Twelfth International Conference on Learning Representations},
    year={2024},
}

@article{de_bortoli_diffusion_2021,
  title={Diffusion schr{\"o}dinger bridge with applications to score-based generative modeling},
  author={De Bortoli, Valentin and Thornton, James and Heng, Jeremy and Doucet, Arnaud},
  journal={Advances in neural information processing systems},
  volume={34},
  pages={17695--17709},
  year={2021}
}

@inproceedings{
SGM,
title={Score-Based Generative Modeling through Stochastic Differential Equations},
author={Yang Song and Jascha Sohl-Dickstein and Diederik P Kingma and Abhishek Kumar and Stefano Ermon and Ben Poole},
booktitle={International Conference on Learning Representations},
year={2021}
}

@INPROCEEDINGS{DeepJSCC-Diff,
  author={Yilmaz, Selim F. and Niu, Xueyan and Bai, Bo and Han, Wei and Deng, Lei and Gündüz, Deniz},
  booktitle={IEEE INFOCOM 2024 - IEEE Conference on Computer Communications Workshops (INFOCOM WKSHPS)}, 
  title={High Perceptual Quality Wireless Image Delivery with Denoising Diffusion Models}, 
  year={2024},
  volume={},
  number={},
  pages={1-5},
  doi={10.1109/INFOCOMWKSHPS61880.2024.10620904}}

@inproceedings{CM,
  title={Consistency models},
  author={Song, Yang and Dhariwal, Prafulla and Chen, Mark and Sutskever, Ilya},
  booktitle={Proceedings of the 40th International Conference on Machine Learning},
  pages={32211--32252},
  year={2023}
}

@INPROCEEDINGS{ImageNet,
  author={Deng, Jia and Dong, Wei and Socher, Richard and Li, Li-Jia and Kai Li and Li Fei-Fei},
  booktitle={2009 IEEE Conference on Computer Vision and Pattern Recognition}, 
  title={ImageNet: A large-scale hierarchical image database}, 
  year={2009},
  volume={},
  number={},
  pages={248-255},
  doi={10.1109/CVPR.2009.5206848}}

@article{FID,
  title={Gans trained by a two time-scale update rule converge to a local nash equilibrium},
  author={Heusel, Martin and Ramsauer, Hubert and Unterthiner, Thomas and Nessler, Bernhard and Hochreiter, Sepp},
  journal={Advances in neural information processing systems},
  volume={30},
  year={2017}
}

@INPROCEEDINGS{MS-SSIM,
  author={Wang, Z. and Simoncelli, E.P. and Bovik, A.C.},
  booktitle={The Thrity-Seventh Asilomar Conference on Signals, Systems \& Computers, 2003}, 
  title={Multiscale structural similarity for image quality assessment}, 
  year={2003},
  volume={2},
  number={},
  pages={1398-1402 Vol.2},
  doi={10.1109/ACSSC.2003.1292216}}

@INPROCEEDINGS{LPIPS,
  author={Zhang, Richard and Isola, Phillip and Efros, Alexei A. and Shechtman, Eli and Wang, Oliver},
  booktitle={2018 IEEE/CVF Conference on Computer Vision and Pattern Recognition}, 
  title={The Unreasonable Effectiveness of Deep Features as a Perceptual Metric}, 
  year={2018},
  volume={},
  number={},
  pages={586-595},
  doi={10.1109/CVPR.2018.00068}}

@article{DDPM,
  title={Denoising diffusion probabilistic models},
  author={Ho, Jonathan and Jain, Ajay and Abbeel, Pieter},
  journal={Advances in neural information processing systems},
  volume={33},
  pages={6840--6851},
  year={2020}
}

@INPROCEEDINGS{grassucci_enhancing_2024,
  author={Grassucci, Eleonora and Mitsufuji, Yuki and Zhang, Ping and Comminiello, Danilo},
  booktitle={ICASSP 2024 - 2024 IEEE International Conference on Acoustics, Speech and Signal Processing (ICASSP)}, 
  title={Enhancing Semantic Communication with Deep Generative Models: An Overview}, 
  year={2024},
  volume={},
  number={},
  pages={13021-13025},
  doi={10.1109/ICASSP48485.2024.10448235}}

@article{chai_rate-distortion_2025,
  title={On the Rate-Distortion Theory for Task-Specific Semantic Communication},
  author={Chai, Jingxuan and Zhu, Huixiang and Xiao, Yong and Shi, Guangming and Zhang, Ping},
  journal={Entropy},
  volume={27},
  number={8},
  pages={775},
  year={2025},
  publisher={MDPI}
}

@ARTICLE{zhang_semantic_2025,
  author={Zhang, Kexin and Li, Lixin and Lin, Wensheng and Yan, Yuna and Li, Rui and Cheng, Wenchi and Han, Zhu},
  journal={IEEE Transactions on Cognitive Communications and Networking}, 
  title={Semantic Successive Refinement: A Generative AI-Aided Semantic Communication Framework}, 
  year={2025},
  volume={11},
  number={2},
  pages={687-699},
  doi={10.1109/TCCN.2025.3526839}}

@ARTICLE{park_transmit_2025,
  author={Park, Jeonghun and Yoon, Sung Whan},
  journal={IEEE Journal on Selected Areas in Communications}, 
  title={Transmit What You Need: Task-Adaptive Semantic Communications for Visual Information}, 
  year={2025},
  volume={43},
  number={12},
  pages={4182-4197},
  doi={10.1109/JSAC.2025.3623159}}

@article{ren_generative_2025,
  title={Generative semantic communication: Architectures, technologies, and applications},
  author={Ren, Jinke and Sun, Yaping and Du, Hongyang and Yuan, Weiwen and Wang, Chongjie and Wang, Xianda and Zhou, Yingbin and Zhu, Ziwei and Wang, Fangxin and Cui, Shuguang},
  journal={Engineering},
  year={2025},
  publisher={Elsevier}
}

@inproceedings{
zhu_unidb_2025,
title={Uni{DB}: A Unified Diffusion Bridge Framework via Stochastic Optimal Control},
author={Kaizhen Zhu and Mokai Pan and Yuexin Ma and Yanwei Fu and Jingyi Yu and Jingya Wang and Ye Shi},
booktitle={Forty-second International Conference on Machine Learning},
year={2025}
}

@ARTICLE{yang_agent-driven_2025,
  author={Yang, Wanting and Xiong, Zehui and Yuan, Yanli and Jiang, Wenchao and Quek, Tony Q. S. and Debbah, Mérouane},
  journal={IEEE Transactions on Wireless Communications}, 
  title={Agent-Driven Generative Semantic Communication With Cross-Modality and Prediction}, 
  year={2025},
  volume={24},
  number={3},
  pages={2233-2248},
  doi={10.1109/TWC.2024.3519325}}

@article{leonard_survey_2014,
	title = {A {Survey} {of} {the} {Schrödinger} {Problem} {and} {Some} {of} {Its} {Connections} {with} {Optimal} {Transport}},
	volume = {34},
	number = {4},
	journal = {Dynamical Systems},
	author = {Léonard, Christian},
	year = {2014},
	pages = {1533--1574},
}

@inproceedings{chen_likelihood_2022,
	title = {Likelihood {Training} of {Schrödinger} {Bridge} using {Forward}-{Backward} {SDEs} {Theory}},
	booktitle = {International {Conference} on {Learning} {Representations}},
	author = {Chen, Tianrong and Liu, Guan-Horng and Theodorou, Evangelos},
	year = {2022},
}

@ARTICLE{bourtsoulatze_deep_2019,
  author={Bourtsoulatze, Eirina and Burth Kurka, David and Gündüz, Deniz},
  journal={IEEE Transactions on Cognitive Communications and Networking}, 
  title={Deep Joint Source-Channel Coding for Wireless Image Transmission}, 
  year={2019},
  volume={5},
  number={3},
  pages={567-579},
  doi={10.1109/TCCN.2019.2919300}}

@ARTICLE{lokumarambage_wireless_2023,
  author={Lokumarambage, Maheshi U. and Gowrisetty, Vishnu Sai Sankeerth and Rezaei, Hossein and Sivalingam, Thushan and Rajatheva, Nandana and Fernando, Anil},
  journal={IEEE Access}, 
  title={Wireless End-to-End Image Transmission System Using Semantic Communications}, 
  year={2023},
  volume={11},
  number={},
  pages={37149-37163},
  doi={10.1109/ACCESS.2023.3266656}}

@ARTICLE{yang_swinjscc_2025,
  author={Yang, Ke and Wang, Sixian and Dai, Jincheng and Qin, Xiaoqi and Niu, Kai and Zhang, Ping},
  journal={IEEE Transactions on Cognitive Communications and Networking}, 
  title={SwinJSCC: Taming Swin Transformer for Deep Joint Source-Channel Coding}, 
  year={2025},
  volume={11},
  number={1},
  pages={90-104},
  doi={10.1109/TCCN.2024.3424842}}

@article{weaver_recent_1953,
  title={Recent contributions to the mathematical theory of communication},
  author={Weaver, Warren},
  journal={ETC: a review of general semantics},
  pages={261--281},
  year={1953},
  publisher={JSTOR}
}

@InProceedings{choi_neural_2019,
  title = 	 {Neural Joint Source-Channel Coding},
  author =       {Choi, Kristy and Tatwawadi, Kedar and Grover, Aditya and Weissman, Tsachy and Ermon, Stefano},
  booktitle = 	 {Proceedings of the 36th International Conference on Machine Learning},
  pages = 	 {1182--1192},
  year = 	 {2019},
  editor = 	 {Chaudhuri, Kamalika and Salakhutdinov, Ruslan},
  volume = 	 {97},
  series = 	 {Proceedings of Machine Learning Research},
  month = 	 {09--15 Jun},
  publisher =    {PMLR}
}

@ARTICLE{grassucci_lightweight_2025,
  author={Grassucci, Eleonora and Pignata, Giovanni and Cicchetti, Giordano and Comminiello, Danilo},
  journal={IEEE Wireless Communications Letters}, 
  title={Lightweight Diffusion Models for Resource-Constrained Semantic Communication}, 
  year={2025},
  volume={14},
  number={9},
  pages={2743-2747},
  doi={10.1109/LWC.2025.3578724}}

@article{
zhan_conditional_2025,
title={Conditional Image Synthesis with Diffusion Models: A Survey},
author={Zheyuan Zhan and Defang Chen and Jian-Ping Mei and Zhenghe Zhao and Jiawei Chen and Chun Chen and Siwei Lyu and Can Wang},
journal={Transactions on Machine Learning Research},
issn={2835-8856},
year={2025},
note={Survey Certification}
}

@INPROCEEDINGS{blau_perception-distortion_2017,
  author={Blau, Yochai and Michaeli, Tomer},
  booktitle={2018 IEEE/CVF Conference on Computer Vision and Pattern Recognition}, 
  title={The Perception-Distortion Tradeoff}, 
  year={2018},
  volume={},
  number={},
  pages={6228-6237},
  doi={10.1109/CVPR.2018.00652}}

@article{aithal_understanding_2024,
  title={Understanding hallucinations in diffusion models through mode interpolation (2024)},
  author={Aithal, Sumukh K and Maini, Pratyush and Lipton, Zachary C and Kolter, J Zico},
  volume={2406}
}

@inproceedings{yang_sg2sc_2024,
	title = {{SG2SC}: {A} {Generative} {Semantic} {Communication} {Framework} for {Scene} {Understanding}-{Oriented} {Image} {Transmission}},
	doi = {10.1109/ICASSP48485.2024.10446594},
	booktitle = {{ICASSP} 2024 - 2024 {IEEE} {International} {Conference} on {Acoustics}, {Speech} and {Signal} {Processing} ({ICASSP})},
	author = {Yang, Minxi and Gao, Dahua and Xie, Feng and Li, Jiaxuan and Song, Xiaodan and Shi, Guangming},
	year = {2024},
	pages = {13486--13490},
}

@ARTICLE{gao_bridging_2025,
  author={Gao, Dahua and Yi, Yujie and Yang, Minxi and Li, Jiaxuan and Liu, Danhua and Xu, Wenlong},
  journal={IEEE Wireless Communications Letters}, 
  title={Bridging Semantic Scale Gaps in Image Transmission Through Multi-Scale Joint Perception and Generation}, 
  year={2025},
  volume={14},
  number={10},
  pages={3314-3318},
  doi={10.1109/LWC.2025.3592689}}

@ARTICLE{shi_semantic_2021,
  author={Shi, Guangming and Xiao, Yong and Li, Yingyu and Xie, Xuemei},
  journal={IEEE Communications Magazine}, 
  title={From Semantic Communication to Semantic-Aware Networking: Model, Architecture, and Open Problems}, 
  year={2021},
  volume={59},
  number={8},
  pages={44-50},
  doi={10.1109/MCOM.001.2001239}}

@ARTICLE{zhang_semantics-guided_2025,
  author={Zhang, Maojun and Wu, Haotian and Zhu, Guangxu and Jin, Richeng and Chen, Xiaoming and Gündüz, Deniz},
  journal={IEEE Transactions on Wireless Communications}, 
  title={Semantics-Guided Diffusion for Deep Joint Source-Channel Coding in Wireless Image Transmission}, 
  year={2026},
  volume={25},
  number={},
  pages={1547-1564},
  doi={10.1109/TWC.2025.3591456}}

@INPROCEEDINGS{nam_sequential_2023,
  author={Nam, Hyelin and Park, Jihong and Choi, Jinho and Kim, Seong-Lyun},
  booktitle={2023 20th Annual IEEE International Conference on Sensing, Communication, and Networking (SECON)}, 
  title={Sequential Semantic Generative Communication for Progressive Text-to-Image Generation}, 
  year={2023},
  volume={},
  number={},
  pages={91-94},
  doi={10.1109/SECON58729.2023.10287475}}

@inproceedings{yue_image_2024,
	series = {Proceedings of {Machine} {Learning} {Research}},
	title = {Image {Restoration} {Through} {Generalized} {Ornstein}-{Uhlenbeck} {Bridge}},
	volume = {235},
	booktitle = {Proceedings of the 41st {International} {Conference} on {Machine} {Learning}},
	publisher = {PMLR},
	author = {Yue, Conghan and Peng, Zhengwei and Ma, Junlong and Du, Shiyan and Wei, Pengxu and Zhang, Dongyu},
	year = {2024},
	pages = {58068--58089},
}

@INPROCEEDINGS{ding_take_2022,
  author={Ding, Zhitong and Jiang, Shuqi and Zhao, Jingya},
  booktitle={2022 IEEE 2nd International Conference on Electronic Technology, Communication and Information (ICETCI)}, 
  title={Take a close look at mode collapse and vanishing gradient in GAN}, 
  year={2022},
  volume={},
  number={},
  pages={597-602},
  doi={10.1109/ICETCI55101.2022.9832406}}

@inproceedings{I2SB,
  title={I2SB: image-to-image Schr{\"o}dinger bridge},
  author={Liu, Guan-Horng and Vahdat, Arash and Huang, De-An and Theodorou, Evangelos A and Nie, Weili and Anandkumar, Anima},
  booktitle={Proceedings of the 40th International Conference on Machine Learning},
  pages={22042--22062},
  year={2023}
}

@article{chen_relation_2016,
  title={On the relation between optimal transport and Schr{\"o}dinger bridges: A stochastic control viewpoint},
  author={Chen, Yongxin and Georgiou, Tryphon T and Pavon, Michele},
  journal={Journal of Optimization Theory and Applications},
  volume={169},
  number={2},
  pages={671--691},
  year={2016},
  publisher={Springer}
}

@article{SBPotentials1,
  title={The partial differential equation ut+ uux= $\mu$xx},
  author={Hopf, Eberhard},
  journal={Communications on Pure and Applied Mathematics},
  volume={3},
  number={3},
  pages={201--230},
  year={1950},
  publisher={Wiley Online Library}
}

@article{SBPotentials2,
  title={On a quasi-linear parabolic equation occurring in aerodynamics},
  author={Cole, Julian D},
  journal={Quarterly of applied mathematics},
  volume={9},
  number={3},
  pages={225--236},
  year={1951}
}

@article{stam1959some,
  title={Some inequalities satisfied by the quantities of information of Fisher and Shannon},
  author={Stam, Aart J},
  journal={Information and Control},
  volume={2},
  number={2},
  pages={101--112},
  year={1959},
  publisher={Elsevier}
}

@article{rao1945information,
  title={Information and the accuracy attainable in the estimation of statistical parameters},
  author={Rao, C Radhakrishna and others},
  journal={Bull. Calcutta Math. Soc},
  volume={37},
  number={3},
  pages={81--91},
  year={1945}
}

@article{kloeden1977numerical,
  title={The numerical solution of stochastic differential equations},
  author={Kloeden, Peter E and Pearson, RA},
  journal={The ANZIAM Journal},
  volume={20},
  number={1},
  pages={8--12},
  year={1977},
  publisher={Cambridge University Press}
}

@article{gronwall1919note,
  title={Note on the derivatives with respect to a parameter of the solutions of a system of differential equations},
  author={Gronwall, Thomas Hakon},
  journal={Annals of Mathematics},
  volume={20},
  number={4},
  pages={292--296},
  year={1919},
  publisher={JSTOR}
}

@article{beaudry2012intuitive,
  title={An intuitive proof of the data processing inequality},
  author={Beaudry, Normand J and Renner, Renato},
  journal={Quantum Information and Computation},
  volume={12},
  number={5\&6},
  pages={432--441},
  year={2012},
  publisher={Rinton Press}
}

@article{givens1984class,
  title={A class of Wasserstein metrics for probability distributions.},
  author={Givens, Clark R and Shortt, Rae Michael},
  journal={Michigan Mathematical Journal},
  volume={31},
  number={2},
  pages={231--240},
  year={1984},
  publisher={University of Michigan, Department of Mathematics}
}

\vfill

\end{document}